\documentclass[review]{elsarticle}

\usepackage{tikz}
\usepackage{xspace}
\usepackage{adjustbox}
\usepackage{amssymb}
\usepackage{amsthm}
\usepackage{amsmath}
\usepackage{complexity}
\usepackage{mathrsfs}
\usepackage{graphicx}
\usepackage{caption}
\usepackage[skip = 0cm, list = true]{subcaption}
\usepackage{algorithm}
\usepackage{algpseudocode}
\usepackage{enumitem}
\usepackage{tabularx}
\usepackage{environ}
\usepackage{multirow}
\usepackage{multicol}
\usepackage{xcolor}
\usepackage{pgfplots}
\usepackage{listings}
\usepackage{lineno,hyperref}
\modulolinenumbers[5]
\usetikzlibrary{
    backgrounds, shapes, arrows, decorations.pathmorphing, decorations.pathreplacing
}
\DeclareMathAlphabet{\mathpzc}{OT1}{pzc}{m}{it}
\newtheorem{theorem}{Theorem}

\newtheorem{lemma}[theorem]{Lemma}
\newtheorem{corollary}[theorem]{Corollary}
\newtheorem{proposition}[theorem]{Proposition}

\newtheorem{problem*}[theorem]{Problem}
\newtheorem{claim}[theorem]{Claim}

\captionsetup[subfigure]{subrefformat=simple,labelformat=simple}

\makeatletter

\newcommand{\problemtitle}[1]{\gdef\@problemtitle{#1}}% Store problem title
\newcommand{\probleminstance}[1]{\gdef\@probleminstance{#1}}% Store problem input
\newcommand{\problemquestion}[1]{\gdef\@problemquestion{#1}}% Store problem question
\NewEnviron{problem}{
    \problemtitle{}\probleminstance{}\problemquestion{}% Default input is empty
    \BODY% Parse input
    \par\addvspace{.5\baselineskip}
    \noindent
    \begin{tabularx}{\textwidth}{@{\hspace{\parindent}} l X c}
        \multicolumn{2}{@{\hspace{\parindent}}l}{\underline{\textsc{\@problemtitle}}} \\% Title
        \textbf{Instance:} & \@probleminstance \\% Input
        \textbf{Question:} & \@problemquestion% Question
        \end{tabularx}
    \par\addvspace{.5\baselineskip}
}
\makeatother

\tikzset{
    vertex/.style = {
        draw, 
        circle,
        inner sep = 3pt,
        fill = black,
        outer sep = 1.5pt,
        align = center,
    }, 
    cpi/.append style = {
        fill = white, 
        font = \scriptsize, 
        minimum width = 15,
        inner sep = 0.5pt, 
        thick
    },
    pi/.style = {
        draw = gray,
        regular polygon,
        regular polygon sides = 4,
        fill = white, 
        font = \tiny,
        inner sep = 0.1pt
    }
}

\lstset{language=Mathematica}
\lstset{basicstyle={\sffamily\footnotesize},
  numbers=left,
  numberstyle=\tiny\color{gray},
  numbersep=5pt,
  breaklines=true,
  captionpos={t},
  frame={lines},
  rulecolor=\color{black},
  framerule=0.5pt,
  columns=flexible,
  tabsize=2
}

\newcommand{\vgn}{\chi_{_V}^{\text{g}}}
\newcommand{\gapstr}{\text{str}_{\text{gap}}}
\newcommand{\etal}{et~al.\@\xspace}
\newcommand{\vmax}{v_{\text{max}}}
\newcommand{\vmin}{v_{\text{min}}}
\newcommand{\N}{\mathbb{N}}
\newcommand{\I}{\mathpzc{I}}
\newcommand{\X}{\mathpzc{X}}
\newcommand{\Y}{\mathpzc{Y}}
\newcommand{\Z}{\mathpzc{Z}}
\renewcommand{\R}{\mathpzc{R}}
\renewcommand{\O}{\mathcal{O}}

\newcommand{\gkvl}[1]{gap-$[#1]$-vertex-labelling\xspace}
\newcommand{\gvl}{gap-vertex-labelling\xspace}

\newcommand{\dist}{\text{dist}}

\journal{Discrete Applied Mathematics}

%%%%%%%%%%%%%%%%%%%%%%%
%% Elsevier bibliography styles
%%%%%%%%%%%%%%%%%%%%%%%
%% To change the style, put a % in front of the second line of the current style and
%% remove the % from the second line of the style you would like to use.
%%%%%%%%%%%%%%%%%%%%%%%

%% Numbered
% \bibliographystyle{model1-num-names}

%% Numbered without titles
%\bibliographystyle{model1a-num-names}

%% Harvard
%\bibliographystyle{model2-names.bst}\biboptions{authoryear}

%% Vancouver numbered
%\usepackage{numcompress}\bibliographystyle{model3-num-names}

%% Vancouver name/year
%\usepackage{numcompress}\bibliographystyle{model4-names}\biboptions{authoryear}

%% APA style
%\bibliographystyle{model5-names}\biboptions{authoryear}

%% AMA style
%\usepackage{numcompress}\bibliographystyle{model6-num-names}

%% `Elsevier LaTeX' style
\bibliographystyle{elsarticle-num-sort}

%%%%%%%%%%%%%%%%%%%%%%%
% \linespread{2}
\begin{document}

\begin{frontmatter}

\title{Graphs without gap-vertex-labellings:\\families and bounds\tnoteref{mytitlenote}}
\tnotetext[mytitlenote]{Supported by: Grant 2015/11937-9, S\~{a}o Paulo Research Foundation (FAPESP); Grants 425340/2016-3 and 308689/2017-8, National Council for Scientific and Technological Development (CNPq).}

%% Group authors per affiliation:
\author{C. A. Weffort-Santos\corref{mycorrespondingauthor}}
\author{R. C. S. Schouery}
\address{Institute of Computing -- University of Campinas (UNICAMP)\\
  Av. Albert Einstein, 1251 --- 13083-852 --- Campinas --- SP --- Brazil\\
  \texttt{\{celso.santos, rafael\}@ic.unicamp.br}}

% \address{Radarweg 29, Amsterdam}
% \fntext[myfootnote]{Since 1880.}

%% or include affiliations in footnotes:
% \author[mymainaddress,mysecondaryaddress]{Elsevier Inc}
% \ead[url]{www.elsevier.com}

% \author[mysecondaryaddress]{Global Customer Service\corref{mycorrespondingauthor}}
\cortext[mycorrespondingauthor]{Corresponding author}
% \ead{\{celso.santos, rafael\}@ic.unicamp.br}

% \address[mymainaddress]{1600 John F Kennedy Boulevard, Philadelphia}
% \address[mysecondaryaddress]{360 Park Avenue South, New York}

\begin{abstract}
    % !TEX root = main.tex
A proper labelling of a graph $G$ is a pair $(\pi, c_\pi)$ in which $\pi$ is an assignment of numeric labels to some elements of $G$, and $c_\pi$ is a colouring induced by $\pi$ through some mathematical function over the set of labelled elements. In this work, we consider gap-vertex-labellings, in which the colour of a vertex is determined by a function considering the largest difference between the labels assigned to its neighbours. We present the first upper-bound for the vertex-gap number of arbitrary graphs, which is the least number of labels required to properly label a graph. We investigate families of graphs which do not admit any gap-vertex-labelling, regardless of the number of labels. Furthermore, we introduce a novel parameter associated with this labelling and provide bounds for it for complete graphs $K_n$.
\end{abstract}

\begin{keyword}
Gap-strength\sep gap-labellings\sep proper-labellings
\MSC[2010] 05-XX\sep  05C78
\end{keyword}

\end{frontmatter}

% \linenumbers

\section{Introduction}\label{sec:intro}
    % !TEX root = main.tex

Graph colouring problems have been studied since 1852, when F. Guthrie introduced the Four Colour Problem to his brother and, subsequently, the world~\cite{May67}. Since then, important results in this field have been extended to numerous reoccurring problems such as timetable scheduling, register allocation in compilers, and even solving Sudoku puzzles~\cite{Lewis16}. 

Recently, researchers have turned their attention to a different form of colouring problems, nowadays referred to as graph labellings. In these new versions, numerical values are assigned to some elements of the graph, rather than (simply) colours. Concerning labellings, the value assigned to an element of the graph usually provides some information regarding the modelled problem, e.g.\@ the cost of opening a facility, the frequency assigned to a radio transmitter or the distance between two cities. 

Most authors trace the origins of graph labellings to A. Rosa~\cite{Rosa67}, who defined a $\beta$-valuation of a graph $G$ with $m$ edges as an injection $f : V(G) \rightarrow \{0, 1, \ldots, m\}$ such that $f$ induces another injection $g : E(G) \rightarrow \{1, 2, \ldots, m\}$, in which each edge $e = uv$ is assigned label $g(e) = |f(u) - f(v)|$. Rosa's $\beta$-valuations\footnote{Nowadays, $\beta$-valuations are referred to as ``Graceful Labellings'', thanks to S. Golomb~\cite{Golomb72}.} were the starting point of an entire field of research within Graph Theory. Since then, many different types of labellings have been proposed, each of which make use of different mathematical properties between the labelled elements. As examples, we cite: irregular assignments, harmonious labellings, AVD-labellings, magic and anti-magic labellings. For detailed surveys on this field, we refer the reader to the works of B. D. Acharya et al.~\cite{AcharyaAR08}, A.~Marr~\& W.~Wall~\cite{MarrW13}, P. Zhang~\cite{Zhang15}, J. Gallian~\cite{Gallian18} and S. L\'{o}pez \& F.~Muntaner-Batle~\cite{LopezM17}.

In our examples, a second labelling is often obtained through the use of some mathematical function over the set of labelled elements. Graceful labellings are such an example: Rosa's $\beta$-valuation induces a vertex-distinguishing colouring, in which the colour of each vertex is induced by the largest difference between the labels assigned to its incident edges. These types of label-induced colourings are nowadays known as \emph{proper labellings} and were introduced and studied by M. Karo\'{n}ski et al. in 2004~\cite{KaronskiLT04}.  

In this paper, we investigate gap-$[k]$-vertex-labellings of graphs, a proper labelling which was introduced in 2013 by A. Dehghan et al.~\cite{DehghanSA13}. It is defined as an assignment of integer labels to the vertices of a graph $G$ in such a way that, for every vertex $v \in V(G)$, its colour is induced by the largest difference, i.e. the largest ``gap'', between the labels of its neighbours and, furthermore, the induced colouring is proper. This labelling was inspired by its edge counterpart, which was introduced in 2012 by M. Tahraoui et al.~\cite{TahraouiDK12}, and has been studied both in its computational complexity and its structural properties~\cite{BrandtMNPS16,DehghanSA13,Dehghan16,ScheidweilerT15}. Our work revolves around three new aspects of gap-$[k]$-vertex-labellings. 

First, we consider the problem of deciding whether a graph admits a \gkvl{k}, for some $k \in \N$. We begin by proving a structural property of gap-vertex-labelable graphs, namely that we can restrict our analysis only to gap-vertex-labellings in which all labels (and gaps) are distinct. This property is used to derive the first upper-bound on the vertex-gap number of graphs --- the least $k$ for which a graph admits a \gkvl{k}. Furthermore, we obtain a trivial $\O(n!)$ algorithm which decides whether a graph admits a \gkvl{k}, for $k \in \N$. These results, as well as the basic concepts and notation used throughout the paper, are presented in Section~\ref{sec:preliminaries}.

As a second approach, we consider a statement made by Dehghan et al. in their seminal paper, in which the authors claim that there are graphs which do not admit any \gkvl{k}~\cite{DehghanSA13}, regardless of the number $k$ of labels. Their claim is, in fact, true; however, their paper does not characterize any such families of non-gap-vertex-labelable graphs. The next contribution of this work, presented in Section~\ref{sec:ngkvl}, is to present three families for which the claim is true: complete graphs, powers of paths and powers of cycles.

Finally, in Section~\ref{sec:gapstr}, we consider a new underlying problem on non-gap-vertex-labelable graph. Observing the previously named families, we realized that these graphs are fairly dense, in the sense that the size of the graph is large. We then questioned: how many edges can one remove from a non-gap-vertex-labelable graph such that the resulting graph \underline{still} cannot be labelled? We introduce a novel parameter to better understand this question, which we call the \emph{gap-strength} of a graph, and we investigate it for the family of complete graphs.

\section{Preliminaries}\label{sec:preliminaries}
    %!TEX root = main.tex

In this work, all graphs are connected, simple, and finite. For a graph $G$ with vertex set $V(G)$ and edge set $E(G)$, the degree of a vertex $v \in V(G)$ is denoted by $d(v)$ and its (open) neighbourhood, by $N(v)$. The minimum and maximum degree of $G$ are respectively denoted by $\delta(G)$ and $\Delta(G)$.

The distance between two vertices $u$ and $v$ in a graph $G$ is denoted by $\dist_G(u, v)$. A complete graph, a path and a cycle on $n$ vertices are respectively denoted by $K_n$, $P_n$ and $C_n$. The $k$-th \emph{power} of a graph $G$ is the graph $G^k$ with $V(G^k) = V(G)$ and $E(G^k) = \{uv~|~\dist_G(u, v) \leq k\}$. In this paper, we are particularly interested in Powers of Paths and Powers of Cycles, which are respectively denoted $P_n^k$ and $C_n^k$ for such graphs on $n$ vertices. The third and second powers of graphs $P_6$ and $C_5$, respectively, are exemplified in Figure~\ref{fig:powerpathcycle}.\par

\begin{figure}[!htb]
    \centering
    \begin{tikzpicture}
        \begin{scope}[shift = {(0, -1)}]
        \node[vertex] (p0) at (0, 0) {};
        \node[vertex] (p1) at (1, 0) {};
        \node[vertex] (p2) at (2, 0) {};
        \node[vertex] (p3) at (3, 0) {};
        \node[vertex] (p4) at (4, 0) {};
        \node[vertex] (p5) at (5, 0) {};
        \end{scope}

        \draw (p0) to (p1) to (p2) to (p3) to (p4) to (p5);
        \draw (p0) to[bend left = 40] (p2);
        \draw (p1) to[bend left = 40] (p3);
        \draw (p2) to[bend left = 40] (p4);
        \draw (p3) to[bend left = 40] (p5);
        
        \draw (p0) to[bend left = 55] (p3);
        \draw (p1) to[bend left = 55] (p4);
        \draw (p2) to[bend left = 55] (p5);

        \node[vertex] (c0) at (7.75, -0.6) {};
        \node[vertex] (c1) at (9, -0.6) {};
        \node[vertex] (c2) at (10.25, 0) {};
        \node[vertex] (c3) at (9, 0.6) {};
        \node[vertex] (c4) at (7.75, 0.6) {};

        \draw (c0) to (c1) to (c2) to (c3) to (c4) to (c0);
        \draw (c0) to[bend right = 60] (c2);
        \draw (c1) to[out = -30, in = 30, looseness = 5] (c3);
        \draw (c2) to[bend right = 60] (c4);
        \draw (c3) to[out = 130, in = 140, looseness = 2.5] (c0);
        \draw (c4) to[out = 220, in = 230, looseness = 2.5] (c1);
    \end{tikzpicture}
    \caption{Graphs $P_6^3$ and $C_5^2$.}
    \label{fig:powerpathcycle}
\end{figure}
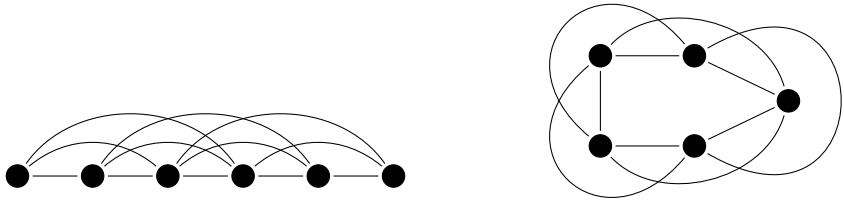

A \emph{proper (vertex-)colouring} of a graph $G$ is an assignment of colours to the vertices of $G$ such that every pair of adjacent vertices receives distinct colours. The least number of colours required to properly colour a graph is called the \emph{chromatic number} of $G$ and is denoted $\chi(G)$. Similarly defined, a \emph{labelling} of~$G$ is an assignment of numeric labels to some elements of $G$. In this work, we consider only label assignments to the vertices of graphs, i.e.\@ \emph{vertex-labellings}.

For a set of labels $[k] = \{1, 2, \ldots, k\}$, a \emph{\gkvl{k}}\footnote{The original definition considers isolated vertices in $G$, assigning colour 1 to them. However, since a graph $G$ is gap-vertex-labelable if and only if all of its connected components are also labelable, we consider only connected graphs, thus removing the case of $d(v) = 0$.} of a graph~$G$ is a pair $(\pi, c_\pi)$, where $\pi : V(G) \rightarrow [k]$ is a vertex-labelling and $c_\pi : V(G) \rightarrow \{0, 1, \ldots, k\}$ is a proper colouring of $G$ such that, for every $v \in V(G)$,
\begin{align}
    c_\pi(v) = \begin{cases}
        \max\limits_{u \in N(v)}\{ \pi(u) \} - \min\limits_{u \in N(v)}\{ \pi(u) \}, &\text{if } d(v) \geq 2; \\
        \pi(u)_{u \in N(v)},&\text{if } d(v) = 1.
        % 1,&\text{otherwise.}
    \end{cases}\label{eq:gap}
\end{align}

Note that every vertex $v\in V(G)$ with $d(v) \geq 2$ has its colour induced by the largest difference among the labels assigned to its neighbours; hence the name \emph{gap}-labelling. The least number $k$ of labels for which $G$ \emph{admits} a \gkvl{k} is called the \emph{vertex-gap number} of $G$ and is denoted  $\vgn(G)$. 

As we mention in Section~\ref{sec:intro}, this labelling is a variant of a vertex-distinguish-ing edge-labelling defined by M. Tahraoui~\etal in 2012~\cite{TahraouiDK12}. Gap-$[k]$-vertex-labellings were introduced in 2013 by A. Dehghan~\etal~\cite{DehghanSA13}. In their article, the authors prove that every tree $T$ admits a \gkvl{2}, thus establishing $\vgn(T) = 2$; they also show that every $r$-regular bipartite graph, with $r \geq 4$, is gap-$[2]$-vertex-labelable.

The seminal paper by Dehghan et al. focused on the computational complexity aspects of this labelling. Among other results, they show that the problem of deciding whether a graph $G$ admits a \gkvl{k}, for any fixed $k \geq 2$, is $\NP$-complete. For the particular case of $k = 2$, they show that the problem remains $\NP$-complete even when restricted to the family of 3-colourable graphs and for bipartite graphs. However, they also argue that there is a dichotomy regarding the latter family: if the graph is both bipartite and planar, then the problem lies in $\P$. In 2016, A. Dehghan~\cite{Dehghan16} continued his investigation on the complexity of gap-$[2]$-vertex-labellings of bipartite graphs $G$, proving that the problem of deciding whether $G$ admits a \gkvl{2} such that the induced colouring is a 2-colouring of the graph remains $\NP$-complete.

In their paper, Dehghan~\etal claim that ``a graph may lack any \gkvl{k}''~\cite{DehghanSA13}, and pose the following problem.

\begin{problem*}[Dehghan~\etal, 2013]
    Does there exist a polynomial-time algorithm to determine whether a given graph admits a \gkvl{k}?
\end{problem*}

Motivated by both their claim and the proposed problem, we began our investigation on necessary and sufficient conditions for a graph $G$ to admit a \gkvl{k}. It is important to remark that we are not particularly interested in determining the value of the least $k$ for which a graph admits such a labelling. (Nonetheless, our results enabled us to establish the first upper-bound for the vertex-gap number.) Therefore, from herein, we omit the value~$k$ and refer to graphs that admit a \gkvl{k}, for some $k \in \N$, as \emph{gap-vertex-labelable graphs}; analogously, graphs which do not admit such a labelling are said to be \emph{non-gap-vertex-labelable}. Also, whenever we define a colouring ``as usual'' in the proofs, we mean according to the definition presented in equation~\eqref{eq:gap}. 

\begin{lemma}\label{lemma:different_labels}
    Let $G$ be a graph. Then, $G$ admits admits a \gvl $(\pi, c_\pi)$ if and only if it admits a \gvl $(\pi', c_{\pi'})$ such that, for every pair of distinct vertices $u, v \in V(G)$, $\pi'(u) \neq \pi'(v)$ .
\end{lemma}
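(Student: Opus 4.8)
The plan is to prove the nontrivial direction: given an arbitrary \gvl $(\pi, c_\pi)$ of $G$ on a graph with $n$ vertices, produce a new labelling $\pi'$ that is injective on $V(G)$ and induces a proper colouring via the rule~\eqref{eq:gap}, while keeping the colour of every vertex unchanged (so properness is preserved for free). The forward direction is trivial, since an injective labelling is in particular a labelling.

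The key idea is a "rescaling and perturbation" trick. First I would relabel the range: if $\pi$ uses labels in $[k]$, replace each label value $a$ by $a \cdot M$ for a large multiplier $M$ (say $M = n+1$), obtaining $\pi_0 = M\pi$. This scales every gap $c_\pi(v)$ by the same factor $M$, so the induced colouring becomes $M \cdot c_\pi$, which is still proper (the colour classes are unchanged, just renamed). Crucially, this creates "room": between consecutive scaled label values there is now a window of $M-1 \ge n$ unused integers. Now I fix an arbitrary ordering $v_1, \dots, v_n$ of $V(G)$ and define $\pi'(v_i) = \pi_0(v_i) + i$. Since $1 \le i \le n < M$, two vertices with the same original scaled label get distinct values (their $i$-offsets differ), and two vertices with different scaled labels stay separated because the offsets are too small to close the gap of at least $M$; hence $\pi'$ is injective on $V(G)$.

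The step I expect to require the most care is checking that the gaps — and therefore the colours — are genuinely preserved, i.e. that $\max_{u \in N(v)} \pi'(u) - \min_{u \in N(v)} \pi'(u) = M \cdot c_\pi(v)$ for every $v$ with $d(v) \ge 2$, and that the $d(v)=1$ case also goes through. The point is that within $N(v)$, the vertex attaining the maximum of $\pi'$ is one attaining the maximum of $\pi_0$ (perturbations of size at most $n < M$ cannot overtake a genuine jump of size $\ge M$), and similarly for the minimum; but one must handle the case where several neighbours of $v$ share the extremal original label, in which case the perturbation picks out a particular one and the gap changes by at most $n < M$. So in general $\pi'$ does \emph{not} preserve gaps exactly. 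The fix is to make the perturbations themselves respect the original label order: instead of offsetting $v_i$ by $i$, offset it by a value that is monotone in $\pi_0(v_i)$, e.g. order the vertices so that $\pi_0(v_1) \le \pi_0(v_2) \le \dots \le \pi_0(v_n)$ and then set $\pi'(v_i) = \pi_0(v_i) + i$. With this choice, among any set of neighbours the one with the largest $\pi'$-value is precisely the one with the largest original label and the largest index among those, and likewise for the minimum; a short computation then shows the new gap equals the old gap plus a fixed, order-determined correction that — because we only ever compare within a single neighbourhood and the correction depends only on which original labels appear — leaves the induced colouring proper. Concretely, it is cleanest to argue that $c_{\pi'}$ is still a proper colouring directly: for adjacent $u, v$ we need $c_{\pi'}(u) \ne c_{\pi'}(v)$, and since $c_{\pi'}(v) = M c_\pi(v) + (\text{correction of absolute value} < n)$ with $M = n+1$, distinct values of $c_\pi$ force distinct values of $c_{\pi'}$, while equal values of $c_\pi(u) = c_\pi(v)$ cannot occur for adjacent $u,v$ in the first place. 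Thus $(\pi', c_{\pi'})$ is the desired \gvl with all labels distinct, and one observes in passing that the gaps are then automatically distinct too, since distinct vertices in an injective integer labelling of a connected graph can be further separated if needed — but in fact the statement only asks for distinct \emph{labels}, so the construction above suffices.
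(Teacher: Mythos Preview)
Your proposal is correct and is essentially the same argument as the paper's: order the vertices so that $\pi(v_1)\le\cdots\le\pi(v_n)$ and set $\pi'(v_i)=M\pi(v_i)+i$ for a multiplier $M$ large relative to $n$, then observe that $c_{\pi'}(v)=M\,c_\pi(v)+(\text{an index-correction of size}<M)$, so adjacent vertices, which already have distinct $c_\pi$-values, get distinct $c_{\pi'}$-values. The only cosmetic differences are that the paper takes $M=2n$ (you take $M=n+1$, which also works once the degree-$1$ correction is bounded carefully) and that the paper writes out the equality $c_{\pi'}(u)=c_{\pi'}(v)$ explicitly and derives a contradiction via a two-case analysis on degrees, rather than phrasing it as ``scaled colour plus small correction''.
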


\begin{proof}
    Suppose a graph $G$ of order $n$ admits a \gvl $(\pi, c_\pi)$. Note that if every vertex has a distinct label in $\pi$, then the sufficient condition trivially holds. Therefore, in order to prove the result, it suffices to show the necessary condition, i.e., that $G$ admits a \gvl such that every vertex receives a unique label.

    Adjust the notation of $V(G)$ as $\{v_0, \ldots, v_{n-1}\}$ in non-decreasing order, i.e. ${\pi(v_0) \leq \pi(v_1) \leq \ldots \leq \pi(v_{n-1})}$. We define a new labelling $\pi'$ of $G$ as follows. For every vertex ${v_i \in V(G)}$, let $\pi'(v_i) = \pi(v_i)\cdot 2n + i$. Define colouring $c_{\pi'}$ as usual.

    First, we prove that $\pi'$ is a labelling of $G$ such that each vertex received a distinct label. Suppose, for the sake of contradiction, that $\pi'(v_i) = \pi'(v_j)$ for two distinct vertices $v_i, v_j \in V$. Without loss of generality, assume $i < j$. Then,
    \begin{align*}
        \pi'(v_i) = \pi'(v_j) \Rightarrow [\pi(v_i) - \pi(v_j)] \cdot 2n = j - i.
    \end{align*}

    Since $i < j$ by the adjusted notation, the right side of the equation is always positive. However, it is also known that $\pi(v_i) \leq \pi(v_j)$, implying that the left side of the equation is always a negative number. Therefore, there are no values for $i$ and $j$ which satisfy the equation, and we conclude that $\pi'$ is a labelling of $G$ in which every vertex is assigned a distinct label. Furthermore, it is important to remark that $\pi'$ is defined as an \emph{order preserving} function of $\pi$. This means that if $\pi'(v_i) < \pi'(v_j)$ for two vertices $v_i, v_j \in V(G)$, then $\pi(v_i) \leq \pi(v_j)$ in the original \gvl $(\pi, c_\pi)$.

    In order to establish the result, it remains to show that the induced colouring is proper. Suppose there are two adjacent vertices $v_i, v_j \in V$ such that $c_{\pi'}(v_i) = c_{\pi'}(v_j)$. Since the colour of a vertex is induced differently for vertices~$v$ with $d(v) = 1$ and $d(v) \geq 2$, we must address two\footnote{The case $d(v_i) = d(v_j) = 1$ implies that $G \cong K_2$, which can be inspected.} cases: (i) if $d(v_i) \geq 2$ and $d(v_j) \geq 2$; and (ii) if $d(v_i) \geq 2$ and $d(v_j) = 1$. \\

    \paragraph{Case (i). $d(v_i) \geq 2$ and $d(v_j) \geq 2$} Let $v_a$ and $v_b$ be the neighbours of $v_i$ such that $c_{\pi'}(v_i) = \pi'(v_a) - \pi'(v_b)$, and $v_x$ and $v_y$, the neighbours of $v_j$ such that $c_{\pi'}(v_j) = \pi'(v_x) - \pi'(v_y)$; observe that not necessarily vertices $v_a, v_b, v_x$ and $v_y$ are distinct. We express the equality as
    \begin{align}
        c_{\pi'}(v_i) = c_{\pi'}(v_j) &\Rightarrow \pi'(v_a) - \pi'(v_b) = \pi'(v_x) - \pi'(v_y)\nonumber \\
        & \Rightarrow [\pi(v_a) - \pi(v_b) - \pi(v_x) + \pi(v_y)] \cdot 2n = x - y - a + b.\label{eq:lemma:differentlabels:1}
    \end{align}
    From the left side of equation~\eqref{eq:lemma:differentlabels:1}, we consider the two following subcases: if $|\pi(v_a) - \pi(v_b) - \pi(v_x) + \pi(v_y)| \geq 1$; and if $\pi(v_a) - \pi(v_b) - \pi(v_x) + \pi(v_y) = 0$. In the former, since $1 \leq a, b, x, y \leq n$, it follows that $| x - y - a + b| < 2n$, and there are no values for $a, b, x, y$ which satisfy this equation. Therefore, it remains to consider the latter case, in which
    \begin{align}
        \pi(v_a) - \pi(v_b) - \pi(v_x) + \pi(v_y) = 0 \Rightarrow \pi(v_a) - \pi(v_b) = \pi(v_x) - \pi(v_y).\label{eq:lemma:differentlabels:4}
    \end{align}
    Now, recall that $\pi'$ is order preserving, which implies that if $v_a$ and $v_b$ are the vertices that define colour $c_{\pi'}(v_i)$, then $c_\pi(v_i)$ is computed by $\pi(v_a) - \pi(v_b)$; an analogous reasoning holds for $v_j$. Then, we have $\pi(v_a) - \pi(v_b) = c_\pi(v_i)$ and $\pi(v_x) - \pi(v_y) = c_\pi(v_j)$, implying that $c_\pi(v_i) = c_\pi(v_j)$ by equation~\eqref{eq:lemma:differentlabels:4}. This contradicts the fact that $(\pi, c_\pi)$ is a \gvl of $G$, and we conclude that there are no such vertices $v_i$ and $v_j$ with the same induced colour.\\

    {\setlength{\parindent}{0cm}
    \textit{Case (ii). $d(v_i) \geq 2$ and $d(v_j) = 1$.\\}} 
    In this final case, we use a similar reasoning to that of Case (i). First, since $d(v_j) = 1$, its colour is induced by the label assigned to its only neighbour and, hence, $v_x = v_i$. However, note that we may abuse notation and suppose that $v_y$ is a second neighbour of $v_j$ labelled $\pi(v_y) = 0$. Then, the same analysis as that of Case (i) can be applied, and we conclude that $(\pi', c_{\pi'})$ is a \gvl of $G$ in which every vertex receives a distinct label.\qedhere
\end{proof}

With Lemma~\ref{lemma:different_labels} established, we can safely assume that if a graph admits a \gvl, then all labels are distinct. It also allows us to assume that there are exactly two vertices which received the maximum and minimum labels; this particular result will be used extensively in the following sections. Furthermore, we are able to provide the first bounds on the vertex-gap number of arbitrary graphs. 

Let $G$ be a gap-vertex-labelable graph and $(\pi, c_\pi)$, its labelling. As was done in the proof of the previous lemma, consider an ordering ${v_0, v_1, \ldots, v_{n-1}}$ of the vertices of $G$ according to their assigned labels, in increasing order. Observe that by switching the label of every $v_i$ to a distinct powers of two --- namely creating a new labelling $\pi'$ where $\pi'(v_i) = 2^i$, then a similar analysis allows us to conclude that the new colouring induced by $\pi'$ is proper. Thus, $(\pi', c_{\pi'})$ is also a gap-vertex-labelling of $G$, with the added property that the largest label in $\pi'$ is $2^{n-1}$. 

This gives us a first upper-bound on the vertex-gap number of graphs: $\vgn(G) \le 2^{n-1}$. We remark that this somewhat trivial bound is (in some sense) tight since $\vgn(K_3) = 2^{3-1} = 4$; this result is presented in Theorem~\ref{theo:complete}. However, by using the concept of Golomb Rulers, we are able to improve this bound considerably. 

A \emph{Golomb Ruler} of order~$n$ is a set of integers $A = \{a_1, a_2, \ldots, a_n\}$ called \emph{marks}, with $a_1 < a_2 < \ldots < a_n$, such that for each integer $x \ne 0$ there is at most one solution to the equation $x = a_j - a_i$, for marks $a_i, a_j \in A$. Golomb Rulers were introduced independently by W. Babock~\cite{Babock53} and S. Sidon~\cite{Sidon32}; for a detailed survey, we refer the reader to A. Dimitromanolakis' masters thesis.~\cite{Dimitromanolakis02}.

There are simple constructions for Golomb Rulers of order $n$ where the largest mark is at most $\O(n^3)$. However, if $n$ is a prime number, one can construct rulers such that the largest mark is at most $\O(n^2)$~\cite{Ruzsa93,Singer38}. We will use the latter bound since one can quickly find a prime number $p$ between $[n, 2n]$. Furthermore, we explicitly use the Erd\"{o}s-Tur\'{a}n construction, in which it is known that the largest mark is at most $2p^2-p-1$~\cite{ErdosT41}. This allows us to prove the following result.

\begin{theorem}\label{theo:vgnn2}
If $G$ is gap-vertex-labelable, then $\vgn(G) \in \O(n^{2})$.
\end{theorem}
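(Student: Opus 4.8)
The plan is to mimic the argument sketched just before the theorem statement, where the labels $2^i$ were used, but to replace the exponentially growing sequence with the marks of a Golomb ruler so that the largest label is only $\O(n^2)$. Concretely, given a gap-vertex-labelable graph $G$ on $n$ vertices with labelling $(\pi, c_\pi)$, first invoke Lemma~\ref{lemma:different_labels} to assume that all labels are distinct, and order the vertices $v_0, v_1, \ldots, v_{n-1}$ so that $\pi(v_0) < \pi(v_1) < \cdots < \pi(v_{n-1})$. Pick a prime $p$ with $n \le p \le 2n$ (which exists by Bertrand's postulate), and let $A = \{a_0 < a_1 < \cdots < a_{p-1}\}$ be an Erd\H{o}s--Tur\'an Golomb ruler of order $p$, so that every mark satisfies $a_i \le 2p^2 - p - 1$. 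Define a new labelling $\pi'$ by $\pi'(v_i) = a_i$ for $0 \le i \le n-1$, and let $c_{\pi'}$ be the colouring induced as usual.

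The key point is that $\pi'$ is \emph{order preserving} with respect to $\pi$ in exactly the same sense used in the proof of Lemma~\ref{lemma:different_labels}: since $a_0 < a_1 < \cdots$ and $\pi(v_0) < \pi(v_1) < \cdots$, we have $\pi'(v_i) < \pi'(v_j)$ if and only if $\pi(v_i) < \pi(v_j)$. Hence for every vertex $v$ with $d(v) \ge 2$, the neighbour attaining $\max_{u \in N(v)} \pi'(u)$ is the same vertex attaining $\max_{u \in N(v)} \pi(u)$, and similarly for the minimum; so if $c_\pi(v)$ is realised by the pair $v_a, v_b \in N(v)$ (meaning $c_\pi(v) = \pi(v_a) - \pi(v_b)$), then $c_{\pi'}(v) = a_a - a_b$. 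The next step is to show $c_{\pi'}$ is proper. Suppose two adjacent vertices $v_i, v_j$ (both of degree $\ge 2$; the degree-$1$ cases and $K_2$ are handled exactly as in Lemma~\ref{lemma:different_labels}, abusing notation with a phantom neighbour labelled $0$, which we may assume is $a_{-1} := 0$ or simply argue directly) satisfy $c_{\pi'}(v_i) = c_{\pi'}(v_j)$. Writing $c_{\pi'}(v_i) = a_a - a_b$ and $c_{\pi'}(v_j) = a_x - a_y$ for the neighbour pairs realising these colours, equality gives $a_a - a_b = a_x - a_y$, i.e. $a_a - a_b = a_x - a_y$ is a common value of the difference $a_j - a_i$. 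If this common value is nonzero, the Golomb ruler property forces $\{a_a, a_b\} = \{a_x, a_y\}$ as ordered pairs, hence $a = x$ and $b = y$; if it is zero, then $a = b$ and $x = y$. Either way $\pi(v_a) - \pi(v_b) = \pi(v_x) - \pi(v_y)$, so $c_\pi(v_i) = c_\pi(v_j)$ by the order-preserving correspondence, contradicting properness of $c_\pi$. Therefore $(\pi', c_{\pi'})$ is a gap-vertex-labelling of $G$ using labels in $\{a_0, \ldots, a_{n-1}\} \subseteq \{0, 1, \ldots, 2p^2 - p - 1\}$, and since $p \le 2n$ this is $\O(n^2)$, giving $\vgn(G) \in \O(n^2)$.

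The main obstacle — and the part deserving the most care in the write-up — is the bookkeeping around which pair of neighbours "realises" each colour and the honest treatment of the degree-$1$ boundary case. One has to be slightly careful that the Golomb property is about \emph{unordered} pairs giving a \emph{signed} difference $x$: the statement "at most one solution to $x = a_j - a_i$" means the ordered pair $(a_i, a_j)$ is determined by $x$, so from $a_a - a_b = a_x - a_y \ne 0$ we really do get $a = x$ and $b = y$, not just $\{a,b\} = \{x,y\}$ up to swapping (which would be enough anyway, since $c_\pi$ is a difference of the two and the max/min assignment is forced by order-preservation). For the degree-$1$ case, a vertex $v_j$ with a single neighbour $v_i$ has $c_{\pi'}(v_j) = \pi'(v_i) = a_i$; treating this as $a_i - a_{(-1)}$ with the convention $a_{-1} = 0$ lets the same argument go through, provided $0$ is not already a mark of the ruler (which we can arrange, since Golomb rulers may be translated, or we simply note $0 \notin A$ in the Erd\H{o}s--Tur\'an construction or shift to make it so) — alternatively one repeats verbatim the short case analysis from the proof of Lemma~\ref{lemma:different_labels}. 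The existence of the prime $p \in [n, 2n]$ is Bertrand's postulate and needs only a one-line mention.
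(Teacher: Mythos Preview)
Your overall strategy---replace the exponential labels by marks of a Golomb ruler, using that the relabelling is order-preserving---is the same as the paper's, and your treatment of the degree-$\ge 2$ versus degree-$\ge 2$ case is correct (indeed, more explicit than the paper's own one-line justification).

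The degree-$1$ case, however, has a genuine gap. Your fix is to view $c_{\pi'}(v_j)=a_i$ as $a_i - a_{-1}$ with $a_{-1}:=0$ and then reuse the Golomb argument, claiming this works ``provided $0$ is not already a mark of the ruler''. That condition is not sufficient: what you actually need is that $\{0\}\cup A$ is still a Golomb ruler, i.e.\ that no mark equals a difference of two marks. This can fail. Concretely, take $G$ on $\{c,l,p,q,r\}$ with edges $cl,cp,cq,pq,pr,qr$ (so $d(l)=1$, $N(l)=\{c\}$) and the valid gap-labelling $\pi(r)=10,\pi(l)=20,\pi(p)=30,\pi(c)=40,\pi(q)=100$. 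Ordering by $\pi$ gives indices $r\mapsto 0,\,l\mapsto 1,\,p\mapsto 2,\,c\mapsto 3,\,q\mapsto 4$. With the Golomb ruler $A=\{1,2,5,11,13\}$ (which has $0\notin A$) your relabelling gives $c_{\pi'}(l)=\pi'(c)=11$ while $c_{\pi'}(c)=\pi'(q)-\pi'(l)=13-2=11$, a conflict on the edge $cl$. Repeating Lemma~\ref{lemma:different_labels}'s Case~(ii) ``verbatim'' does not help either, since that argument exploits the specific arithmetic form $\pi(v_i)\cdot 2n+i$ of those labels, which is absent here.

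The paper sidesteps this entirely by shifting every label up by $2p^2$, setting $\pi'(v_i)=a_i+2p^2$. Then a degree-$1$ vertex gets colour equal to its neighbour's label, hence at least $2p^2$, whereas every degree-$\ge 2$ vertex gets a colour that is a difference of two marks, hence at most $2p^2-p-1<2p^2$; the two ranges are disjoint, so no degree-$1$/degree-$\ge 2$ conflict can occur. Adding this shift (or, equivalently, taking a ruler of order $n+1$ whose smallest mark is $0$ and using only the $n$ largest marks for the vertices) repairs your argument at no asymptotic cost.
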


\begin{proof}
    Let $G = (V, E)$ be a simple graph and suppose $G$ admits a \gvl $(\pi, c_\pi)$. By Lemma~\ref{lemma:different_labels}, we adjust notation of $V(G)$ to $\{v_1, v_2, \ldots, v_{n}\}$ such that $\pi(v_1) < \pi(v_2) < \ldots < \pi(v_{n})$.
    Now, define a new labelling $\pi'$ of $V(G)$ as follows. For every $v_i \in V(G)$, assign $\pi'(v_i) = a_{i} + 2p^2$, and define~$c_{\pi'}$ as usual. We prove that $c_{\pi'}$ is a proper colouring of $G$. First, observe that every vertex $v_i$ of degree at least two has its colour induced by some difference $a_l - a_j$ for marks $a_l, a_j$ in the Golomb Ruler. This has two main implications: first, that $c_{\pi'}(v_i) \le 2p^2-p-1$ for every $v_i$ with $d(v_i) \ge 2$. Second, that every pair of adjacent vertices with degree at least two have distinct colours under $c_{\pi'}$, as this is precisely the property that defines the differences between marks in the Golomb Ruler.

    It remains to consider degree-one vertices in the graph. Let $v_i$ be such a vertex and let $v_j$ be its only neighbour in the graph. We safely assume has $v_j$ has degree at least 2, otherwise $G \cong K_2$. Since every vertex in the graph receives a label that is at least $2p^2$, then the induced colour of $v_i$ is also at least~$2p^2$. Given that $v_j$ has degree at least $2$, it follows that $c_{\pi'}(v_j) \le 2p^2-p-1 < 2p^2$. Hence, $v_i$ and $v_j$ have different induced colours. 

    We conclude that there are no adjacent vertices in $G$ with conflicting induced colours and, therefore, that $(\pi', c_{\pi'})$ is a \gvl of $G$. Furthermore, it is a \gvl in which the largest label used is $\O(n^2)$. This completes the proof.
\end{proof}

We conclude this section remarking that one can now design a factorial-time algorithm to decide whether a graph $G$ admits a \gvl. This algorithm consists of assigning every possible combination of marks on the Golomb Ruler to the vertices of~$G$. For each of the $\mathcal{O}(n!)$ assignments, we calculate the induced colours of the vertices and verify if there are any conflicting vertices. Given that determining the induced colour of a vertex and verifying its adjacencies (for conflicting colours) can be done in polynomial time, the following corollary holds.

\begin{corollary}
    There exists a $\O(n!)$-time algorithm which decides whether a given graph~$G$ is gap-vertex-labelable.\hfill$\square$
\end{corollary}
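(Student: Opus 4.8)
The plan is to turn the combination of Lemma~\ref{lemma:different_labels} and Theorem~\ref{theo:vgnn2} into an exhaustive search over a \emph{fixed} set of $n$ candidate labels. First I would compute, in polynomial time, a prime $p \in [n, 2n]$ together with the corresponding Erd\H{o}s--Tur\'an Golomb ruler $A = \{a_1 < a_2 < \cdots < a_n\}$, and form the shifted label set $S = \{a_1 + 2p^2, \ldots, a_n + 2p^2\}$ appearing in the proof of Theorem~\ref{theo:vgnn2}. The algorithm then enumerates all $n!$ bijections $\pi \colon V(G) \to S$; for each one it computes the induced colouring $c_\pi$ according to~\eqref{eq:gap} and tests whether $c_\pi$ is proper. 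It accepts if and only if at least one bijection yields a proper colouring.

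Correctness has two directions. For \emph{soundness}, if the algorithm accepts then it has exhibited an explicit pair $(\pi, c_\pi)$ that is a \gvl of $G$, so $G$ is gap-vertex-labelable. For \emph{completeness}, suppose $G$ is gap-vertex-labelable; by Lemma~\ref{lemma:different_labels} it admits a \gvl in which all labels are distinct, and the construction in the proof of Theorem~\ref{theo:vgnn2} transforms this into a \gvl whose labels are exactly the elements of $S$, each used once. That labelling is one of the $n!$ bijections tried by the algorithm, and its induced colouring is proper, so the algorithm accepts. Hence the algorithm decides gap-vertex-labelability correctly.

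For the running time, there are exactly $n!$ bijections to consider (equivalently, $n!$ orderings of $V(G)$), and for each one the induced colour of a vertex $v$ is obtained in time polynomial in $d(v)$ by scanning $N(v)$ as in~\eqref{eq:gap}, while checking properness amounts to comparing the colours of the two endpoints of every edge; both tasks run in time polynomial in $n$. The total running time is therefore $n!$ times a polynomial in $n$, i.e.\@ $\Ostar(n!)$, which is the bound claimed in the corollary.

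I expect the only real subtlety --- rather than a genuine obstacle --- to be the bookkeeping around degree-one vertices: the colour of such a vertex is the \emph{actual label} of its unique neighbour, not a gap, so the additive shift by $2p^2$ in the definition of $S$ genuinely matters and must be carried through exactly as in the proof of Theorem~\ref{theo:vgnn2}. Enumerating bijections onto the shifted set $S$ (rather than onto $A$ itself, or onto an arbitrary $n$-subset of $[\O(n^2)]$) is precisely what keeps the search space at size $n!$ while still making the completeness argument go through; everything else is a direct assembly of the two earlier results.
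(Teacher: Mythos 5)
Your proposal is correct and follows essentially the same route as the paper: exhaustively try all $n!$ assignments of the Golomb-ruler marks to the vertices, compute the induced colouring, and check properness, with correctness guaranteed by Lemma~\ref{lemma:different_labels} and the construction in Theorem~\ref{theo:vgnn2}. Your version is slightly more careful than the paper's (the explicit soundness/completeness split, the $2p^2$ shift for degree-one vertices, and the honest $\Ostar(n!)$ accounting of the per-assignment polynomial overhead), but the underlying algorithm and justification are identical.
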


This is the first step towards answering the problem posed by Dehghan et al. in 2013~\cite{DehghanSA13}. However, very little is known about the computational complexity of this decision problem. It would be interesting to pursue a polynomial-time algorithm for this problem or, perhaps, an $\NP$-hardness proof.

\section{\texorpdfstring{Non-gap-vertex-labelable}{} graphs}\label{sec:ngkvl}
    % !TEX root = main.tex

In the previous section, we provided necessary and sufficient conditions for a graph to be gap-vertex-labelable and, thus, the first algorithm that decides whether a given graph $G$ admits such a labelling. This reinforces Dehghan et al.'s claim that some graphs ``may lack a gap-vertex-labelling''~\cite{DehghanSA13}.

Nonetheless, we wanted to better understand which structural properties of a graph may be used to determine whether it admits a gap-vertex-labelling or not. In order to do so, we investigated three closely related families of graphs: complete graphs $K_n$, powers of paths $P_n^k$ and powers of cycles $C_n^k$. Our results are presented in the following theorems.

\begin{theorem}\label{theo:complete}
    Let $G \cong K_n$. Then, $G$ is gap-vertex-labelable if and only if $n \leq 3$.
\end{theorem}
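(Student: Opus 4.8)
The plan is to prove both directions separately. For the ``if'' direction, I would simply exhibit gap-vertex-labellings for $K_1$, $K_2$, and $K_3$. The cases $K_1$ and $K_2$ are degenerate (for $K_2$, each vertex has degree one, and the colour of each vertex is the label of the other, so any two distinct labels work). For $K_3$, I would try small labellings directly: label the three vertices $1, 2, 4$, so that each vertex sees the two others and its colour is the difference of the other two labels, giving colours $4-2=2$, $4-1=3$, $2-1=1$ respectively, which are pairwise distinct. Hence $\vgn(K_3) \le 4$, and a short case check (or a parity/counting argument on the three pairwise differences $a-b$, $a-c$, $b-c$ among labels $a<b<c$, noting that $(c-b)+(b-a)=(c-a)$ forces them to be distinct only when all gaps differ, and the smallest such triple is $\{1,2,4\}$) shows $\vgn(K_3)=4$, matching the claim made in the text before the theorem.

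For the ``only if'' direction, the key observation is that in $K_n$ every vertex is adjacent to every other vertex, so \emph{every} vertex with degree $\ge 2$ sees exactly the same neighbourhood labels up to the exclusion of its own label. By Lemma~\ref{lemma:different_labels}, I may assume all labels $\pi(v_1) < \pi(v_2) < \cdots < \pi(v_n)$ are distinct. For $n \ge 4$, consider any vertex $v_i$ with $2 \le i \le n-1$: its neighbourhood is $V(G) \setminus \{v_i\}$, which contains both $v_1$ (the global minimum label) and $v_n$ (the global maximum label). Therefore $c_\pi(v_i) = \pi(v_n) - \pi(v_1)$ for \emph{every} such $i$. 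Since $n \ge 4$ there are at least two indices $i$ strictly between $1$ and $n$ (namely $i = 2$ and $i = 3$), and $v_2 v_3 \in E(K_n)$, so these two adjacent vertices receive the same colour — contradicting properness. This shows $K_n$ is non-gap-vertex-labelable for $n \ge 4$.

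The only subtlety to handle carefully is the boundary vertices $v_1$ and $v_n$ and the degree-one exception in equation~\eqref{eq:gap}: for $n \ge 3$ every vertex of $K_n$ has degree $n-1 \ge 2$, so the degree-one clause never applies and the argument above is unobstructed. I expect the main (though modest) obstacle to be making the lower-bound part of the ``if'' direction — that $\vgn(K_3) = 4$ and not less — fully rigorous; this amounts to checking that no labelling of $K_3$ with maximum label $2$ or $3$ yields three distinct pairwise differences, which is a finite check over a handful of label triples. Everything else is immediate from Lemma~\ref{lemma:different_labels} and the defining property~\eqref{eq:gap}.
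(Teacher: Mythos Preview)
Your proposal is correct and follows essentially the same approach as the paper: exhibit explicit labellings for $K_2$ and $K_3$ (the paper uses the same triple $\{1,2,4\}$ for $K_3$), and for $n \ge 4$ invoke Lemma~\ref{lemma:different_labels} to get distinct labels, then observe that any two vertices other than the global max and min see both extremes and hence receive the same colour. Your discussion of the exact value $\vgn(K_3)=4$ is extraneous to the theorem as stated---the paper does not prove this here---but it does no harm.
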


\begin{proof}
    Let $G \cong K_n$ be a complete graph of order $n \geq 2$. Complete graph $K_1$ is a trivial graph, for which the result naturally holds. Thus, to prove the sufficient condition, it suffices to show that for the cases $n = 2$ and $n = 3$, $G$ admits some \gvl. These labellings are presented in Figure~\ref{fig:gkvl:k1k2k3}. The numbers inside the vertices correspond to their induced colours, while values in the small grey boxes in their lower-right corners correspond to their assigned labels. (From herein, all figures follow this same representation of labels and induced colours.)

    \begin{figure}[h]
        \centering
        \begin{tikzpicture}
            \node[vertex, cpi] (v0) at (0, 1) {1};
            \node[vertex, cpi] (v1) at (0, -1) {2};
            
            \draw (v0) to (v1);

            \node[vertex, cpi] (w0) at (3, 1) {3};
            \node[vertex, cpi] (w1) at (2, -1) {2};
            \node[vertex, cpi] (w2) at (4, -1) {1};

            \draw (w0) to (w1) to (w2) to (w0);

            \begin{scope}[pi/.append style = {shift = {(0.22, -0.22)}}]
                \node[pi] at (v0) {2};
                \node[pi] at (v1) {1};
                \node[pi] at (w0) {2};
                \node[pi] at (w1) {1};
                \node[pi] at (w2) {4};
            \end{scope}
        \end{tikzpicture}
        \caption{Gap-vertex-labellings of complete graphs $K_2$ and $K_3$.}
        \label{fig:gkvl:k1k2k3}
    \end{figure}
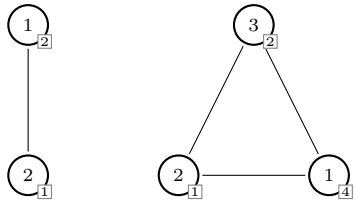

    Now, consider $n \geq 4$, and let $V(G) = \{v_0, \ldots, v_{n-1}\}$ be the vertices of $G$. Suppose $G$ admits a \gvl $(\pi, c_\pi)$. By Lemma~\ref{lemma:different_labels}, we assume all labels are distinct. Adjust notation so that $v_0$ is the vertex which is assigned the largest of all labels in $V(G)$ and $v_1$, the smallest. Consider vertices $v_2$ and $v_3$; these vertices exist in $G$ since $n \geq 4$. Observe that $v_0, v_1 \in N(v_2)$ and $v_0, v_1 \in N(v_3)$. This implies that, regardless of the labels assigned to $v_2, v_3, \ldots, v_{n-1}$, (at least) both $v_2$ and $v_3$ have their colours induced by the same gap, i.e. $c_\pi(v_2) = c_\pi(v_3) = \pi(v_0) - \pi(v_1)$. This is a contradiction since $c_\pi(v_2) \neq c_\pi(v_3)$ in any proper vertex-colouring of $G$. Therefore, $c_\pi$ is not a proper colouring and the result follows. \qedhere
\end{proof}

% Two families closely related to that of complete graphs are those of Powers of Paths and Powers of Cycles. The next theorem provides necessary and sufficient conditions for the former.

\begin{theorem}\label{theo:pop}
    Let $G = (V, E)$ be the $k$-th power of path $P_n$, with $n \geq 3$ and $2 \leq k < n$. Then, $G$ is gap-vertex-labelable if and only if: 
    \begin{enumerate}[label = (\roman*)]
        \item $G$ is isomorphic to $P_3^2$ or $P_4^2$; or \label{theo:pop:i}
        \item $n \geq 5$ and $k < n/2$.\label{theo:pop:ii}
    \end{enumerate}
\end{theorem}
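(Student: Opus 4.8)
The statement splits naturally into a negative direction (showing that the remaining powers of paths are non-gap-vertex-labelable) and a positive direction (exhibiting labellings for the two sporadic cases and for all $P_n^k$ with $n\ge 5$ and $k<n/2$). For the negative direction I would start from the same obstruction used in Theorem~\ref{theo:complete}: by Lemma~\ref{lemma:different_labels} we may assume all labels are distinct, so there is a unique vertex $\vmax$ carrying the largest label and a unique vertex $\vmin$ carrying the smallest. Any vertex adjacent to \emph{both} $\vmax$ and $\vmin$ has its colour forced to be exactly $\pi(\vmax)-\pi(\vmin)$, so if two adjacent vertices are both in $N(\vmax)\cap N(\vmin)$ we get a colour conflict. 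In $P_n^k$, the vertex $v_i$ is adjacent to $v_j$ iff $|i-j|\le k$, so $N(\vmax)\cap N(\vmin)$ is the set of indices within distance $k$ of both endpoints' indices; when $k\ge n/2$ this common neighbourhood is large and "thick" (it contains two consecutive indices, hence an edge), which is exactly what kills the labelling. The clean way to organise this is: first dispose of $n=3,4$ by hand (only $P_3^2\cong K_3$ and $P_4^2$ survive, and $P_4^3\cong K_4$, $P_3^2$ is handled by Theorem~\ref{theo:complete}), then for $n\ge 5$ show $k\ge n/2$ forces a conflict. The subtlety is that $\vmax$ and $\vmin$ need not be the endpoints $v_0,v_{n-1}$ of the underlying path; I would argue that whatever their positions, when $k\ge n/2$ every pair of vertices $v_a,v_b$ is "close enough" that one can locate two adjacent vertices both seeing $\vmax$ and $\vmin$, or else directly show that $v_i\in N(\vmax)\cap N(\vmin)$ for a whole run of consecutive $i$. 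A short degree/interval-covering computation should make this rigorous.

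For the positive direction, the two sporadic graphs $P_3^2=K_3$ and $P_4^2$ just need explicit small labellings displayed in a figure ($K_3$ already appears in Figure~\ref{fig:gkvl:k1k2k3}; $P_4^2$ is a 4-vertex graph and a quick case check gives one). The substantive part is constructing a gap-vertex-labelling of $P_n^k$ for all $n\ge 5$ and $k<n/2$. Here the governing idea is that when $k<n/2$ the endpoints are "far apart": $v_0$ and $v_{n-1}$ are non-adjacent, and more importantly the interior vertices do not all see the same extreme pair. I would look for a labelling in which the labels increase monotonically along the path, say assigning $v_i$ a label from a Golomb-ruler-like set (using Theorem~\ref{theo:vgnn2}'s machinery, or a simpler explicit arithmetic/geometric sequence adapted to the window size $k$), so that the colour of $v_i$ — which is the gap between the largest-indexed and smallest-indexed neighbour, i.e. governed by $\pi(v_{\min(i+k,n-1)})-\pi(v_{\max(i-k,0)})$ — varies with $i$. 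The condition $k<n/2$ guarantees there are interior vertices whose window $[i-k,i+k]$ does not reach one of the two ends, so their colours differ from those of the end-anchored vertices; and among end-anchored vertices the second coordinate (the other end of the window) still slides, so consecutive vertices get distinct colours. I expect one needs to be a little careful at the boundary region where the window starts to clip an endpoint, and possibly to treat the parity of $n$ and the exact value of $k$ separately, perhaps perturbing a few labels near the middle.

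The main obstacle I anticipate is the positive construction for general $P_n^k$ with $2\le k<n/2$: unlike the negative direction, which is a one-line forcing argument, here one must produce an actual proper induced colouring, and the naive monotone labelling can create collisions between two vertices whose windows have the same "span" (same pair of label values at the two ends of the window) even though the windows sit at different positions — this happens precisely for vertices in the fully-interior region $k\le i\le n-1-k$ if the labels are, say, in arithmetic progression, since then every such vertex has colour equal to the constant $\pi(v_{2k})-\pi(v_0)$-type value. So the labelling must be chosen so that the map $i\mapsto$ (gap over the window of $v_i$) is injective on each colour class dictated by properness, not just monotone; a Golomb-ruler or super-increasing (powers-of-two) choice of labels fixes exactly this, at the cost of checking that adjacent windows — which overlap in $2k-1$ vertices — still yield different gaps, which they do because the two windows differ in their extreme indices. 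I would structure the write-up as: (1) reduce to distinct labels and set up window notation; (2) negative direction via the $N(\vmax)\cap N(\vmin)$ argument, splitting $n\le4$ and $n\ge5,\ k\ge n/2$; (3) positive direction, first the two sporadic graphs by figure, then the explicit labelling for $n\ge5,\ k<n/2$ with the injectivity verification; (4) combine.
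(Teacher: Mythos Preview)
Your proposal is correct and follows essentially the same route as the paper. For the positive direction the paper uses precisely the super-increasing labelling $\pi(v_i)=2^i$ you anticipate, verifying properness by splitting into the fully-interior window region and the two end-clipped regions; for the negative direction the paper does exactly your $N(\vmax)\cap N(\vmin)$ interval argument, first disposing of the cases where $\vmax$ and $\vmin$ are within distance~$3$ on the path by direct inspection, and then using $k\ge n/2\Rightarrow (i+k)-(j-k)+1>1$ to exhibit two consecutive common neighbours.
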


\begin{proof}
    Let $G$ be as stated in the hypothesis, with ${V(G) = \{v_0, v_1, \ldots, v_{n-1}\}}$. We begin by exhibiting a \gvl{} $(\pi, c_\pi)$ of $G$ for the sufficient cases. The first graph in Item~\ref{theo:pop:i} is $P_3^2 \cong K_3$, for which Theorem~\ref{theo:complete} establishes the result. For the second, note that assigning labels $(2, 1, 4, 2)$ to vertices $(v_0, v_1, v_2, v_3)$ induces colours $(3, 2, 1, 3)$, respectively, and a quick inspection of this labelling shows that there are no conflicting colours. 

    Concerning Item~\ref{theo:pop:ii}, let $n \geq 5$ and $k < n/2$. For every $0 \leq i \leq n - 1$, assign $\pi(v_i) = 2^i$. Define colouring $c_\pi$ as usual. Figure~\ref{fig:gvl:p83} illustrates this labelling for graphs $P_8^3$ and $P_9^4$. From herein, we will consider the vertices ordered from left to right in increasing index value so as to simplify some statements in the proof.
    
    \begin{figure}[!htb]
        \centering
        \begin{tikzpicture}
            \node[vertex, cpi] (v0) at ({0*1.25}, 0) {6};
            \node[vertex, cpi] (v1) at ({1*1.25}, 0) {15};
            \node[vertex, cpi] (v2) at ({2*1.25}, 0) {31};
            \node[vertex, cpi] (v3) at ({3*1.25}, 0) {63};
            \node[vertex, cpi] (v4) at ({4*1.25}, 0) {126};
            \node[vertex, cpi] (v5) at ({5*1.25}, 0) {124};
            \node[vertex, cpi] (v6) at ({6*1.25}, 0) {120};
            \node[vertex, cpi] (v7) at ({7*1.25}, 0) {48};

            \draw (v0) to (v1) to (v2) to (v3) to (v4) to (v5) to (v6) to (v7);
            \draw (v0) to[bend left = 40, looseness = 1.1] (v2);
            \draw (v1) to[bend left = 40, looseness = 1.1] (v3);
            \draw (v2) to[bend left = 40, looseness = 1.1] (v4);
            \draw (v3) to[bend left = 40, looseness = 1.1] (v5);
            \draw (v4) to[bend left = 40, looseness = 1.1] (v6);
            \draw (v5) to[bend left = 40, looseness = 1.1] (v7);

            \draw (v0) to[bend left = 50, looseness = 1.1] (v3);
            \draw (v1) to[bend left = 50, looseness = 1.1] (v4);
            \draw (v2) to[bend left = 50, looseness = 1.1] (v5);
            \draw (v3) to[bend left = 50, looseness = 1.1] (v6);
            \draw (v4) to[bend left = 50, looseness = 1.1] (v7);

            \begin{scope}[pi/.append style = {shift = {(0.3, -0.35)}}]
                \node[pi] at (v0) {$2^0$};
                \node[pi] at (v1) {$2^1$};
                \node[pi] at (v2) {$2^2$};
                \node[pi] at (v3) {$2^3$};
                \node[pi] at (v4) {$2^4$};
                \node[pi] at (v5) {$2^5$};
                \node[pi] at (v6) {$2^6$};
                \node[pi] at (v7) {$2^7$};
            \end{scope}

            \begin{scope}[shift = {(-0.625, -3)}]
                \node[vertex, cpi] (v0) at ({0*1.25}, 0) {14};
                \node[vertex, cpi] (v1) at ({1*1.25}, 0) {31};
                \node[vertex, cpi] (v2) at ({2*1.25}, 0) {63};
                \node[vertex, cpi] (v3) at ({3*1.25}, 0) {127};
                \node[vertex, cpi] (v4) at ({4*1.25}, 0) {255};
                \node[vertex, cpi] (v5) at ({5*1.25}, 0) {254};
                \node[vertex, cpi] (v6) at ({6*1.25}, 0) {252};
                \node[vertex, cpi] (v7) at ({7*1.25}, 0) {248};
                \node[vertex, cpi] (v8) at ({8*1.25}, 0) {112};

                \draw (v0) to (v1) to (v2) to (v3) to (v4) to (v5) to (v6) to (v7) to (v8);
                \draw (v0) to[bend left = 40, looseness = 1.1] (v2);
                \draw (v1) to[bend left = 40, looseness = 1.1] (v3);
                \draw (v2) to[bend left = 40, looseness = 1.1] (v4);
                \draw (v3) to[bend left = 40, looseness = 1.1] (v5);
                \draw (v4) to[bend left = 40, looseness = 1.1] (v6);
                \draw (v5) to[bend left = 40, looseness = 1.1] (v7);
                \draw (v6) to[bend left = 40, looseness = 1.1] (v8);

                \draw (v0) to[bend left = 50, looseness = 1.1] (v3);
                \draw (v1) to[bend left = 50, looseness = 1.1] (v4);
                \draw (v2) to[bend left = 50, looseness = 1.1] (v5);
                \draw (v3) to[bend left = 50, looseness = 1.1] (v6);
                \draw (v4) to[bend left = 50, looseness = 1.1] (v7);
                \draw (v5) to[bend left = 50, looseness = 1.1] (v8);

                \draw (v0) to[bend left = 60, looseness = 1.1] (v4);
                \draw (v1) to[bend left = 60, looseness = 1.1] (v5);
                \draw (v2) to[bend left = 60, looseness = 1.1] (v6);
                \draw (v3) to[bend left = 60, looseness = 1.1] (v7);
                \draw (v4) to[bend left = 60, looseness = 1.1] (v8);

                \begin{scope}[pi/.append style = {shift = {(0.3, -0.35)}}]
                    \node[pi] at (v0) {$2^0$};
                    \node[pi] at (v1) {$2^1$};
                    \node[pi] at (v2) {$2^2$};
                    \node[pi] at (v3) {$2^3$};
                    \node[pi] at (v4) {$2^4$};
                    \node[pi] at (v5) {$2^5$};
                    \node[pi] at (v6) {$2^6$};
                    \node[pi] at (v7) {$2^7$};
                    \node[pi] at (v8) {$2^8$};
                \end{scope}
            \end{scope}

        \end{tikzpicture}
        \caption{The \gvl{}s of graphs $P_8^3$ and $P_9^4$, as described in the text.}
        \label{fig:gvl:p83}
    \end{figure}
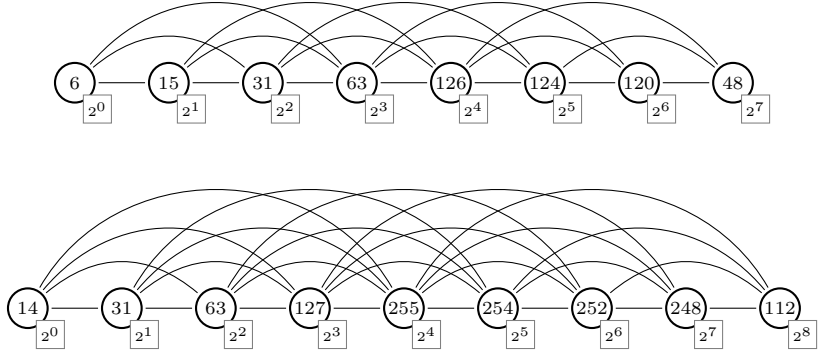 

    In order to prove that $(\pi, c_\pi)$ is a gap-vertex-labelling, it suffices to show that~$c_\pi$ is a proper colouring of $G$. First, we address vertices $v_i$ in the ``middle'' of~$G$, namely with ${i \in [k+1, n - k - 2]}$. For every such $v_i$, its leftmost neighbour is $v_{i-k}$, with ${i - k \geq 1}$, and its rightmost neighbour is $v_{i+k}$, with ${i + k \leq n-2}$. Since the labels are in increasing order (from left to right), it follows that every induced colour ${c_\pi(v_i) = 2^{i+k} - 2^{i-k} = 2^i (2^k - 2^{-k})}$ is distinct. 

    Next, consider vertices $v_j$ with $1 \leq j \leq k$, noting that these are all adjacent to $v_0$. Then, every such $v_j$ has its colours defined by $\pi(v_{j+k}) - \pi(v_0)$. These colours are different from that of their neighbours, whose indices are in ${[k+1, 2k]}$. A similar reasoning applies to the rightmost end of $G$: vertices $v_j$, for ${n - k - 1 \leq j \leq n-2}$, are adjacent to $v_{n-1}$ and, therefore, have $c_\pi(v_j) = 2^{n-1} - 2^{j-k}$. These colours are all distinct amongst themselves and are also different from all that of previously considered vertices. Finally, we draw the reader's attention to the corner cases, namely $c_\pi(v_0) = \pi(v_k) - \pi(v_1) = 2^k - 2$ and $c_\pi(v_{n-1}) = \pi(v_{n-2}) - \pi(v_{n-k-1}) = 2^{n-2} - 2^{n-k-1}$. Since all vertices have their colours induced by the difference between distinct powers of two, every vertex receives a unique colour and, thus, $c_\pi$ is a proper colouring of $G$.

    Now, it remains to show the necessary condition. For $n = 4$, the only graph is $G \cong P_4^3$, for which Theorem~\ref{theo:complete} states there is no \gvl. Thus, we consider $n \geq 5$ and $k \geq n/2$; consequently, $k \ge 3$. We prove this result by contradiction, supposing that there exists a \gvl $(\pi, c_\pi)$ of~$G$ in this case. By Lemma~\ref{lemma:different_labels}, there are two unique vertices which have the maximum and minimum labels in $\pi$; we denote these vertices by $\vmax$ and $\vmin$, respectively. Also, let $i, j$ be their respective indices in the adjusted notation and, without loss of generality, we assume $j > i$. 

    We begin by observing that vertices $\vmax$ and $\vmin$ cannot be ``too close'' to each other, otherwise the result would immediately hold. Formally, we claim that $j - i \ge 4$. To see that this claim is true, we draw the readers attention to Figure~\ref{fig:nge8:case1}. Subfigures (a), (b) and (c) respectively consider the cases where $j - i$, i.e. the number of edges between $\vmin$ and $\vmax$ (strictly) in the path, are equal to 1, 2 and 3. 

    \begin{figure}[!h]
        \centering
        \begin{subfigure}[b]{0.45\textwidth}
            \centering
            \begin{tikzpicture}
                \node (v0)  at (0, 0) {$\ldots$};
                \node[vertex, label = below:{$v_{i - 1}$}] (v1)  at (1, 0) {};
                \node[vertex, label = below:{$\vmax$}] (v2)  at (2, 0) {};
                \node[vertex, label = below:{$\vmin$}] (v3)  at (3, 0) {};
                \node[vertex, label = below:{$v_{j + 1}$}] (v4)  at (4, 0) {};
                \node (v5)  at (5, 0) {$\ldots$};

                \draw (v0) to (v1) to (v2) to (v3) to (v4) to (v5);
                \draw (v0) to[bend left = 45] (v2);
                \draw (v1) to[bend left = 45] (v3);
                \draw (v2) to[bend left = 45] (v4);
                \draw (v3) to[bend left = 45] (v5);

                \draw[orange, thick] (v1) to[bend left = 60, looseness = 1.2] (v4);
            \end{tikzpicture}
            \caption{}\label{fig:nge8:case1:a}
        \end{subfigure}
            \hspace{0.05\textwidth}
        \begin{subfigure}[b]{0.45\textwidth}
            \centering
            \begin{tikzpicture}
                \node (v0)  at (0, 0) {$\ldots$};
                \node[vertex, label = below:{$\vmax$}] (v1)  at (1, 0) {};
                \node[vertex, label = below:{$v_{i+1}$}] (v2)  at (2, 0) {};
                \node[vertex, label = below:{$\vmin$}] (v3)  at (3, 0) {};
                \node[vertex, label = below:{$v_{j+1}$}] (v4)  at (4, 0) {};
                \node (v5)  at (5, 0) {$\ldots$};

                \draw (v0) to (v1) to (v2) to (v3) to (v4) to (v5);
                \draw (v0) to[bend left = 45] (v2);
                \draw (v1) to[bend left = 45] (v3);
                \draw[orange, thick]  (v2) to[bend left = 45] (v4);
                \draw (v3) to[bend left = 45] (v5);

                \draw (v1) to[bend left = 60, looseness = 1.2] (v4);
            \end{tikzpicture}
            \caption{}\label{fig:nge8:case1:b}
        \end{subfigure}

            \vspace{1em}

        \begin{subfigure}[b]{0.45\textwidth}
            \centering
            \begin{tikzpicture}
                \node (v0)  at (0, 0) {$\ldots$};
                \node[vertex, label = below:{$\vmax$}] (v1)  at (1, 0) {};
                \node[vertex, label = below:{$v_{i + 1}$}] (v2)  at (2, 0) {};
                \node[vertex, label = below:{$v_{j - 1}$}] (v3)  at (3, 0) {};
                \node[vertex, label = below:{$\vmin$}] (v4)  at (4, 0) {};
                \node (v5)  at (5, 0) {$\ldots$};

                \draw (v0) to (v1) to (v2); \draw[orange, thick] (v2) to (v3); \draw (v3) to (v4) to (v5);
                \draw (v0) to[bend left = 45] (v2);
                \draw (v1) to[bend left = 45] (v3);
                \draw (v2) to[bend left = 45] (v4);
                \draw (v3) to[bend left = 45] (v5);

                \draw (v1) to[bend left = 60, looseness = 1.2] (v4);
            \end{tikzpicture}
            \caption{}\label{fig:nge8:case1:c}
        \end{subfigure}
        \caption{Cases where the indices of vertices $\vmax$ and $\vmin$ differ in at most 3. The ends of the highlighted edges have conflicting colours.}
        \label{fig:nge8:case1}
    \end{figure}
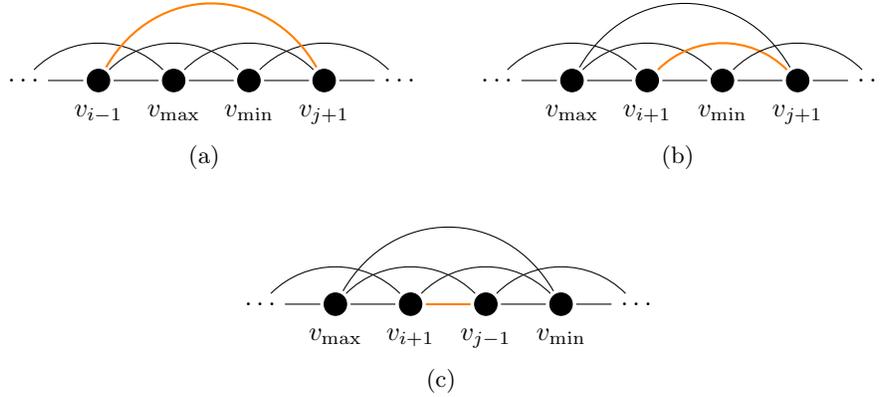

    Now, since $k \ge n/2$, it follows that $n - 1 < 2k$. Furthermore, $j - i \leq n - 1$. This, however, implies that
    \begin{align}
        j - i \leq n - 1 < 2k &\implies i - j +2k > 0\nonumber\\
         % &\implies i - j + 2m + 1 > 1 \nonumber\\
            &\implies (i + k) - (j - k) + 1 > 1.\label{thisfuckingeq}
    \end{align}

    Observe that $i + k$ is the index of the rightmost vertex adjacent to $\vmax$ (according to our left-to-right orientation) and, similarly, $v_{j - k}$ is the leftmost vertex adjacent to $\vmin$, as represented in Figure~\ref{fig:thisfuckingfig}. Moreover, the left side of Equation~\ref{thisfuckingeq} calculates the number of vertices ``between'' $\vmax$ and $\vmin$ in $P_n$. Thus, the inequality shows that there are at least two vertices between $\vmax$ and $\vmin$ in $G$; clearly these vertices are adjacent to both $\vmax$ and $\vmin$ and, therefore, have the same induced colour. This contradicts the fact that $(\pi, c_\pi)$ is a \gvl of $G$, which completes the proof.\qedhere

    \begin{figure}
        \centering
        \begin{tikzpicture}
            \node (v-1) at (1, 0) {$\ldots$};
            \node[vertex, cpi, label = below:{\scriptsize $\vmax$}] (v0) at (2, 0) {};
            \node (v1) at (3, 0) {$\ldots$};
            \node[vertex, cpi, label = below:{\scriptsize $v_{j-k}$}] (v2) at (4, 0) {};
            \node[vertex, cpi, label = below:{\scriptsize $v_{i+k}$}] (v6) at (6, 0) {};
            \node (v7) at (7, 0) {$\ldots$};
            \node[vertex, cpi, label = below:{\scriptsize $\vmin$}] (v8) at (8, 0) {};
            \node (v9) at (9, 0) {$\ldots$};

            \draw (v-1) to (v0) to (v1) to (v2) to node[pos = 0.5, fill = white] {$\ldots$} (v6) to (v7) to (v8) to (v9);
            \draw (v0) to[bend left = 40, looseness = 1.1] (v2);
            \draw (v2) to[bend left = 40, looseness = 1.1] (v6);
            \draw (v6) to[bend left = 40, looseness = 1.1] (v8);

            \draw (v0) to[bend left = 60, looseness = 1.1] (v6);
            \draw (v2) to[bend left = 60, looseness = 1.1] (v8);
            
            \draw[decoration = {brace, mirror}, decorate, thick] (3.5, -0.75) to (6.5, -0.75);
            % \begin{scope}[pi/.append style = {shift = {(0.3, -0.35)}}]
            %     \node[pi] at (v0) {$2^0$};
            %     \node[pi] at (v1) {$2^1$};
            %     \node[pi] at (v2) {$2^2$};
            %     \node[pi] at (v3) {$2^3$};
            %     \node[pi] at (v4) {$2^4$};
            %     \node[pi] at (v5) {$2^5$};
            %     \node[pi] at (v6) {$2^6$};
            %     \node[pi] at (v7) {$2^7$};
            %     \node[pi] at (v8) {$2^8$};
            % \end{scope}
        \end{tikzpicture}
        \caption{A representation of vertices $\vmax$, $\vmin$ and their respective neighbours. The contradiction is highlighted in the centre of the image. Most edges and vertices were omitted.}
        \label{fig:thisfuckingfig}
    \end{figure}
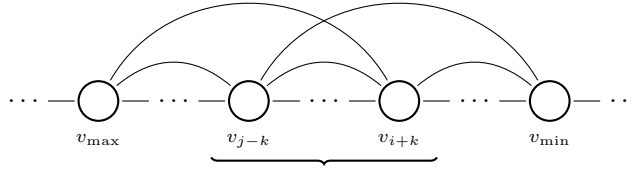
\end{proof} 

Through a similar analysis, we are able to show necessary and sufficient conditions for the family of Powers of Cycles, which are presented in the following theorem.

\begin{theorem}\label{theo:poc}
    Let $G = (V, E)$ be the $k$-th power of cycle $C_n$, with $n \geq 4$ and $2 \leq k < n/2$. Then, $G$ is gap-vertex-labelable if and only if: 
    \begin{enumerate}[label = (\roman*)]
        \item $G$ is isomorphic to $C_6^2$ or $C_7^2$; or \label{theo:poc:i}
        \item $n \geq 8$ and $k \le \lfloor n/4\rfloor$.\label{theo:poc:ii}
    \end{enumerate}
\end{theorem}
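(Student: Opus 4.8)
The plan is to handle both directions by reducing, in the spirit of the proof of Theorem~\ref{theo:pop}, everything to the behaviour of the vertices carrying the extreme labels; the new feature compared with powers of paths is that a cycle has two ``ends'', so the single obstruction of Theorem~\ref{theo:pop} becomes a condition on the \emph{common} neighbourhood of those two vertices. For the sufficient direction, Item~\ref{theo:poc:i} is dispatched by exhibiting explicit gap-vertex-labellings of $C_6^2$ and $C_7^2$ (in the style of Figure~\ref{fig:gkvl:k1k2k3}), whose induced colourings a quick inspection shows to be proper.

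For Item~\ref{theo:poc:ii}, assume $n\ge 8$ and $2\le k\le\lfloor n/4\rfloor$ and write $V(G)=\{v_0,\dots,v_{n-1}\}$ in cyclic order. Put $t=\lfloor n/2\rfloor$ and define $\pi(v_i)=2^{2i}$ for $0\le i\le t$ and $\pi(v_i)=2^{2(n-1-i)+1}$ for $t+1\le i\le n-1$, so that the exponents are pairwise distinct, increase along the arc from $v_0$ to $v_t$, decrease along the complementary arc, and $v_t$ receives the unique largest label; set $\vmax=v_t$, $\vmin=v_0$, and define $c_\pi$ as usual. Since $k\ge 2$, $G$ has minimum degree $2k\ge 4$, so there are no degree-one vertices and every colour is a difference $2^a-2^b$ of distinct powers of two; such a value determines the pair $(a,b)$, so two adjacent vertices conflict only if they share both their largest-labelled and their smallest-labelled neighbour. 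A position case-analysis (in the manner of Theorem~\ref{theo:pop}, but with two transition regions, near $\vmax$ and near $\vmin$, instead of one) shows that for each $v_i$ the largest-labelled neighbour is $\vmax$ when $v_i\sim\vmax$ and otherwise the neighbour closest to $\vmax$ along $v_i$'s own arc, and symmetrically for the smallest-labelled neighbour; one then checks that the \emph{only} way two adjacent vertices can receive equal colours is if both lie in $N(\vmax)\cap N(\vmin)$, in which case both get colour $\pi(\vmax)-\pi(\vmin)$. Hence it suffices to verify that $N(\vmax)\cap N(\vmin)$ is an independent set, and a direct computation gives that, for $t=\lfloor n/2\rfloor$ and $k\le\lfloor n/4\rfloor$, this set has at most two elements, and if it has two they are $v_k$ and $v_{n-k}$, at cyclic distance $n-2k>k$; in all cases it is independent, so $c_\pi$ is proper.

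For the necessary direction, argue the contrapositive. If $G$ is neither graph in Item~\ref{theo:poc:i} nor of the form in Item~\ref{theo:poc:ii}, then either $n\le 7$ --- in which case $G$ is $C_5^2\cong K_5$ or $C_7^3\cong K_7$, both non-gap-vertex-labelable by Theorem~\ref{theo:complete} --- or $n\ge 8$ and $\lfloor n/4\rfloor<k<n/2$, which already forces $k\ge 3$. Suppose, for contradiction, that such a $G$ admits a \gvl $(\pi,c_\pi)$. By Lemma~\ref{lemma:different_labels} there are unique vertices $\vmax,\vmin$ carrying the largest and smallest labels, and --- independently of the rest of $\pi$ --- every vertex of $N(\vmax)\cap N(\vmin)$ receives colour $\pi(\vmax)-\pi(\vmin)$, so $N(\vmax)\cap N(\vmin)$ must be independent. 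Let $d$ be the cyclic distance between $\vmax$ and $\vmin$; then $1\le d\le\lfloor n/2\rfloor$, and since $k\ge\lfloor n/4\rfloor+1$ one has $2k>\lfloor n/2\rfloor\ge d$, so in fact $1\le d\le 2k-1$. Placing $\vmax$ at position $0$ and $\vmin$ at position $d$, the common neighbours lying on the shorter arc between them are exactly $v_{\max(1,d-k)},\dots,v_{\min(d-1,k)}$, which for $3\le d\le 2k-1$ is a block of at least two consecutive, hence mutually adjacent, vertices; and for $d\in\{1,2\}$ one adjoins a further common neighbour on a suitable side (possible since $k\ge 3$) to obtain again two adjacent common neighbours. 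So in every case $N(\vmax)\cap N(\vmin)$ contains an edge, contradicting properness of $c_\pi$.

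The step I expect to be the main obstacle is the verification, in the construction of Item~\ref{theo:poc:ii}, that \emph{no} conflict other than the forced one on $N(\vmax)\cap N(\vmin)$ can arise: the guiding principle is clean, but making it rigorous requires careful bookkeeping of the two transition regions --- where a vertex's extreme neighbour switches between an arc-neighbour and $\vmax$ or $\vmin$ itself --- and of the ``wrap-around'' coincidences that could in principle make two such extreme neighbours collide. It is precisely matching the requirement of that construction (that $N(\vmax)\cap N(\vmin)$ be independent under a near-antipodal placement, which needs $k\le\lfloor n/4\rfloor$) against the obstruction in the necessity argument (that $N(\vmax)\cap N(\vmin)$ unavoidably contains an edge once $k>\lfloor n/4\rfloor$) that pins the threshold at exactly $\lfloor n/4\rfloor$.
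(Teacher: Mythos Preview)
Your proposal is correct and follows essentially the same line as the paper: a labelling by distinct powers of two that increases along each arc toward a near-antipodal $\vmax$ for sufficiency, and for necessity the observation that $N(\vmax)\cap N(\vmin)$ must be independent yet unavoidably contains an edge once $k>\lfloor n/4\rfloor$. The only organisational difference is that the paper verifies sufficiency by a direct three-case analysis (vertices adjacent to $\vmin$, to $\vmax$, or to neither) showing all induced colours are pairwise distinct, rather than first reducing to your cleaner statement that every conflict must lie in $N(\vmax)\cap N(\vmin)$; the transition-region bookkeeping you flag as the main obstacle is exactly the content of that case analysis.
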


\begin{proof}
    Let $G$ be as stated in the hypothesis, with $V(G) = \{v_0, \ldots, v_{n-1}\}$. We begin by considering graphs with $4 \le n \le 7$ vertices. Note that since $C_4^2 \cong K_4$, $C_5^2 \cong K_5$, $C_6^3 \cong K_6$ and $C_7^3 \cong K_7$, Theorem~\ref{theo:complete} establishes that these graphs are non-gap-vertex-labelable. Thus, for the considered values of $n$, it remains to show that $C_6^2$ and $C_7^2$ admit gap-vertex-labellings; these are shown in Figure~\ref{fig:gvl:c62c72}. %We quickly remark that graph $C_6^2$ is isomorphic to complete graph$K_6$ with a perfect matching removed, and that this particular graph is  in the following section.

    \begin{figure}[!htb]
        \begin{subfigure}[b]{0.45\textwidth}
            \centering
            \begin{tikzpicture}
                \def \radius {2};
                \node[vertex, cpi] (v0) at (60:\radius) {2};
                \node[vertex, cpi] (v1) at (120:\radius) {3};
                \node[vertex, cpi] (v2) at (180:\radius) {1};
                \node[vertex, cpi] (v3) at (240:\radius) {2};
                \node[vertex, cpi] (v4) at (300:\radius) {3};
                \node[vertex, cpi] (v5) at (0:\radius) {1};

                \draw (v0) to (v1) to (v2) to (v3) to (v4) to (v5) to (v0);
                \draw (v0) to[bend left = 15] (v2);
                \draw (v1) to[bend left = 15] (v3);
                \draw (v2) to[bend left = 15] (v4);
                \draw (v3) to[bend left = 15] (v5);
                \draw (v4) to[bend left = 15] (v0);
                \draw (v5) to[bend left = 15] (v1);

                \begin{scope}[pi/.append style = {shift = {(0.22, -0.22)}}]
                    \node[pi] at (v0) {1};
                    \node[pi] at (v1) {2};
                    \node[pi] at (v2) {4};
                    \node[pi] at (v3) {1};
                    \node[pi] at (v4) {2};
                    \node[pi] at (v5) {4};
                \end{scope}
            \end{tikzpicture} 
            \caption{}
            \label{fig:gvl:c62}
        \end{subfigure}
            \hspace{0.05\textwidth}
        \begin{subfigure}[b]{0.45\textwidth}
            \centering
            \begin{tikzpicture}
                \def \radius {2};
                \begin{scope}[rotate = 90]
                    \node[vertex, cpi] (v0) at ({(360/7)*0}:\radius) {6};
                    \node[vertex, cpi] (v1) at ({(360/7)*1}:\radius) {3};
                    \node[vertex, cpi] (v2) at ({(360/7)*2}:\radius) {7};
                    \node[vertex, cpi] (v3) at ({(360/7)*3}:\radius) {4};
                    \node[vertex, cpi] (v4) at ({(360/7)*4}:\radius) {2};
                    \node[vertex, cpi] (v5) at ({(360/7)*5}:\radius) {3};
                    \node[vertex, cpi] (v6) at ({(360/7)*6}:\radius) {7};
                \end{scope}

                \draw (v0) to (v1) to (v2) to (v3) to (v4) to (v5) to (v6) to (v0);
                \draw (v0) to[bend left = 15] (v2);
                \draw (v1) to[bend left = 15] (v3);
                \draw (v2) to[bend left = 15] (v4);
                \draw (v3) to[bend left = 15] (v5);
                \draw (v4) to[bend left = 15] (v6);
                \draw (v5) to[bend left = 15] (v0);
                \draw (v6) to[bend left = 15] (v1);

                \begin{scope}[pi/.append style = {shift = {(0.22, -0.22)}}]
                    \node[pi] at (v0) {$1$};
                    \node[pi] at (v1) {$8$};
                    \node[pi] at (v2) {$4$};
                    \node[pi] at (v3) {$4$};
                    \node[pi] at (v4) {$4$};
                    \node[pi] at (v5) {$4$};
                    \node[pi] at (v6) {$2$};
                \end{scope}
            \end{tikzpicture} 
            \caption{}
            \label{fig:gvl:c72}
        \end{subfigure}
        \caption{Gap-vertex-labellings of graphs $C_6^2$ and $C_7^2$ in (a) and (b), respectively.}
        \label{fig:gvl:c62c72}
    \end{figure}
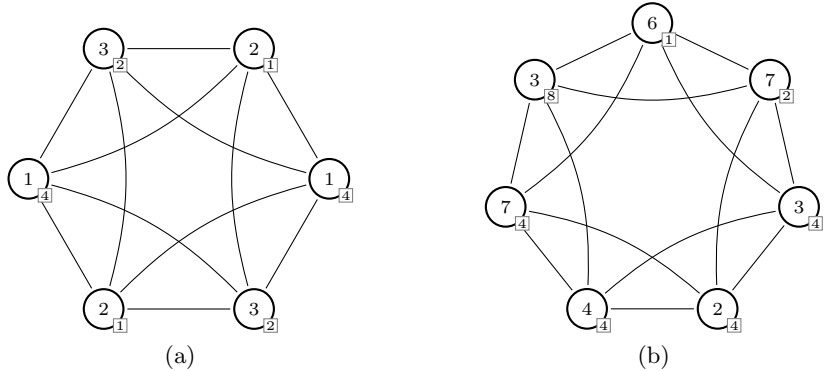

    Next, consider $G \cong C_n^k$ with $n \ge 8$ and $2 \le k \le \lfloor n/4\rfloor$. We show $G$ is gap-vertex-labelable by defining a labelling $\pi$ of $V(G)$ as follows: for every vertex $v_i$, $0 \le i < n/2$, let $\pi(v_i) = 2^i$ (we will refer to these vertices as vertices in the \emph{right side} of the cycle); and for the remaining vertices $v_j$, $j \ge n/2$ (in the \emph{left side} of the cycle), let $\pi(v_j) = 2^{\lceil n/2\rceil + (n - j)}$. Define colouring $c_\pi$ as usual. This labelling and its induced colouring are exemplified for graphs $C_{8}^2$ and $C_{9}^2$ in Figure~\ref{fig:gvl:c82c92}. 

    \begin{figure}[!htb]
        \centering
        \begin{subfigure}[b]{0.45\textwidth}
            \begin{tikzpicture}[rotate = 90, yscale = -1]
                \def \radius {2.5};
                \node[vertex, cpi, label = {\scriptsize $\vmin$}] (v0) at ({45*0}:\radius) {$60$};
                \node[vertex, cpi] (v1) at ({45*1}:\radius) {$31$};
                \node[vertex, cpi] (v2) at ({45*2}:\radius) {$255$};
                \node[vertex, cpi] (v3) at ({45*3}:\radius) {$254$};
                \node[vertex, cpi, label = below:{\scriptsize $\vmax$}] (v4) at ({45*4}:\radius) {$124$};
                \node[vertex, cpi] (v5) at ({45*5}:\radius) {$248$};
                \node[vertex, cpi] (v6) at ({45*6}:\radius) {$255$};
                \node[vertex, cpi] (v7) at ({45*7}:\radius) {$127$};

                \draw (v0) to (v1) to (v2) to (v3) to (v4) to (v5) to (v6) to (v7) to (v0);
                \draw (v0) to[bend left = 30] (v2);
                \draw (v2) to[bend left = 30] (v4);
                \draw (v4) to[bend left = 30] (v6);
                \draw (v6) to[bend left = 30] (v0);
                \draw (v1) to[bend left = 30] (v3);
                \draw (v3) to[bend left = 30] (v5);
                \draw (v5) to[bend left = 30] (v7);
                \draw (v7) to[bend left = 30] (v1);
            
                \begin{scope}[pi/.append style = {shift = {(0.3, -0.3)}}]
                    \node[pi] at (v0) {$2^0$};
                    \node[pi] at (v1) {$2^1$};
                    \node[pi] at (v2) {$2^2$};
                    \node[pi] at (v3) {$2^3$};
                    \node[pi] at (v4) {$2^8$};
                    \node[pi] at (v5) {$2^7$};
                    \node[pi] at (v6) {$2^6$};
                    \node[pi] at (v7) {$2^5$};
                \end{scope}
            \end{tikzpicture}
            \caption{}\label{fig:gvl:c82}
        \end{subfigure}
            \hspace{0.05\textwidth}
        \begin{subfigure}[b]{0.45\textwidth}
            \begin{tikzpicture}[rotate = 90, yscale = -1]
                \def \radius {2.5};
                \node[vertex, cpi, label = {\scriptsize $\vmin$}] (v0) at ({40*0}:\radius) {124};
                \node[vertex, cpi] (v1) at ({40*1}:\radius) {63};
                \node[vertex, cpi] (v2) at ({40*2}:\radius) {15};
                \node[vertex, cpi] (v3) at ({40*3}:\radius) {510};
                \node[vertex, cpi] (v4) at ({40*4}:\radius) {508};
                \node[vertex, cpi, label = below:{\scriptsize $\vmax$}] (v5) at ({40*5}:\radius) {248};
                \node[vertex, cpi] (v6) at ({40*6}:\radius) {496};
                \node[vertex, cpi] (v7) at ({40*7}:\radius) {511};
                \node[vertex, cpi] (v8) at ({40*8}:\radius) {255};

                \draw (v0) to (v1) to (v2) to (v3) to (v4) to (v5) to (v6) to (v7) to (v8) to (v0);
                \draw (v0) to[bend left = 30] (v2);
                \draw (v2) to[bend left = 30] (v4);
                \draw (v4) to[bend left = 30] (v6);
                \draw (v6) to[bend left = 30] (v8);
                \draw (v8) to[bend left = 30] (v1);
                \draw (v1) to[bend left = 30] (v3);
                \draw (v3) to[bend left = 30] (v5);
                \draw (v5) to[bend left = 30] (v7);
                \draw (v7) to[bend left = 30] (v0);

                \begin{scope}[pi/.append style = {shift = {(0.3, -0.3)}}]
                    \node[pi] at (v0) {$2^0$};
                    \node[pi] at (v1) {$2^1$};
                    \node[pi] at (v2) {$2^2$};
                    \node[pi] at (v3) {$2^3$};
                    \node[pi] at (v4) {$2^4$};
                    \node[pi] at (v5) {$2^9$};
                    \node[pi] at (v6) {$2^8$};
                    \node[pi] at (v7) {$2^7$};
                    \node[pi] at (v8) {$2^6$};
                \end{scope}
            \end{tikzpicture}
            \caption{}\label{fig:gvl:c92}
        \end{subfigure}
        \caption{Examples of gap-vertex-labellings, as described in the text, for graphs $C_8^2$ and $C_9^2$ in (a) and (b), respectively. We name the vertices which received the largest and smallest labels in $\pi$ as $\vmax$ and $\vmin$, respectively.}
        \label{fig:gvl:c82c92}
    \end{figure}
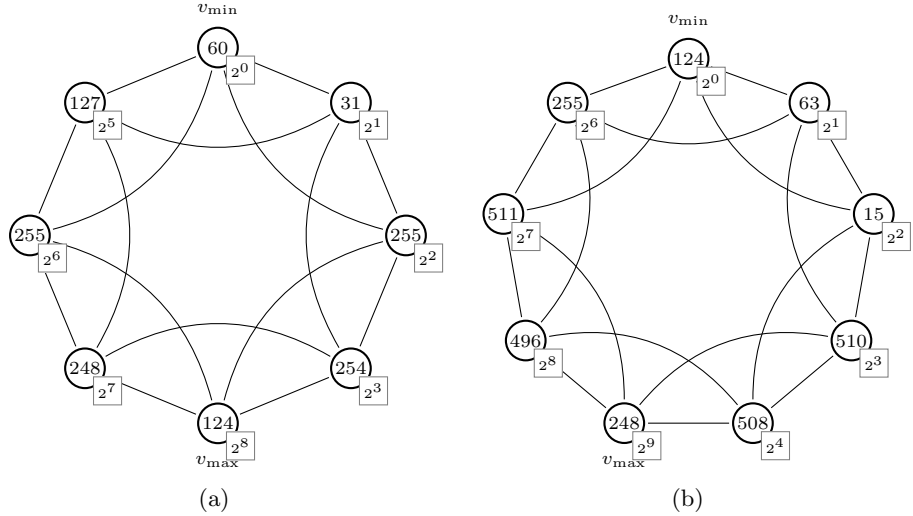

    We prove that $(\pi, c_\pi)$ is a gap-vertex-labelling of $G$ by showing that $c_\pi$ is a proper colouring of $G$. From herein, all operations on the \underline{indices} of the vertices are taken modulo $n$. Note that this has no effect when considering the values of the \underline{labels} assigned to these vertices. Also, we remark that the smallest and highest value labels are assigned to vertices $v_0$ and $v_{\lceil n/2\rceil}$, respectively, and we will refer to these vertices as $\vmin$ and $\vmax$. We consider three different cases, depending on the indices of the vertices.

    % vizinhos de v_0 /vmin
    \paragraph{Case (i)} First, consider vertices $v_1, v_2, \ldots, v_k$, observing that these vertices are all adjacent to $\vmin$. For every such vertex $v_i$, its colour is induced by $v_0$ and their furthest neighbour in the left side of the cycle, which receives label $2^{n-k + i}$. The same reasoning can be applied to vertices $v_{n-k}, v_{n-k+1}, \ldots, v_{n-1}$, which are also adjacent to $\vmin$: their colours are induced by their furthest neighbour in the left side (also) and by $\vmin$. Since these induced colours are all induced by the difference between distinct powers of two, no two vertices in these ranges receive the same induced colour.

    % vizinhos de v_{floor(n/2) + 1} /vmax
    \paragraph{Case (ii)} Next, consider vertices $v_{\lceil n/2\rceil - 1}, \ldots, v_{\lceil n/2\rceil - k}$, which are located in the right side of the cycle and are all adjacent to $\vmax$. Their colours are induced by $\vmax$ and by their furthest neighbour in the right side, which receive decreasingly smaller labels. Thus, for every such vertex $v_i$, its induced colour is $2^{n} - 2^{i - k}$. Once again, a similar reasoning applies to the vertices in the left side of the cycle that are adjacent to $\vmax$, namely vertices $v_{\lceil n/2\rceil + 1}, \ldots, v_{\lceil n/2\rceil + k}$. Given that all colours are induced by distinct differences of powers of two, their induced colours are all different.

    % vertices restantes
    \paragraph{Case (iii)} Finally, we consider the remaining vertices in the right and left sides of $G$. These are vertices $v_{k+1}, \ldots, v_{\lceil n/2\rceil - (k+1)}$ and $v_{\lceil n/2\rceil + k + 1}, \ldots$, $v_{n - (k+1)}$, respectively. Since these vertices are not adjacent to either $\vmax$ or $\vmin$, it follows that the colour induced in these vertices already differs from those considered in cases (i) and (ii). Then, for every such vertex $v_i$, its colour is induced by $2^{i+k} - 2^{i-k}$ and, therefore, they are all distinct amongst themselves. We conclude that no two vertices receive the same colour under $\pi$ and, therefore, $(\pi, c_\pi)$ is a gap-vertex-labelling of $G$. 

    Thus, in order to complete the proof of the theorem, it remains only to show the necessary condition. We accomplish this by considering $n \ge 8$ and, in order to achieve a contradiction, suppose $k > \lfloor n/4\rfloor$. Observe that ${3 \le k \le \lfloor n/2\rfloor}$ since $n \ge 8$. Now, suppose $(\pi, c_\pi)$ is a \gvl of $G$ and, by Lemma~\ref{lemma:different_labels}, let $\vmax$ and $\vmin$ be the vertices of $G$ which received the largest and smallest labels in $\pi$, respectively. Also, as was done in the proof of Theorem~\ref{theo:pop}, let $i$ and $j$ be the respective indices of $\vmax$ and $\vmin$; without loss of generality, let $i < j$. Once again, vertices $\vmax$ and $\vmin$ cannot be too close to each other, and we safely assume $j - i \ge 4$.

    Considering the symmetries of the graph, it follows that $j - i \le \lfloor n/2\rfloor \le  n/2$. Since $k > \lfloor n/4\rfloor$ and $k$ is integer, we can safely assume $k \ge \lfloor n/4\rfloor + 1 \ge  n/4$. Thus, $2k \ge  n/2$ and, similar to the proof of Theorem~\ref{theo:pop}, we have
    \begin{align}
        j - i \le n/2 < 2k &\implies i - j + 2k > 0\nonumber\\
            &\implies (i + k) - (j - k) + 1 > 1.\label{eq:poc:i}
    \end{align}

    Once again, $(i + k)$ and $(j - k)$ are the respective indices of the furthest vertices in $G$ that are adjacent to $\vmax$ and $\vmin$, and that are located ``in between'' these two vertices in $V(G)$. Furthermore, equation~\eqref{eq:poc:i} states that there are at least two vertices in between $v_{i + k}$ and $v_{j - k}$ which are adjacent to each other and to both $\vmax$ and $\vmin$, implying they have the same induced colour. Thus, $c_\pi$ is not a proper colouring of $G$ --- a contradiction.\qedhere
\end{proof}

\section{The \texorpdfstring{gap-strength}{} of complete graphs}\label{sec:gapstr}
    %!TEX root = main.tex
In this section, we present a novel approach to non-gap-vertex-labelable graphs, recalling that such is the case for complete graphs $K_n$ of order $n \ge 4$; we will consider solely these graphs for the rest of the paper. 

As illustrated in the proofs in Section~\ref{sec:ngkvl}, a graph lacks any gap-vertex-labelling if there are two vertices whose designated labels would define the induced colour of two adjacent vertices in their joint neighbourhood. For complete graph $K_4$, this situation is illustrated in the left side of Figure~\ref{fig:k4k4m1}: vertices $\vmax$ and $\vmin$ would define the colours of both $u$ and $v$, which leads to a conflict. Notice, however, that removing edge $uv$ in this graph would resolve this conflict, and thus the resulting graph admits a \gkvl{4}. 

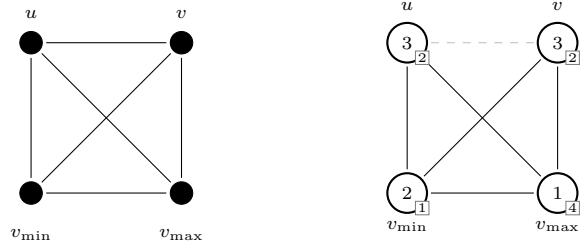
\begin{figure}[htb]
    \centering
    \begin{tikzpicture}
        \node[vertex, label = {[shift = {(0, -0.95)}]\scriptsize{$\vmin$}}] (v0) at (0, 0) {};
        \node[vertex, label = {\scriptsize{$u$}}] (v1) at (0, 2) {};
        \node[vertex, label = {\scriptsize{$v$}}] (v2) at (2, 2) {};
        \node[vertex, label = {[shift = {(0, -0.95)}]\scriptsize{$\vmax$}}] (v3) at (2, 0) {};
        \draw (v0) to (v1) to (v2) to[thick] (v3) to (v0);
        \draw (v0) to (v2); \draw (v1) to (v3);

        \node[vertex, cpi, label = {[shift = {(0, -0.95)}]\scriptsize{$\vmin$}}] (v4) at (5, 0) {$2$};
        \node[vertex, cpi, label = {\scriptsize{$u$}}] (v5) at (5, 2) {$3$};
        \node[vertex, cpi, label = {\scriptsize{$v$}}] (v6) at (7, 2) {$3$};
        \node[vertex, cpi, label = {[shift = {(0, -0.95)}]\scriptsize{$\vmax$}}] (v7) at (7, 0) {$1$};

        \draw (v4) to (v5); \draw (v6) to (v7) to (v4);
        \draw (v4) to (v6); \draw (v5) to (v7);
        \draw (v5) edge[gray!50, dashed] (v6);

        \begin{scope}[pi/.append style = {shift = {(0.2, -0.2)}}]
            \node[pi] at (v4) {\tiny{$1$}};
            \node[pi] at (v7) {\tiny{$4$}};
            \node[pi] at (v5) {\tiny{$2$}};
            \node[pi] at (v6) {\tiny{$2$}};
        \end{scope}
    \end{tikzpicture}
    \caption{Graph $K_4$ and the graph obtained by removing edge $uv$ from $K_4$; to the right, a \gkvl{4} of the latter.}
    \label{fig:k4k4m1}
\end{figure}

Hence, by removing one edge from $K_4$, we have a gap-vertex-labelable graph. Could this simple operation be enough for complete graph $K_5$ as well? Consider the graph $G'$ obtained by removing an arbitrary edge from $K_5$, which is depicted in Figure~\ref{fig:k5m1}. Suppose this graph admits a \gvl and let $\vmax$ be an arbitrary vertex which receives the largest label. Now, if $\vmin$ is adjacent to $\vmax$, as illustrated in Figure~\ref{fig:k5m1:a}, then the endpoints of the highlighted edges have both $\vmax$ and $\vmin$ in their respective neighbourhoods. This implies that, regardless of the labels assigned to these vertices, they all have the same induced colour $\pi(\vmax) - \pi(\vmin)$. Therefore, $\vmax$ and $\vmin$ are not adjacent. This second case is illustrated in Figure~\ref{fig:k5m1:b} and, once again, the highlighted edges indicate three vertices which have the same induced colour. Therefore, this graph does not admit a \gvl, and we conclude that removing only one edge is not enough for this particular graph.

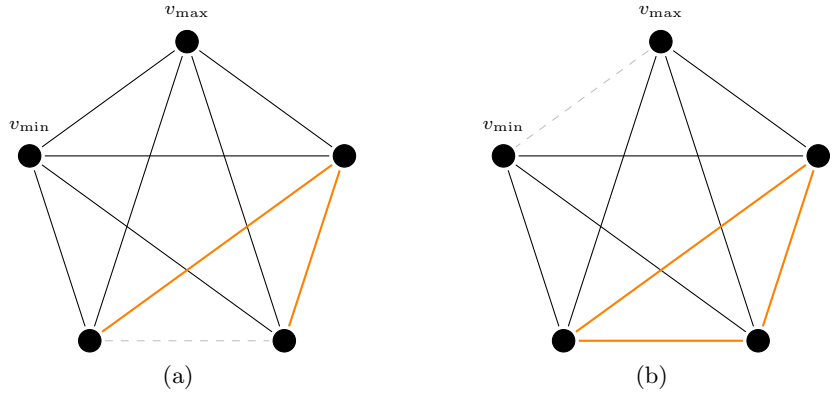
\begin{figure}[htb]
    \centering
    \begin{subfigure}{0.45\textwidth}
        \centering
        \begin{tikzpicture}
            \node[vertex, label = {\scriptsize $\vmax$}] (v0) at (90:2.2) {};
            \node[vertex, label = {\scriptsize $\vmin$}] (v1) at (162:2.2) {};
            \node[vertex] (v2) at (234:2.2) {};
            \node[vertex] (v3) at (306:2.2) {};
            \node[vertex] (v4) at (18:2.2) {};

            \draw (v0) edge (v1) edge (v2) edge (v3) edge (v4);
            \draw (v1) edge (v2) edge (v3) edge (v4);
            \draw (v2) edge[gray!50, dashed] (v3) edge[thick, orange] (v4);
            \draw (v3) edge[thick, orange] (v4);
        \end{tikzpicture}
        \caption{}
        \label{fig:k5m1:a}
    \end{subfigure}
        \hspace{0.05\textwidth}
    \begin{subfigure}{0.45\textwidth}
        \centering
        \begin{tikzpicture}
            \node[vertex, label = {\scriptsize $\vmax$}] (v0) at (90:2.2) {};
            \node[vertex, label = {\scriptsize $\vmin$}] (v1) at (162:2.2) {};
            \node[vertex] (v2) at (234:2.2) {};
            \node[vertex] (v3) at (306:2.2) {};
            \node[vertex] (v4) at (18:2.2) {};

            \draw (v0) edge[gray!50, dashed] (v1) edge (v2) edge (v3) edge (v4);
            \draw (v1) edge (v2) edge (v3) edge (v4);
            \draw (v2) edge[thick, orange] (v3) edge[thick, orange] (v4);
            \draw (v3) edge[thick, orange] (v4);
        \end{tikzpicture}
        \caption{}
        \label{fig:k5m1:b}
    \end{subfigure}
    \caption{Graph $K_5$ without an edge. In (a), $\vmax$ and $\vmin$ are adjacent, while this is not the case in (b).}
    \label{fig:k5m1}
\end{figure}

This raises an immediate follow-up question: what if we remove two edges from $K_5$? Here, we note that two distinct graphs can be obtained by this operation: the first, by removing a maximum matching of $K_5$; and the second, by removing two adjacent edges. These graphs are illustrated in figures~\ref{fig:k5m2:a} and~\ref{fig:k5m2:b}, respectively, and note that both graphs admit \gvl{}s. 

\begin{figure}[htb]
    \centering
    \begin{subfigure}{0.45\textwidth}
        \centering
        \begin{tikzpicture}
            \node[vertex, cpi] (v0) at (90:2.2) {$2$};
            \node[vertex, cpi] (v1) at (162:2.2) {$3$};
            \node[vertex, cpi] (v2) at (234:2.2) {$3$};
            \node[vertex, cpi] (v3) at (306:2.2) {$1$};
            \node[vertex, cpi] (v4) at (18:2.2) {$1$};

            \draw (v0) edge (v1) edge (v2) edge (v3) edge (v4);
            \draw (v1) edge[gray!50, dashed] (v2) edge (v3) edge (v4);
            \draw (v2) edge (v3) edge (v4);
            \draw (v3) edge[gray!50, dashed] (v4);

            \node[pi] at ([shift = {(0.22, -0.22)}]v0) {$1$};
            \node[pi] at ([shift = {(0.22, -0.22)}]v1) {$2$};
            \node[pi] at ([shift = {(0.22, -0.22)}]v2) {$2$};
            \node[pi] at ([shift = {(0.22, -0.22)}]v3) {$4$};
            \node[pi] at ([shift = {(0.22, -0.22)}]v4) {$4$};
        \end{tikzpicture}
        \caption{}
        \label{fig:k5m2:a}
    \end{subfigure}
        \hspace{0.03\textwidth}
    \begin{subfigure}{0.3\textwidth}
        \centering
        \begin{tikzpicture}
            \node[vertex, cpi] (v0) at (90:2.2) {$5$};
            \node[vertex, cpi] (v1) at (162:2.2) {$7$};
            \node[vertex, cpi] (v2) at (234:2.2) {$8$};
            \node[vertex, cpi] (v3) at (306:2.2) {$3$};
            \node[vertex, cpi] (v4) at (18:2.2) {$5$};

            \draw (v0) edge[gray!50, dashed] (v1) edge (v2) edge (v3) edge[gray!50, dashed] (v4);
            \draw (v1) edge (v2) edge (v3) edge (v4);
            \draw (v2) edge (v3) edge (v4);
            \draw (v3) edge (v4);

            \node[pi] at ([shift = {(0.22, -0.22)}]v0) {$1$};
            \node[pi] at ([shift = {(0.22, -0.22)}]v1) {$4$};
            \node[pi] at ([shift = {(0.22, -0.22)}]v2) {$4$};
            \node[pi] at ([shift = {(0.22, -0.22)}]v3) {$9$};
            \node[pi] at ([shift = {(0.22, -0.22)}]v4) {$2$};
        \end{tikzpicture}
        \caption{}
        \label{fig:k5m2:b}
    \end{subfigure}
    \caption{The graphs obtained by: (a) removing a maximal matching of $K_5$; and (b) removing two adjacent edges. Both graphs admit \gvl{}s, which are shown in the figure.}
    \label{fig:k5m2}
\end{figure}
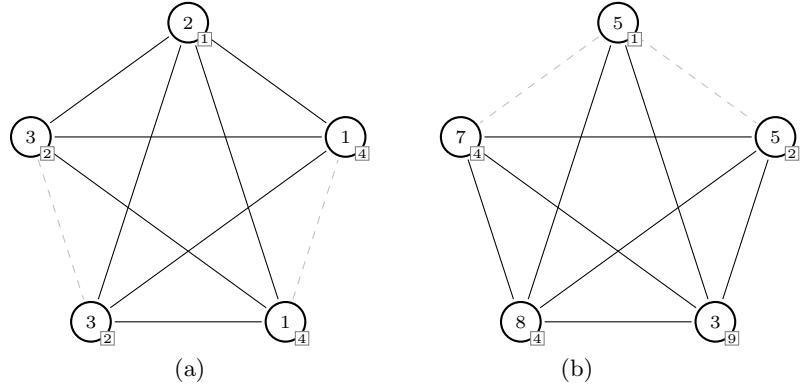

The logical course of action would be to analyse graph $K_6$, and what happens when we remove edges from this graph. By inspecting the graphs obtained by removing one and (any) two edges from $K_6$, we observe that none of these graphs admit a \gvl{}. (Similarly to $K_5$, this conclusion is reached upon inspecting the possible combinations of $\vmax$ and $\vmin$ within the resulting graphs.) However, by removing a perfect matching from $K_6$, we obtain the graph depicted in Figure~\ref{fig:gvl:c62}, which does, in fact, admit a \gkvl{4}. 

Since we are focusing our studies on non-gap-vertex-labelable graphs, the main question to be asked here is: how many edges can one remove from a complete graph $K_n$ such that the resulting graph is \emph{still} non-gap-vertex-labelable. Or, equivalently, what is the least number $l$ of edges that must be removed from~$K_n$ such that there exists some resulting graph which admits a \gvl? We address these questions in this section.

Formally, for a graph $G$, let us define $G^{-l}$ as the family of graphs obtained by removing $l$ edges from $G$, in no particular order. As examples: the rightmost graph in Figure~\ref{fig:k4k4m1} exemplifies the (only) graph in $K_4^{-1}$, while both graphs in Figure~\ref{fig:k5m2} belong to $K_5^{-2}$. Furthermore, we know that no graph in $K_6^{-1}$ or in $K_6^{-2}$ admits a \gvl, whereas there exists a graph in $K_6^{-3}$ which does. It is important to remark that this is not the case for \underline{every} graph in $K_6^{-3}$; for example, the graph obtained by removing three adjacent edges from $K_6$ is non-gap-vertex-labelable. 

With this definition, we are ready to introduce a novel problem associated with non-gap-vertex-labelable graphs and, with it, a novel parameter.

\begin{problem}\label{problem:gapstr}
    \problemtitle{\underline{Gap-strength}}
    \probleminstance{A non-gap-vertex-labelable graph $G$.}
    \problemquestion{What is the least $l$ for which there exists a graph $G' \in G^{-l}$ such that $G'$ is gap-vertex-labelable?}
\end{problem} 

We define the least $l$ that answers the above problem as the \emph{gap-strength} of~$G$, and we denote this parameter by $\gapstr(G)$. To illustrate why we decided to name this parameter using ``strength'' as the keyword, consider graph $K_6$. This graph is sufficiently \emph{strong} that the removal of two edges is not enough to create a gap-vertex-labelable graph. Therefore, graph $K_6$ is relatively ``stronger'' than~$K_4$, for example, since we require the removal of more edges from the former in order to create a gap-vertex-labelable graph. Similarly, by comparing~$K_5$ and~$K_6$, we conclude that $K_5$ is relatively ``weaker''. 

In the following subsections, we provide bounds for the gap-strength of complete graphs $K_n$, $n \geq 4$. We investigate a lower-bound in Section~\ref{subsec:lb} by using a Dynamic Programming algorithm and formalize the result through an algebraic analysis of the recurrence formula, which yields that $\gapstr(K_n) \in \Omega(n^{6/5})$. In Section~\ref{subsec:ub}, we remove edges in such a way to form what we call a restricted decomposition of $K_n$, showing that the removal of $\O(n^{3/2})$ edges is sufficient for the resulting graph to be gap-vertex-labelable. 

\subsection{A lower-bound on the gap-strength of $K_n$}\label{subsec:lb}

    % !TEX root = main.tex

In order to establish bounds on the gap-strength of complete graphs $K_n$, ${n \ge 4}$, let us consider a graph $G\in K_n^{-l}$, with ${l = \gapstr(K_n)}$, and let $(\pi, c_\pi)$ be a \gvl of $G$. By Lemma~\ref{lemma:different_labels}, we can safely assume that there are two vertices $\vmax, \vmin \in V(G)$ which have received the largest and smallest labels in $\pi$, respectively. By observing every vertex $v \in V(G) \setminus \{\vmax, \vmin\}$, we can conclude that only one out of four cases occurs: 

\begin{enumerate}
    \item[(i)] $\vmax, \vmin \in N(v)$; or
    \item[(ii)] $\vmax \in N(v)$ and $\vmin \not\in N(v)$; or, conversely
    \item[(iii)] $\vmin \in N(v)$ and $\vmax \not\in N(v)$; or, finally
    \item[(iv)] $\vmax, \vmin \not\in N(v)$.
\end{enumerate}

We will denote the sets of vertices that satisfy cases (i), (ii), (iii) and~(iv) by $\I, \X, \Y$ and $\Z$, respectively. Note that these sets in conjunction with $\{\vmin, \vmax\}$ partition $V(G)$, and thus we will refer to this partition as a \emph{decomposition} of $G$, denoting it by $G(\X, \Y, \Z, \I)$. Also, define $x = |\X|$, $y = |\Y|$, $z = |\Z|$ and $i = |\I|$, observing that the order of the graph can be determined by $n = x + y + z + i + 2$. An illustration of a decomposition of $G$ is presented in Figure~\ref{fig:g-l:layout}.

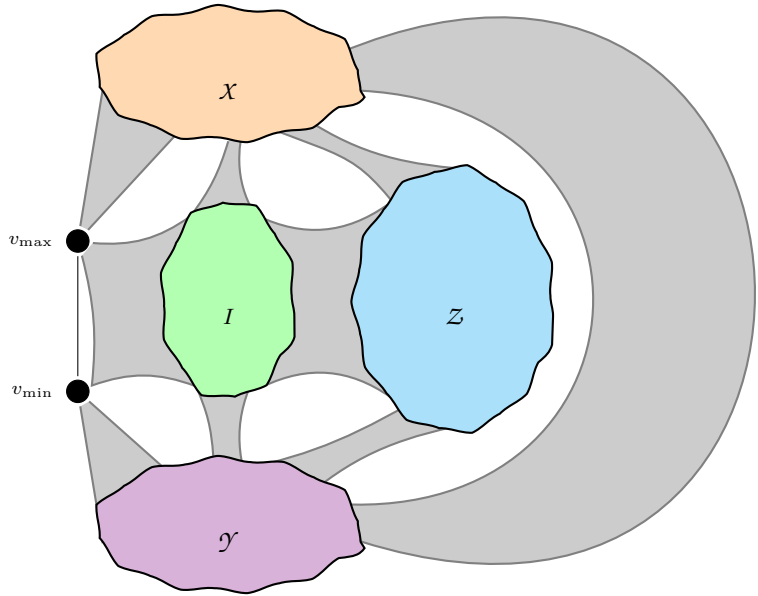
\begin{figure}[htb]
    \centering
    \begin{tikzpicture}
        \clip (-1, -3) rectangle (7, 5.5);
        \node[vertex, label = left:{\scriptsize $\vmax$}] (vmax) at (0, 2) {};
        \node[vertex, label = left:{\scriptsize $\vmin$}] (vmin) at (0, 0) {};

        \begin{scope}[every node/.style = {draw = black, decoration = {snake, segment length = 25, amplitude = 1}, decorate, thick, ellipse, text opacity = 1}]
            \node[fill = orange!30, minimum height = 50, minimum width = 100] (x) at (2, 4) {$\mathpzc{X}$};
            \node[fill = violet!30, minimum height = 50, minimum width = 100] (y) at (2, -2) {$\mathpzc{Y}$};
            \node[fill = cyan!30, minimum height = 100, minimum width = 75] (z) at (5, 1) {$\mathpzc{Z}$};
            \node[fill = green!30, minimum height = 75, minimum width = 50] (i) at (2, 1) {$\mathpzc{I}$};
        \end{scope}

        \draw (vmax) to (vmin);
        \coordinate (xwest) at ([shift = {(0.1, 0)}]x.west);
        \coordinate (xsouth) at ([shift = {(0, 0.22)}]x.south);
        \coordinate (ywest) at ([shift = {(0.1, 0)}]y.west);
        \coordinate (ynorth) at ([shift = {(-0.2, 0.25)}]y.north);
        \coordinate (inorth) at ([shift = {(-0.5, 0)}]i.north);
        \coordinate (isouth) at ([shift = {(-0.43, 0.4)}]i.south);
        \coordinate (znorth) at ([shift = {(-0.9, -0.3)}]z.north);
        \coordinate (zsouth) at ([shift = {(-0.85, 0.7)}]z.south);
        
        \begin{scope}[on background layer]
            \fill[thick, draw = gray, fill = gray!40] (vmax) to (xwest) to ([shift = {(-0.5, 0.3)}]xsouth) to (vmax);
            \fill[thick, draw = gray, fill = gray!40] (vmin) to (ywest) to ([shift = {(-0.5, -0.3)}]ynorth) to (vmin);
            \fill[thick, draw = gray, fill = gray!40] (vmax) to[bend right = 20] (inorth) to (isouth) to[bend right = 20] (vmin) to[bend right = 10] (vmax);
            \fill[thick, draw = gray, fill = gray!40] ([shift = {(0.9, 0)}]inorth) to[bend right = 30] (znorth) to (zsouth) to[bend right = 30] ([shift = {(0.9, 0)}]isouth) to (inorth);
            \fill[thick, draw = gray, fill = gray!40] (xsouth) to[bend left = 10] ([shift = {(-0.5, 0)}]i.north) to ([shift = {(0.5, -0.2)}]i.north) to[bend left = 25] ([shift = {(0.2, 0.2)}]xsouth);
            \fill[thick, draw = gray, fill = gray!40] (x.east) to[bend left = 90, looseness = 2] ([shift = {(-0.3, 0.5)}]y.east) to (y.east) to[bend right = 110, looseness = 3] ([shift = {(-0.3, 0.5)}]x.east);
            \fill[thick, draw = gray, fill = gray!40] (ynorth) to[bend right = 10] (isouth) to ([shift = {(-0.2, 0.3)}]i.south east) to[bend right = 25] ([shift = {(0.5, -0.5)}]ynorth);

            \fill[thick, draw = gray, fill = gray!40] ([shift = {(-1, -0.2)}]x.east) to[bend right = 15] ([shift = {(0, 0.2)}]z.north) to ([shift = {(-0.2, -1)}]z.north) to [bend right = 15, looseness = 1.5] ([shift = {(0.2, 0.3)}]xsouth);
            \fill[thick, draw = gray, fill = gray!40] ([shift = {(-0.2, 0)}]y.north) to[bend right = 10] ([shift = {(-0.6, 0.6)}]z.south) to ([shift = {(0, 0.3)}]z.south) to[bend right = 15] ([shift = {(0.8, -0.5)}]y.north);
        \end{scope}
        % \node[vname] at (1.5, 2.5) {\tiny $x$};

    \end{tikzpicture}
    \caption{The decomposition $G(\X, \Y, \Z, \I)$ of graph $G$, with each subset of $V$. The grey areas indicate possible edges connecting vertices in distinct sets.}
    \label{fig:g-l:layout}
\end{figure}

We use the idea of decompositions to establish bounds on the gap-strength of $K_n$. Given values for $x, y, z$ and $i$, let $l(x, y, z, i)$ denote the number of edges that must be removed from $K_n$ in order to obtain a gap-vertex-labelable graph with the given decomposition. Observe that one must remove at least: $x$ edges connecting $\vmin$ to each vertex in $\X$; $y$ edges connecting $\vmax$ to vertices in $\Y$, and $2z$ edges connecting vertices in $\Z$ to both $\vmin$ and $\vmax$. Thus, we can write $l(x, y, z, i)$ as follows:
\begin{align}
    l(x, y, z, i) = x + y + 2z + |\R^{\X}| + |\R^{\Y}| + |\R^{\Z}| + |\R^{\I}| + |\R'|.\label{eq:gapstr:l}
\end{align}

Here, each $\R^{S}$ denotes the set of edges removed inside each $S \in \{\X, \Y, \Z, \I\}$, and $\R'$ denotes the set of edges removed between any two distinct sets. Observe that the gap-strength of complete graph $K_n$ can be determined by the following modification to equation~\eqref{eq:gapstr:l}:
\begin{align}
    \gapstr(K_n) = \min_{\substack{x, y, z, i\in \mathbb{Z}_{\ge 0};\\x + y + z+ i + 2 = n}}\{l(x, y, z, i)\}.\label{eq:gapstr:min}
\end{align}   

In the pursuit of a lower-bound for $\gapstr(K_n)$, we note that $|\R'| \ge 0$ and omit this value in equations herein. Now, observe set $\I$, noticing that every vertex $v_i$ in $\I$ has $\vmax, \vmin \in N(v_i)$. This implies that every $v_i$ has the same induced colour and, therefore, set $\I$ must be an independent set in $G$. Therefore, the number of edges removed from within $\I$ is:
\begin{align*}
    |\R^\I| = \binom{i}{2} = \frac{i(i-1)}{2}.
\end{align*}

Next, consider set $\X$. Every $v_x \in \X$ is, by definition, adjacent to $\vmax$ and not adjacent to $\vmin$. If we assume no edges have been removed from within $\X$, then $G' = G[\vmax + \X]$ is a complete graph of order $n' = x + 1$. Now, if $x \ge 3$, Theorem~\ref{theo:complete} states that $G'$ also does not admit a \gvl, and it would require the removal of edges from within $G'$ in order to obtain a gap-vertex-labelable graph. In conclusion, whenever $x \ge 3$ in a decomposition of $G$, we must look at the subproblem of removing edges from a complete (sub)graph of order $n \ge 4$ --- which is the exact same premise upon which a decomposition of $G$ was introduced. 

Note, however, that this decomposition has a somewhat peculiar form. Since every vertex in $G'$ is adjacent to $\vmax$, we must take this property into account when considering a new decomposition of $G'$. This immediately implies that any decomposition $G'(\X', \Y', \Z', \I')$ must have sets $\Y'$ and $\Z'$ empty. Let $l'(x)$ denote the minimum number of edges removed from within set $\X$ in a \emph{restricted} decomposition $G'(\X', \emptyset, \emptyset, \I')$ in order to obtain a gap-vertex-labelable graph. Then:
\begin{align*}
    l'(x) = \min\limits_{x'+i' = x-2}\left\{x' + \binom{i'}{2} + |\R^{\X'}| + |\R'|\right\}.
\end{align*}

Once again, we note that the number of edges removed between sets $|\R'| \ge 0$ and, therefore, it follows that
\begin{align}
    l'(x) \ge \min\limits_{x' + i' = x - 2} \left\{x' + \frac{i(i-1)}{2} + l'(x'+1)\right\}.\label{eq:gapstr:restricted:lb}
\end{align}

We have replaced $|\R^{\X'}|$ from the above equation with $l'(x'+1)$ to account for the recursive decomposition of set $\X'$. In order to establish a lower-bound for $\gapstr(K_n)$, we first demonstrate the following proposition.

\begin{proposition}\label{prop:restricted}
    Let $l'(n)$ be as defined in equation~\eqref{eq:gapstr:restricted:lb}. Then, $l'(n) \in \Omega(n^{3/2})$. More precisely, $l'(n) \ge \frac{1}{10}n^{3/2}$ for all $n \ge 4$.
\end{proposition}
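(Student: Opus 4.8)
The plan is to prove the bound by strong induction on $n$, exploiting the recursive structure of equation~\eqref{eq:gapstr:restricted:lb}. Since $l'(m)$ counts removed edges, $l'(m)\ge 0$ for every $m$, and this alone settles the base cases: reading off equation~\eqref{eq:gapstr:restricted:lb} for $n=4,5,6$ over the respective finite sets of pairs with $x'+i'=n-2$ (and bounding each $l'(x'+1)$ below by $0$) gives $l'(4)\ge 1$, $l'(5)\ge 2$ and $l'(6)\ge 3$, each of which comfortably exceeds $\tfrac{1}{10}n^{3/2}$. For the inductive step, fix $n\ge 7$, assume $l'(m)\ge \tfrac{1}{10}m^{3/2}$ for every $m$ with $4\le m<n$, and bound every candidate term $x'+\binom{i'}{2}+l'(x'+1)$ appearing in the minimum of equation~\eqref{eq:gapstr:restricted:lb}, where $x'+i'=n-2$ and $x',i'\ge 0$.

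I would split on the size of $x'$. If $x'\le 2$, then $i'\ge n-4$, so the term is at least $\binom{n-4}{2}=\tfrac{(n-4)(n-5)}{2}$, and one checks that $\tfrac{(n-4)(n-5)}{2}-\tfrac{1}{10}n^{3/2}$ is positive at $n=7$ and has positive derivative for $n\ge 7$, hence stays positive; so the term is at least $\tfrac{1}{10}n^{3/2}$. If instead $x'\ge 3$, then $4\le x'+1\le n-1$, so the induction hypothesis gives $l'(x'+1)\ge\tfrac{1}{10}(x'+1)^{3/2}$. Writing $b=i'+1$ (so $1\le b\le n-4$ and $x'+1=n-b$), it suffices to establish
\[
(n-b-1)+\frac{(b-1)(b-2)}{2}\;\ge\;\frac{1}{10}\left(n^{3/2}-(n-b)^{3/2}\right),
\]
and since $x\mapsto\tfrac{3}{2}\sqrt{x}$ is increasing we have $n^{3/2}-(n-b)^{3/2}\le\tfrac{3}{2}\,b\sqrt{n}$, so it is enough to prove the purely algebraic inequality $(n-b-1)+\tfrac{(b-1)(b-2)}{2}\ge\tfrac{3}{20}\,b\sqrt{n}$ for all real $b$, with $n\ge 7$ fixed.

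That last inequality is the only place needing genuine care, and even there the work is light: the difference of its two sides simplifies to $\tfrac12 b^2-\bigl(\tfrac52+\tfrac{3\sqrt n}{20}\bigr)b+n$, a convex quadratic in $b$ whose minimum over $\mathbb{R}$ equals $n-\tfrac12\bigl(\tfrac52+\tfrac{3\sqrt n}{20}\bigr)^2=\tfrac{791}{800}\,n-\tfrac{3}{8}\sqrt{n}-\tfrac{25}{8}$, which is positive at $n=7$ and increasing in $n$, hence positive for every $n\ge 7$; this completes the induction. I expect the main (minor) obstacle to be bookkeeping the constant: $\tfrac{1}{10}$ must be chosen small enough that all three base cases and the quadratic estimate survive simultaneously. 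It does, with a large margin — a balanced-split analysis of the recurrence suggests $l'(n)=\Theta(n^{3/2})$ — so no delicate optimisation of the constant is required.
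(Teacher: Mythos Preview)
Your proof is correct, and it follows the same high-level scheme as the paper: strong induction on $n$, a split at $x'\le 2$ versus $x'\ge 3$, and in the latter case replacing $l'(x'+1)$ by $\tfrac{1}{10}(x'+1)^{3/2}$ via the hypothesis. Where you genuinely diverge is in how you handle the resulting inequality
\[
x' + \tfrac{i'(i'-1)}{2} + \tfrac{1}{10}(x'+1)^{3/2} \;\ge\; \tfrac{1}{10}n^{3/2}.
\]
The paper treats the left side as a function $g(n,x)$, solves $\partial g/\partial x = 0$ symbolically in Mathematica to locate the exact minimiser $x^{*}$ (a closed form involving $\sqrt{-3991+1600n}$), and then verifies via further computer algebra that $g(n,x^{*})-\tfrac{1}{10}n^{3/2}$ is nonnegative and increasing (Claims~\ref{claim:lb:restricted:1} and~\ref{claim:lb:restricted:2} in the appendix). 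You instead bound $n^{3/2}-(n-b)^{3/2}\le \tfrac{3}{2}b\sqrt{n}$ by the mean value theorem, which reduces everything to the nonnegativity of a single explicit quadratic $\tfrac12 b^2-\bigl(\tfrac52+\tfrac{3\sqrt n}{20}\bigr)b+n$ in~$b$; completing the square finishes by hand.

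The trade-off is clear. Your route is fully elementary and self-contained, avoids the appendix machinery entirely, and makes transparent why the constant $\tfrac{1}{10}$ is far from tight (the minimum of your quadratic is roughly $\tfrac{791}{800}n$, leaving ample slack). The paper's route, by computing $x^{*}$ exactly, would in principle let one push for a sharper constant, but as presented it leans on symbolic software for both the optimisation and the final sign check. For the stated constant $\tfrac{1}{10}$ your argument is the cleaner of the two.
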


\begin{proof}
    Let $l'(n)$ be as defined in the hypothesis. The objective is to show that, regardless of cardinalities of sets $\X'$ and $\I'$, the main inequality $l'(n) \ge \frac{1}{10}n^{3/2}$ holds for all $n \ge 4$. Note that for complete graph $K_4$, $l'(n) = 1$ by definition and the desired inequality holds. Therefore, we consider $n \ge 5$ from herein.

    Since $i = n  - x - 2$ in a restricted decomposition, let us define an auxiliary function $f(n, x)$ by substituting $i$ from equation~\eqref{eq:gapstr:restricted:lb} as follows:
    \begin{align*}
        f(n, x) &= x + \frac{(n - x - 2)(n-x-3)}{2} + l'(x+1).
    \end{align*}

    \noindent Note that $l'(n) \ge \min\limits_{0 \le x \le n-2}\{f(n, x)\}$ and, therefore, it suffices to show that $f(n, x) \ge \frac{1}{10}n^{3/2}$ for all $n \ge 5$ and $x \in \{0, \ldots, n-2\}$. We do so by, first, considering the case where very few vertices are assigned to set $\X'$, namely when $x \in \{0, 1, 2\}$. Then, 
    \begin{align*}
        f(n, x) &= x + \frac{(n - x - 2)(n-x-3)}{2}
    \end{align*}

    \noindent in this case since $l'(x+1) = 0$ by definition for this range of values of $x$. For ${x \in \{0, 1, 2\}}$, one can prove by induction that $f(n, x) \ge \frac{1}{10}n^{3/2}$ for all $n \ge 4$ and, thus, the desired result holds. Hence, it remains to consider $x \ge 3$, and we prove this second part by (strong) induction on $n$.

    Suppose $l'(k) \ge \frac{1}{10}k^{3/2}$ for every $k \le n$, and let $g(n, x)$ be the following function:
    \begin{align*}
        g(n, x) &= x + \frac{(n - x - 2)(n-x-3)}{2} + \frac{1}{10}(x+1)^{3/2}.
    \end{align*}

    By applying induction hypothesis, it follows that $f(n, x) \ge g(n, x)$. Considering Claim~\ref{claim:lb:restricted:1} from~\ref{app:restricted:claim1}, let $x^{*}$ be the real value that minimizes $g(n, x)$ under $x$. Now, consider a function $h(n)$ which computes the difference between $g(n, x^{*})$ and $\frac{1}{10}n^{3/2}$ for any value of $n$, that is,
    \begin{align*}
        h(n) := g(n, x^{*}) - \frac{1}{10}n^{3/2}.
    \end{align*}

    As verified in Claim~\ref{claim:lb:restricted:2} from~\ref{app:restricted:claim2}, function $h(n)$ is strictly increasing as a function of $n$ and, furthermore, $h(n)$ is nonnegative for all $n \ge 4$. 

    We conclude that $f(n, x) \ge g(n, x) \ge g(n, x^{*}) \ge \frac{1}{10}n^{3/2}$ from Claims~\ref{claim:lb:restricted:1} and~\ref{claim:lb:restricted:2} and, therefore, $l'(n) \in \Omega(n^{3/2})$. This completes the proof.\qedhere
\end{proof}

\bigskip

With Proposition~\ref{prop:restricted} proved, we can now return to analysing equation~\eqref{eq:gapstr:min}. 
\begin{align*}
    \gapstr(K_n) \ge \min_{\substack{x, y, z, i\in \mathbb{Z}_{\ge 0}\\x+y+z+i = n-2}}\left\{x + y + 2z + \frac{i(i-1)}{2} + l'(x+1) + l'(y+1) + \R^\Z\right\}
\end{align*}

Observe that we have lower-bounded $\gapstr(K_n)$ by replacing $\R^\X$ for $l'(x+1)$. We have also replaced $\R^\Y$ with $l'(y+1)$ since a similar argument can be applied to set $\Y$: every $v_y \in \Y$ is, by definition, adjacent to $\vmin$ and not adjacent to~$\vmax$ and, assuming no edges have been removed from within set $\Y$, then the graph induced by $\{\vmin\} \cup \Y$ is a complete graph of order $y+1$ --- which is non-gap-vertex-labelable for $y \ge 3$. With these considerations, we are now ready to prove the main theorem of this section.

\begin{theorem}\label{theo:gapstrength:lowerbound}
    Let $K_n$ be a complete graph of order $n \ge 4$. Then, \[\gapstr(K_n) \in \Omega(n^{6/5}).\]
\end{theorem}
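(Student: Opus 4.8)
The plan is to prove, by strong induction on $n$, that $\gapstr(K_n) \ge c\,n^{6/5}$ for a sufficiently small absolute constant $c > 0$ and all $n \ge 4$; the finitely many base cases $4 \le n \le N_0$ are covered by Theorem~\ref{theo:complete}, which already gives $\gapstr(K_n) \ge 1$, once $c$ is chosen so small that $c\,N_0^{6/5} \le 1$. For the inductive step I would start from the last displayed estimate,
\[
    \gapstr(K_n) \;\ge\; \min_{x+y+z+i = n-2}\Bigl\{\, x + y + 2z + \tfrac{i(i-1)}{2} + l'(x+1) + l'(y+1) + |\R^{\Z}| \,\Bigr\},
\]
and bound each summand from below. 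For the $\X$- and $\Y$-parts, Proposition~\ref{prop:restricted} gives $l'(x+1) \ge \tfrac1{10}(x+1)^{3/2}$ and $l'(y+1) \ge \tfrac1{10}(y+1)^{3/2}$; for the $\I$-part, independence of $\I$ gives $\tfrac{i(i-1)}{2} = \binom{i}{2} \ge \tfrac14 i^2$ for $i \ge 2$. The real content is the $\Z$-part.

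The main obstacle is to show that $|\R^{\Z}|$ — where one must retain the between-set deletions $|\R'|$ that were discarded in the displayed bound — satisfies $|\R^{\Z}| + |\R'| \ge \gapstr(K_{z})$, whence the induction hypothesis yields $|\R^{\Z}| + |\R'| \ge c\,z^{6/5}$ for $z \ge 4$ (the cases $z \le 3$ being vacuous). The intuition is that $\Z$, whose vertices are adjacent to neither $\vmax$ nor $\vmin$ and are otherwise unconstrained, carries a sub-instance of the same problem: picking inside $\Z$ the vertices bearing the largest and smallest labels of $\Z$, any vertex of $\Z$ with both of them in its neighbourhood is forced to repeat a colour — exactly the obstruction of Theorem~\ref{theo:complete} — unless it is separated either by a deleted edge inside $\Z$ or by a neighbour outside $\Z$ whose label lies beyond the label range of $\Z$. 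Running the decomposition recursively and charging these two ``escape routes'' against $|\R^{\Z}|$ and $|\R'|$, respectively, gives the claimed inequality; the delicate point is ruling out that a small outside part $\X \cup \Y \cup \I$ nevertheless rescues most of $\Z$ at essentially no cost, which forces a careful count of how many vertices of $\Z$ can genuinely escape per deleted between-set edge.

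Granting the $\Z$-bound, the inductive step reduces to the numerical inequality
\[
    \min_{x+y+z+i = n-2}\Bigl\{\, 2z + \tfrac1{10}(x+1)^{3/2} + \tfrac1{10}(y+1)^{3/2} + \tfrac14 i^2 + c\,z^{6/5} \,\Bigr\} \;\ge\; c\,n^{6/5},
\]
which I would prove by splitting on the total size $m := x+y+i = n-2-z$ of the outside part. Since the ``deficit'' $c\,n^{6/5} - c\,z^{6/5}$ is at most $\tfrac{6c}{5}\,n^{1/5}(m+2)$ by the mean value theorem, two regimes appear. If $m \ge \kappa\,n^{2/5}$ for a suitable $\kappa = \Theta(c^2)$, then $\max\{x, y, i\} \ge m/3$, so one of $\tfrac1{10}(x+1)^{3/2}$, $\tfrac1{10}(y+1)^{3/2}$, $\tfrac14 i^2$ is already at least the deficit — this is where the gap between the exponents $3/2$ and $6/5$ is spent, since $m^{3/2}$ overtakes $n^{1/5}m$ precisely when $m \gtrsim n^{2/5}$. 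If instead $m < \kappa\,n^{2/5}$, then $z \ge n-2-\kappa\,n^{2/5}$, the deficit is $O(n^{3/5})$, and it is absorbed by the single linear term $2z \ge 2n - O(n^{2/5})$ for $n$ large. Either way the numerical inequality holds for $n \ge N_0$, which closes the induction and gives $\gapstr(K_n) \in \Omega(n^{6/5})$. (A less self-contained shortcut is to bound every summand below by a constant times a $6/5$-power and invoke convexity of $t \mapsto t^{6/5}$ together with Jensen's inequality on the simplex $x+y+z+i = n-2$.)
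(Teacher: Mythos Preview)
Your plan is the paper's proof. The paper also argues by strong induction through the $(\X,\Y,\Z,\I)$ decomposition, invokes Proposition~\ref{prop:restricted} to bound $l'(x{+}1)$ and $l'(y{+}1)$ by $\tfrac{1}{10}(\cdot)^{3/2}$, replaces $\R^{\Z}$ by the inductive value, and then performs a two-regime split on the part sizes. The differences are cosmetic: the paper thresholds each of $x,y,i$ separately at order $n^{4/5}$ and compares directly to the target $\tfrac{3}{100}\,n^{6/5}$ by analysing the ratio $w(q)$ and its derivative, whereas you threshold the sum $m=x+y+i$ at order $n^{2/5}$ and compare against the deficit $c\,n^{6/5}-c\,z^{6/5}$ via the mean-value theorem; and the paper dispatches the base range $n<218$ with a dynamic-programming table rather than by shrinking $c$.

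On the step you single out as the main obstacle, the paper is no more explicit than you are: it writes only ``we have applied an induction hypothesis by replacing $\R^{\Z}$ with $\tfrac{3}{100}z^{6/5}$.'' Your charging heuristic, however, does not work as stated. A vertex of $\X\cup\Y\cup\I$ whose label lies beyond the label-range of $\Z$ is, by default, adjacent to \emph{every} vertex of $\Z$, so it can act as the extremal neighbour for arbitrarily many $\Z$-vertices simultaneously; the ``escape per deleted between-set edge'' count you allude to is therefore the wrong currency --- such escapes cost nothing in $|\R'|$. So your sketch and the paper's argument are at the same level of rigour on this point, and in both the validity of the $\Z$-recursion is asserted rather than proved.
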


\begin{proof}
    % \red{** NEW PROOF **}

    Let $\gapstr(K_n)$ be as defined in equation~\eqref{eq:gapstr:min}. We show that for all ${n \ge 218}$, $\gapstr(K_n) \ge \frac{3}{100}n^{6/5}$. We remark that for the case of $n < 218$, the result was verified by using a Dynamic Programming algorithm which computes the lower-bound considering all possible cardinalities of $\X, \Y, \I$ and $\Z$ in a decomposition. The results obtained by our computer program are presented in~\ref{app:dp}.

    Now, let $q$ be the real value such that $n = q^3 + 2$, noting that $q \ge 6$ since $n \ge 218$. Then, given that $n - 2 = x + y + i + z + 2 = q^3$, our main objective can be restated as showing that
    \begin{align*}
        x + y + 2z + \frac{i(i-1)}{2} + 2z + l'(x+1) + l'(y+1) + \R^\Z \ge \frac{3q^{18/5}}{100}.
    \end{align*}

    As was done in the proof of the lower-bound for the restricted case, we demonstrate the result depending on the size of each part in the decomposition. We begin analysing set $\I$, supposing that $i \ge \left(\frac{3}{10}\right)^{2/3}\cdot q^{12/5}$. In this case, note that for all $q \ge 1.52$, it holds that
    \begin{align}
        \frac{i(i-1)}{2} \ge \frac{3q^{18/5}}{100}.\label{eq:gapstri}
    \end{align}

    An analogous reasoning holds for sets $\X$ and $\Y$. Suppose either of them, say $x \ge \left(\frac{3}{10}\right)^{2/3}\cdot q^{12/5}$. Then we can replace the lower-bound for $l'(x+1)$ given by Proposition~\ref{prop:restricted} since $x \ge 4$ and we obtain
    \begin{align}
        x + \frac{(x+1)^{3/2}}{10} \ge \frac{3q^{18/5}}{100}\label{eq:gapstrx}
    \end{align}

    \noindent for all $q \ge 0$; the same holds for $y$. Thus, we conclude that if either $x, y$ or $i$ is too large, the desired lower-bound immediately holds.

    It remains to consider when $x, y, i < \left(\frac{3}{10}\right)^{2/3}\cdot q^{12/5}$, in which case we have
    \begin{align}
        z \ge q^3-3q^{12/5}\left(\frac{3}{10}\right)^{2/3}.\label{eq:lb:z}
    \end{align}

    \noindent Given that $x, y, i \ge0$ and that the cardinality of set $\Z$ is bounded by equation~\eqref{eq:lb:z}, the proof of our result comes down to showing that
    \begin{align}
        2z + \frac{3z^{6/5}}{100} \ge \frac{3q^{18/5}}{100}.\label{eq:show_z}
    \end{align}

    Here it is important to remark that we have applied an induction hypothesis by replacing $\R^{\Z}$ with $\frac{3}{100}z^{6/5}$. 
    %Observe, however, that set $\Z$ may be arbitrarily small, i.e., $z < 218$, for which our assumption does not hold. We recall that the inequality also holds for these smaller cases, as was verified by our Dynamic Programming algorithm.
    Now, let $w$ be the following function, defined by taking the left side of equation~\eqref{eq:show_z} and dividing it by the right side:
    \begin{align}
        w(q) &= \frac{200z + 3z^{6/5}}{3q^{18/5}}\label{eq:lb:w}
    \end{align}

    By replacing $z$ with the value from equation~\eqref{eq:lb:z} and taking the limit of $w(q)$, we obtain \[\lim_{q\rightarrow \infty} w(q) = 1.\] Therefore, in order to show that inequality~\eqref{eq:show_z} holds, it remains only to show that the function is decreasing with respect to $q$, for $q \ge 6$. Then, take the first derivative of $w$ with respect to $q$:
    \begin{equation}
        \resizebox{\hsize}{!}{
        $w'(q) = \frac{3000\cdot 3^{2/3}\cdot 10^{1/3} - 5000\cdot q^{3/5} + 27\cdot q^{2/3}\cdot 10^{2/15}\left(-3\cdot3^{2/3}\cdot10^{1/3}\cdot q^{12/5}+10q^3\right)^{1/5}}{125q^{11/5}}.$}\nonumber
    \end{equation}

    \noindent Since $125q^{11/5}$ is a positive value, we need only show that the top portion of the fraction is negative. Furthermore, we can ``ignore'' the following term of the equation: ${(-3\cdot 3^{2/3}\cdot 10^{1/3}\cdot q^{12/5})}$; it only helps in decreasing the value of $w'(q)$. Thus, we are left with analysing
    \begin{align}
        3000\cdot 3^{2/3}\cdot 10^{1/3} +\left(- 5000 + 27 \cdot 3^{2/3}\cdot10^{1/3}\right)q^{3/5}.\label{eq:caceta}
    \end{align}

    Let $a = 3000\cdot3^{2/3} 10^{1/3} = 13444.2$ and $b = - 5000 + 27 \cdot 3^{2/3}\cdot10^{1/3} \approx -4879$. Then, we have that equation~\eqref{eq:caceta} can be viewed as $a + bq^{3/5}$, which has negative value for any $q \ge 5.41594$. This implies that $w'(q)$ is decreasing for $q \ge 6$ and, thus, at least 1 for $q \ge 6$. We conclude that inequality~\eqref{eq:show_z} does, in fact, hold for all considered values of $q$. This completes the proof of the lower-bound.
\end{proof}

This result is somewhat astonishing: in order to obtain a gap-vertex-labelable graph $G'$ by removing edges from a complete graph on $n$ vertices, we require the removal of at least $n^{6/5}$ edges. Therefore, Theorem~\ref{theo:gapstrength:lowerbound} shows us that we cannot hope to have a gap-vertex-labelable graph if the number of remaining edges is too large. 

The question that arises is: do we require the removal of much more than $\Omega(n^{6/5})$ edges from $K_n$ such that the resulting graph is gap-vertex-labelable? In the following section, we prove that the removal of $\O(n^{3/2})$ is sufficient.

\subsection{The removal of $\O(n^{3/2})$ edges suffices}\label{subsec:ub}

    %!TEX root = main.tex

Let us now turn our discussion towards obtaining an upper-bound on the gap-strength of complete graphs. In Section~\ref{sec:ngkvl}, we considered the family of Powers of Cycles, providing necessary and sufficient conditions for such a graph~$C_n^k$ to admit a gap-vertex-labelling. Observe that the result from Theorem~\ref{theo:poc} already provides us with an upper-bound on $\gapstr(K_n)$ in the sense that a gap-vertex-labelable power of cycle can be obtained by removing $\frac{n(n-2)}{4}$ from complete graph $K_n$ and obtaining a gap-vertex-labelable power of cycle.

We improve on this implicit result by providing an edge-removal algorithm which, given a complete graph $G \cong K_n$, removes $\O(n^{3/2})$ edges and obtains a gap-vertex-labelable graph. Our technique is particularly interesting because the resulting graph is still fairly dense when compared to the power of cycle graphs. This main result is presented in the following theorem. 

\begin{theorem}\label{theo:gapstrength:upperbound}
    Let $G$ be a complete graph of order $n \ge 4$. Then, \[\gapstr(G) \in \O(n^{3/2}).\]
\end{theorem}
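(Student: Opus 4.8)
The plan is to construct an explicit edge-removal procedure: starting from $K_n$, we delete enough edges so that the resulting graph $G'$ decomposes (in the sense of Section~\ref{subsec:lb}, with a $\vmax$/$\vmin$ pair and the sets $\X,\Y,\Z,\I$) in a way that admits a labelling, and we count the deleted edges to verify the total is $\O(n^{3/2})$. The natural skeleton is to pick two special vertices $\vmax$ and $\vmin$, make the ``middle'' bulk $\Z$ (vertices adjacent to neither) into an independent-ish structure, and then label by powers of two (or Golomb-ruler marks) exactly as in the proofs of Theorems~\ref{theo:pop} and~\ref{theo:poc}, where the gap-colour of each vertex is a difference of distinct powers of two and hence all distinct. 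The point is that if we arrange matters so that every vertex's colour is forced to be such a distinct difference, properness comes for free; the only edges we are obliged to remove are those that would create a \emph{conflict}, i.e.\ two adjacent vertices whose neighbourhoods force the same gap.

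First I would set up the target structure. Write $n-2 = x+y+z+i$ and aim for a ``balanced'' split where $x,y,i$ are around $\Theta(\sqrt{n})$ and $z = n - \Theta(\sqrt{n})$, so that $\binom{i}{2} = \O(n)$ and $l'(x+1),l'(y+1) = \O(n^{3/4})$ by Proposition~\ref{prop:restricted}'s construction (the restricted decomposition applied recursively). The edges $x + y + 2z$ incident to $\vmax,\vmin$ that must go are only $\O(n)$. The crux is the internal structure of $\Z$: every vertex of $\Z$ misses both $\vmax$ and $\vmin$, so its colour is forced by its remaining neighbours; I would partition $\Z$ into $\Theta(\sqrt{n})$ blocks of size $\Theta(\sqrt{n})$, keep each block a clique but delete \emph{all} edges between distinct blocks, or alternatively keep only a sparse (e.g.\ matching- or path-like) pattern of inter-block edges chosen so that the forced gaps stay distinct. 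Deleting all $\binom{\#\text{blocks}}{2}$ inter-block bundles costs $\O(\sqrt{n}\cdot\sqrt{n}\cdot\sqrt{n}\cdot\sqrt{n}\,/\,?)$ — this is exactly the calculation to watch: $\Theta(\sqrt n)$ blocks give $\Theta(n)$ pairs of blocks, each pair contributing $\Theta(n)$ potential edges, which is $\Theta(n^2)$ and far too many, so instead one keeps the blocks mostly \emph{connected} and removes edges only \emph{within} each block where Theorem~\ref{theo:complete}-type conflicts arise, paying $\O(\sqrt n)$ per block by the restricted bound $l'(\sqrt n) = \O(n^{3/4})$, hence $\O(\sqrt n)\cdot\O(n^{3/4}) = \O(n^{5/4})$ — still not $n^{3/2}$, so the blocks must themselves be labelled directly, paying only $\O(\sqrt n)$ removed edges per block (a single ``bad'' edge à la $K_4^{-1}$), for a total of $\O(\sqrt n)\cdot\O(\sqrt n) = \O(n)$ inside $\Z$, plus whatever is needed to make the forced gaps across $\Z$ globally distinct.

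Then I would exhibit the labelling. Assign $\vmin$ and $\vmax$ the global minimum and maximum. Inside each block of $\Z$, use a scaled copy of the $\O(\sqrt n)$-element Golomb-ruler / powers-of-two scheme from Theorem~\ref{theo:vgnn2}, shifted into disjoint numeric ranges for distinct blocks so that no two vertices in different blocks can possibly share a gap, and wire $\X,\Y,\I$ with powers of two sitting in yet another disjoint range, mirroring the $P_n^k$ construction. Verify properness case by case exactly as in Theorem~\ref{theo:poc}: vertices adjacent to $\vmax$, vertices adjacent to $\vmin$, vertices in $\I$ (independent, so no edges to check), and interior $\Z$-vertices (distinct differences of distinct powers of two). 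Finally, tally: $\O(n)$ edges at $\vmax,\vmin$, $\binom{i}{2}=\O(n)$ inside $\I$, $l'(x+1)+l'(y+1)=\O(n^{3/4})$ inside $\X,\Y$, and $\O(\sqrt n)$ blocks each costing $\O(\sqrt n)$ internal deletions plus the inter-block bundles.

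The main obstacle, and the reason the exponent lands at $3/2$ rather than lower, is the inter-block edge accounting inside $\Z$: we cannot afford to delete whole block-to-block bundles (that is $\Theta(n^2)$), but if we \emph{keep} many inter-block edges then a vertex sees neighbours from several blocks and its forced gap is a difference of two labels that may coincide with another vertex's forced gap. The honest way to control this is to keep the inter-block graph \emph{sparse but connected} — essentially a cycle-power structure on the blocks as in Theorem~\ref{theo:poc} — so that each vertex's extremal neighbours are predictable and the gaps form an arithmetic-progression-free pattern; making this precise while keeping the number of \emph{removed} inter-block edges at $\O(n^{3/2})$ (i.e.\ we remove all but $\O(n^{3/2})$ of the $\Theta(n^2)$ potential $\Z$-internal edges, equivalently we \emph{keep} only $\O(n^{3/2})$ of them) is where the real work sits; I expect one needs $\Theta(\sqrt n)$ blocks precisely because $\sqrt n$ blocks $\times$ $\sqrt n$ kept edges per block-interface summed over a sparse interface graph on $\sqrt n$ blocks yields $\Theta(n^{3/2})$ kept edges and hence $\Theta(n^2 - n^{3/2}) = \Theta(n^2)$ removed — wait, that overcounts removals; the correct reading is that we \emph{remove} $\binom{n}{2} - \O(n^{3/2})$ edges only if the labelable graph is sparse, so the theorem as stated must really be bounding the removals needed relative to a decomposition where $\Z$ stays dense, and the $\O(n^{3/2})$ bound comes from summing $x+y+2z = \O(n)$ with $\binom{i}{2}=\O(n)$, $l'(x+1)+l'(y+1)=\O(n^{3/4})$, and the dominant $|\R^\Z|$ term governed by a restricted-type recursion on $\Z$ that, unlike $\X,\Y$, must independently avoid conflicts from \emph{both} $\vmax$ and $\vmin$ being absent — and this last recursion is the one whose solution is $\Theta(n^{3/2})$. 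Pinning down that $\R^\Z$ recursion and its solution is the heart of the proof.
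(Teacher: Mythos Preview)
Your proposal is not a proof: it is a sequence of attempted constructions that you yourself recognise as failing (``wait, that overcounts removals'') and that ends by conceding the main step is still open (``Pinning down that $\R^\Z$ recursion \ldots\ is the heart of the proof''). The concrete gap is that you never exhibit a labelling of the graph obtained after your deletions, and your $\Z$-heavy scheme never resolves: with $z = n - \Theta(\sqrt n)$ you already pay $2z \approx 2n$ deletions at $\vmax,\vmin$, and then $G[\Z]$ is essentially $K_{n-\Theta(\sqrt n)}$, so you are back to the original problem on a barely smaller graph. Worse, every vertex of $\Z$ is still adjacent to all of $\X,\Y,\I$, so whatever internal labelling you put on $\Z$ has its extremal neighbours --- and hence its induced colours --- contaminated by labels from outside $\Z$; your block-Golomb idea does not address this at all.

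The paper's construction avoids all of this by using \emph{no} $\Y$ or $\Z$ whatsoever. It keeps $\vmax$ universal and works entirely in the restricted decomposition $(\X,\emptyset,\emptyset,\I)$: at each level pick a fresh $\vmin^j$, take $\I_j$ of size $\lfloor\sqrt{n_j}\rfloor$ (made independent), put the remaining $x_j = n_j - i_j - 2$ vertices in $\X_j$ (deleting the $x_j$ edges from $\vmin^j$ to $\X_j$), and recurse on $\X_j$. The cost per level is $x_j + \binom{i_j}{2} \approx \tfrac32 n_j$, the recursion shrinks $n_j$ by about $\sqrt{n_j}$, so there are $\Theta(\sqrt n)$ levels and the total is at most $3n\sqrt n$ by an easy induction. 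The labelling is then trivial: $\pi(\vmax)=2^{n-1}$, $\pi(\vmin^j)=2^{j-1}$, and every vertex in every $\I_j$ gets label $2^{n-2}$. Because $\vmax$ is universal, the colour of each $v\in\I_j$ is $2^{n-1}-2^{j-1}$ (distinct across $j$), each $\vmin^j$ gets colour $2^{n-1}-2^{n-2}$, and properness is immediate. The idea you are missing is precisely this: keeping $\vmax$ adjacent to everything pins down the ``max'' side of every gap, so the recursion only has to manage the ``min'' side, and no block or Golomb machinery is needed.
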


\begin{proof}
    Let $K_n$ be a complete graph of order $n \geq 4$. We create a restricted decomposition $G(\X, \emptyset, \emptyset, \I)$ of $K_n$ by a recursive process; we will denote this partitioning simply by $G(\X, \I)$. Each iteration $j$ in our construction will partition the (current) vertex set of a complete graph, which we denote $V_j$, into sets~$\X_j$ and $\I_j$. 

    Let $\vmax$ be an arbitrary vertex in $K_n$, to which we will assign the largest value label --- hence its name. In the first iteration $j = 1$, we have $n_1 = n$ and $V_1 = V(K_n) - \vmax$. For the $j$-th iteration of the construction, partition $V_j$ according to the following rules:

    \begin{itemize}
        \item select an arbitrary vertex from $V_j$ and call it $\vmin^j$; 
        \item select a set $\I_j$ of cardinality $i_j = \lfloor \sqrt{n_j} \rfloor$; and a
        \item set $\X_j$ of cardinality $x_j = n_j - i_j - 2$.
    \end{itemize}

    In this last step, if $x_j \geq 3$, define $n_{j+1} = x_j + 1$, $V_{j+1} = \X_j$ and continue on iteration $j+1$. Otherwise, we end our construction. In Figure~\ref{fig:gapstr:restricted:decomp:k15} we exemplify the first, second and last iterations of our recursive process decomposing complete graph $K_{15}$.

    \begin{figure}[!htb]
        \centering
        \begin{subfigure}[b]{0.45\textwidth}
            \centering
            \begin{tikzpicture}[vertex/.style = {draw, circle, fill = black, outer sep = 1pt, inner sep = 2pt}, scale = 0.8]
                % vmax
                \node[vertex, label = left:{\scriptsize $\vmax$}] (vmax) at (0, 1.5) {};
                % vmins
                \node[vertex, label = below:{\scriptsize $\vmin^1$}] (vmin1) at (1, 0) {};

                \begin{scope}[on background layer]
                    \fill[thick, draw = black, gray!40] (vmax) to (0.3, 4.5) to (5, 2.5) to (vmax);

                    \fill[thick, draw = black, gray!40] (vmax) to (4, 1.15) to (3.5, 0.3) to (vmax);

                    \fill[thick, draw = black, gray!40] (vmin1) to (4, 1) to (3.8, 0.2) to (vmin1);

                    \fill[thick, draw = black, gray!40] (3, 2) to (5, 2) to (4.8, 0.5) to (3.2, 1);
                \end{scope}

                % X_1
                \draw[fill = white] plot [smooth cycle] coordinates { (0.5, 5) (1, 2.1) (5, 2) (5.5, 5)};
                \node[vertex] (u0) at (1.3, 2.8) {};
                \node[vertex] (u1) at (1.7, 4.3) {};
                \node[vertex] (u2) at (2.1, 3) {};
                \node[vertex] (u3) at (2.5, 2.4) {};
                \node[vertex] (u4) at (2.8, 3.9) {};
                \node[vertex] (u5) at (3.2, 4.6) {};
                \node[vertex] (u6) at (3.5, 2.5) {};
                \node[vertex] (u8) at (3.7, 3.7) {};
                \node[vertex] (u9) at (4.5, 4.2) {};
                \node[vertex] (u7) at (4.8, 3.7) {};
                \node at (0.7, 4.5) {$\X_1$};

                % I_1
                \begin{scope}[shift = {(2, 0)}]
                    \draw[decorate, decoration = {snake, amplitude = 1, segment length = 20}, fill = white] (2, 0.5) ellipse (1 and 0.6);
                    \node[vertex] (v0) at (1.5, 0.5) {};
                    \node[vertex] (v1) at (1.8, 0.9) {};
                    \node[vertex] (v2) at (2.1, 0.4) {};
                    \node at (2.5, 0.8) {$\I_1$};
                    \draw[densely dashed, red] (v0) to (v1) to (v2) to (v0);
                \end{scope}

                \draw (vmax) edge (vmin1);
            \end{tikzpicture}
            \caption{$j = 1; n_1 = 15; i_1 = 3; x_1 = 10$.}
            \label{fig:gapstr:restricted:decomp:k15:first}
        \end{subfigure}
            \hspace{0.01\textwidth}
        \begin{subfigure}[b]{0.45\textwidth}
            \centering
            \begin{tikzpicture}[vertex/.style = {draw, circle, fill = black, outer sep = 1pt, inner sep = 2pt}, scale = 0.8]
                % vmax
                \node[vertex, label = left:{\scriptsize $\vmax$}] (vmax) at (0, 1.5) {};
                % vmins
                \node[vertex, label = below:{\scriptsize $\vmin^1$}] (vmin1) at (1, 0) {};
                \node[vertex, label = below:{\scriptsize $\vmin^2$}] (vmin2) at (1.4, 2.75) {};

                \begin{scope}[on background layer]
                    \fill[thick, draw = black, gray!40] (vmax) to (0.3, 4.5) to (5, 2.5) to (vmax);
                    \fill[thick, draw = black, gray!40] (vmax) to (4, 1.15) to (3.5, 0.3) to (vmax);
                    \fill[thick, draw = black, gray!40] (vmin1) to (4, 1) to (3.8, 0.2) to (vmin1);
                    \fill[thick, draw = black, gray!40] (3, 2) to (5, 2) to (4.8, 0.5) to (3.2, 1);

                    % X_1
                    \draw[fill = white] plot [smooth cycle] coordinates { (0.5, 5) (1, 2.1) (5, 2) (5.5, 5)};
                    \node at (0.7, 4.5) {$\X_1$};

                    \fill[thick, draw = black, gray!40] (vmin2) to (2.95, 2.75) to (2.5, 2.1) to (vmin2);
                    \fill[thick, draw = black, gray!40] (2.5, 2.5) to (3.5, 2.5) to (5, 3.4) to (2, 3.5);

                    % X_2
                    \draw[fill = gray!10] plot [smooth cycle] coordinates { (1.5, 5) (2, 3.2) (5, 3.2) (5.3, 4.9)};
                    \node at (1.7, 4.6) {$\X_2$};
                    % \node[label = {[shift = {(0, -1)}]\scriptsize $\vmin^2$}] (vmin3) at (1.3, 2.8) {};
                \end{scope}

                \node[vertex] (u1) at (2.3, 4.3) {};
                \node[vertex] (u4) at (2.8, 3.5) {};
                \node[vertex] (u5) at (3.2, 4.6) {};
                \node[vertex] (u6) at (3.5, 2.5) {};
                \node[vertex] (u8) at (3.7, 3.7) {};
                \node[vertex] (u7) at (4, 4.5) {};
                \node[vertex] (u7) at (4.8, 3.7) {};

                % I_1
                \begin{scope}[shift = {(2, 0)}]
                    \draw[decorate, decoration = {snake, amplitude = 1, segment length = 20}, fill = white] (2, 0.5) ellipse (1 and 0.6);
                    \node[vertex] (v0) at (1.5, 0.5) {};
                    \node[vertex] (v1) at (1.8, 0.9) {};
                    \node[vertex] (v2) at (2.1, 0.4) {};
                    \node at (2.5, 0.8) {$\I_1$};
                    \draw[densely dashed, red] (v0) to (v1) to (v2) to (v0);
                \end{scope}

                % I_2
                \draw[decorate, decoration = {snake, amplitude = 0.8, segment length = 15}, fill = white] (3, 2.25) ellipse (0.8 and 0.5);
                \node[vertex] (w0) at (2.5, 2.3) {};
                \node[vertex] (w1) at (2.8, 2.65) {};
                \node[vertex] (w2) at (3, 2.2) {};
                \node at (3.4, 2.35) {$\I_2$};
                \draw[densely dashed, red] (w0) to (w1) to (w2) to (w0);

                \draw (vmax) edge (vmin1)
                            edge (vmin2);

            \end{tikzpicture}
            \caption{$j = 2; n_2 = 11; i_2 = 3; x_2 = 6$.}
            \label{fig:gapstr:restricted:decomp:k15:second}
        \end{subfigure}
            \par\bigskip
        \begin{subfigure}[b]{\textwidth}
            \centering
            \begin{tikzpicture}[vertex/.style = {draw, circle, fill = black, outer sep = 1pt, inner sep = 2pt}, scale = 0.8]
                % vmax
                \node[vertex, label = left:{\scriptsize $\vmax$}] (vmax) at (0, 1.5) {};
                % vmins
                \node[vertex, label = below:{\scriptsize $\vmin^1$}] (vmin1) at (1, 0) {};
                \node[vertex, label = below:{\scriptsize $\vmin^2$}] (vmin2) at (1.4, 2.75) {};
                \node[vertex, label = {[shift = {(0, -1)}]\scriptsize $\vmin^3$}] (vmin3) at (2.1, 4) {};
                \node[vertex, label = {[shift = {(0.45, -0.7)}]\scriptsize $\vmin^4$}] (vmin4) at (2.8, 4.2) {};

                \begin{scope}[on background layer]
                    \fill[thick, draw = black, gray!40] (vmax) to (0.3, 4.5) to (5, 2.5) to (vmax);
                    \fill[thick, draw = black, gray!40] (vmax) to (4, 1.15) to (3.5, 0.3) to (vmax);
                    \fill[thick, draw = black, gray!40] (vmin1) to (4, 1) to (3.8, 0.2) to (vmin1);
                    \fill[thick, draw = black, gray!40] (3, 2) to (5, 2) to (4.8, 0.5) to (3.2, 1);

                    % X_1
                    \draw[fill = white] plot [smooth cycle] coordinates { (0.5, 5) (1, 2.1) (5, 2) (5.5, 5)};
                    \node at (0.7, 4.5) {$\X_1$};

                    \fill[thick, draw = black, gray!40] (vmin2) to (2.95, 2.75) to (2.5, 2.1) to (vmin2);
                    \fill[thick, draw = black, gray!40] (2.5, 2.5) to (3.5, 2.5) to (5, 3.4) to (2, 3.5);

                    % X_2
                    \draw[fill = gray!10] plot [smooth cycle] coordinates { (1.5, 5) (2, 3.2) (5, 3.2) (5.3, 4.9)};
                    \node at (1.7, 4.6) {$\X_2$};

                    \fill[thick, draw = black, gray!40] (vmin3) to (3.3, 3.6) to (2.8, 3.3) to (vmin3);
                    \fill[thick, draw = black, gray!40] (3, 3.6) to (4, 3.4) to (5, 4) to (3, 4.5);

                    % X_3
                    \draw[fill = gray!20] plot [smooth cycle] coordinates { (2.3, 4.95) (2.7, 4) (5, 3.9) (5.2, 4.75)};
                    \node at (2.5, 4.65) {$\X_3$};

                    % X_4
                    \begin{scope}[shift = {(-0.2, 0.04)}]
                    \draw[fill = gray!30] plot [smooth cycle] coordinates { (3, 4.85) (3.1, 4.4) (3.7, 4.4) (3.7, 4.85)};
                    \node at (3.5, 4.625) {$\X_4$};
                    \node[vertex] (u7) at (3.2, 4.7) {};
                    \end{scope}
                \end{scope}

                % I_1
                \begin{scope}[shift = {(2, 0)}]
                    \draw[decorate, decoration = {snake, amplitude = 1, segment length = 20}, fill = white] (2, 0.5) ellipse (1 and 0.6);
                    \node[vertex] (v0) at (1.5, 0.5) {};
                    \node[vertex] (v1) at (1.8, 0.9) {};
                    \node[vertex] (v2) at (2.1, 0.4) {};
                    \node at (2.5, 0.8) {$\I_1$};
                    \draw[densely dashed, red] (v0) to (v1) to (v2) to (v0);
                \end{scope}

                % I_2
                \draw[decorate, decoration = {snake, amplitude = 0.8, segment length = 15}, fill = white] (3, 2.25) ellipse (0.8 and 0.5);
                \node[vertex] (w0) at (2.5, 2.3) {};
                \node[vertex] (w1) at (2.8, 2.65) {};
                \node[vertex] (w2) at (3, 2.2) {};
                \node at (3.4, 2.35) {$\I_2$};
                \draw[densely dashed, red] (w0) to (w1) to (w2) to (w0);

                % I_3
                \draw[decorate, decoration = {snake, amplitude = 0.8, segment length = 15}, fill = white] (3.3, 3.3) ellipse (0.9 and 0.3);
                \node[vertex] (x0) at (2.8, 3.4) {};
                \node[vertex] (x1) at (3.3, 3.5) {};
                \node at (3.8, 3.4) {$\I_3$};
                \draw[densely dashed, red] (x0) to (x1) to (x0);

                % I_4
                \draw[decorate, decoration = {snake, amplitude = 0.8, segment length = 15}, fill = white] (4.15, 4.1) ellipse (0.4 and 0.3);
                \node[vertex] (y0) at (4, 4.25) {};
                \node at (4.275, 4.25) {$\I_4$};

                \draw (vmax) edge (vmin1)
                            edge (vmin2)
                            edge[bend left = 5] (vmin3)
                            edge[out = 80, in = 160] (vmin4);

                    \draw (vmin4) edge[dashed, red] (u7) edge[bend left = 18] (y0);

            \end{tikzpicture}
            \caption{$j = 4; n_4 = 4; i_4 = 1; x_4 = 1$.}
            \label{fig:gapstr:restricted:decomp:k15:last}
        \end{subfigure}
        \caption{Decomposition process for $K_{15}$. Gray areas symbolize all edges connecting vertices in different sets. Observe that no $\vmin^j$ is adjacent to vertices in $V_{j+1}$.}
        \label{fig:gapstr:restricted:decomp:k15}
    \end{figure}
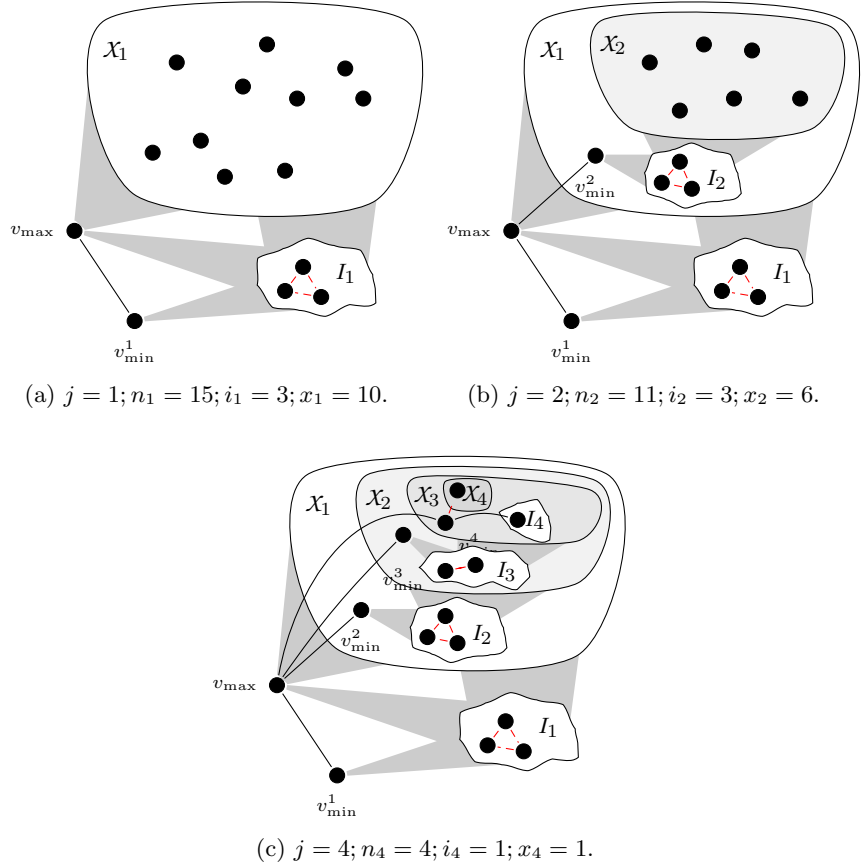

    We remark that since the recursive decomposition is done every time $x_j \ge 3$, the resulting configuration of the graph will have at most two vertices in set $\X_{j'}$ after the last iteration of the algorithm. 

    Now, we show that this resulting graph admits a gap-vertex-labelling by assigning labels to each vertex of $V(G)$ as follows:

    \begin{itemize}
        \item $\pi(\vmax) = 2^{n-1}$; 
        \item $\pi(\vmin^j) = 2^{j-1}$ for every $j \geq 1$; and
        \item $\pi(v) = 2^{n-2}$ for every $v \in \I_j$, $j \geq 1$.
    \end{itemize}

    It remains to assign labels to the vertices in $\X_{j'}$ of the last iteration $j'$. We remark that this set has either one or two vertices, by construction. Now, if $x_{j'} = 1$, assign label $2^{j'}$ to that vertex. Otherwise, there are exactly two vertices in $\X_{j'}$, and we assign labels $2^{j'}$ and $2^{j'+1}$ to these vertices, in any order. Colouring $c_\pi$ is defined as usual. In Figure~\ref{fig:restricted:gapstr:gvl:k15}, we exhibit a different representation of our restricted decomposition obtained from $K_{15}$. We also show our \gvl{} $(\pi, c_\pi)$. In the figure, vertices $v_j$ belong to set $\I_j$ and vertex $v_x$ is the singular vertex in $\X_4$. The removed edges are displayed as red, dashed lines between vertices. 

    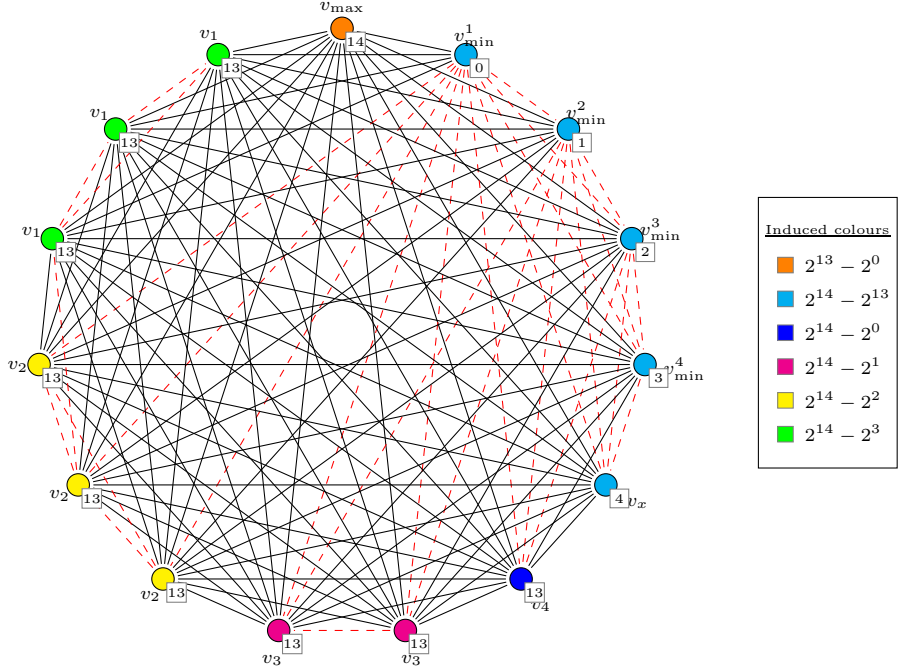
\begin{figure}[!htb]
        \centering
        \begin{adjustbox}{center}
        \begin{tikzpicture}[scale = 0.9]
                \def \radius {5};
                \begin{scope}[scale = 0.9]
                \node[vertex, fill = orange] (v0) at (90:\radius) {};
                \node[vertex, fill = green] (v14) at (114:\radius) {};
                \node[vertex, fill = green] (v13) at (138:\radius) {};
                \node[vertex, fill = green] (v12) at (162:\radius) {};
                \node[vertex, fill = yellow] (v11) at (186:\radius) {};
                \node[vertex, fill = yellow] (v10) at (210:\radius) {};
                \node[vertex, fill = yellow] (v9) at (234:\radius) {};
                \node[vertex, fill = magenta] (v8) at (258:\radius) {};
                \node[vertex, fill = magenta] (v7) at (282:\radius) {};
                \node[vertex, fill = blue] (v6) at (306:\radius) {};
                \node[vertex, fill = cyan] (v5) at (330:\radius) {};
                \node[vertex, fill = cyan] (v4) at (354:\radius) {};
                \node[vertex, fill = cyan] (v3) at (18:\radius) {};
                \node[vertex, fill = cyan] (v2) at (42:\radius) {};
                \node[vertex, fill = cyan] (v1) at (66:\radius) {};
                % \end{scope}

                \node at (90:{\radius+0.35}) {\scriptsize $\vmax$};
                \node at (114:{\radius+0.35}) {\scriptsize $v_1$};
                \node at (138:{\radius+0.35}) {\scriptsize $v_1$};
                \node at (162:{\radius+0.35}) {\scriptsize $v_1$};
                \node at (186:{\radius+0.35}) {\scriptsize $v_{2}$};
                \node at (210:{\radius+0.35}) {\scriptsize $v_{2}$};
                \node at (234:{\radius+0.35}) {\scriptsize $v_{2}$};
                \node at (258:{\radius+0.5}) {\scriptsize $v_{3}$};
                \node at (282:{\radius+0.5}) {\scriptsize $v_{3}$};
                \node at (306:{\radius+0.55}) {\scriptsize $v_{4}$};
                \node at (330:{\radius+0.6}) {\scriptsize $v_{x}$};
                \node at (354:{\radius+0.65}) {\scriptsize $\vmin^{4}$};
                \node at (18:{\radius+0.5}) {\scriptsize $\vmin^{3}$};
                \node at (42:{\radius+0.4}) {\scriptsize $\vmin^{2}$};
                \node at (66:{\radius+0.35}) {\scriptsize $\vmin^{1}$};

                \draw (v0) edge (v1) edge (v2) edge (v3) edge (v4) edge (v5) edge (v6) edge (v7) edge (v8) edge (v9) edge (v10) edge (v11) edge (v12) edge (v13) edge (v14);

                \draw (v1) edge[red, dashed] (v2) edge[red, dashed] (v3) edge[red, dashed] (v4) edge[red, dashed] (v5) edge[red, dashed] (v6) edge[red, dashed] (v7) edge[red, dashed] (v8) edge[red, dashed] (v9) edge[red, dashed] (v10) edge[red, dashed] (v11) edge (v12) edge (v13) edge (v14);

                \draw (v2) edge[red, dashed] (v3) edge[red, dashed] (v4) edge[red, dashed] (v5) edge[red, dashed] (v6) edge[red, dashed] (v7) edge[red, dashed] (v8) edge (v9) edge (v10) edge (v11) edge (v12) edge (v13) edge (v14);

                \draw (v3) edge[red, dashed] (v4) edge[red, dashed] (v5) edge[red, dashed] (v6) edge (v7) edge (v8) edge (v9) edge (v10) edge (v11) edge (v12) edge (v13) edge (v14);

                \draw (v4) edge[red, dashed] (v5) edge (v6) edge (v7) edge (v8) edge (v9) edge (v10) edge (v11) edge (v12) edge (v13) edge (v14);

                \draw (v5) edge (v6) edge (v7) edge (v8) edge (v9) edge (v10) edge (v11) edge (v12) edge (v13) edge (v14);

                \draw (v6) edge (v7) edge (v8) edge (v9) edge (v10) edge (v11) edge (v12) edge (v13) edge (v14);

                \draw (v7) edge[red, dashed] (v8) edge (v9) edge (v10) edge (v11) edge (v12) edge (v13) edge (v14);

                \draw (v8) edge (v9) edge (v10) edge (v11) edge (v12) edge (v13) edge (v14);

                \draw (v9) edge[red, dashed] (v10) edge[red, dashed] (v11) edge (v12) edge (v13) edge (v14);

                \draw (v10) edge[red, dashed] (v11) edge[red, dashed] (v12) edge (v13) edge (v14);

                \draw (v11) edge (v12) edge (v13) edge (v14);

                \draw (v12) edge[red, dashed] (v13) edge[red, dashed] (v14);

                \draw (v13) edge[red, dashed] (v14);

                \begin{scope}[pi/.append style = {minimum width = 10, inner sep = -0.8, shift = {(0.18, -0.18)}}]
                    \node[pi] at (v0) {\tiny $14$};
                    \node[pi] at (v1) {\tiny $0$};
                    \node[pi] at (v2) {\tiny $1$};
                    \node[pi] at (v3) {\tiny $2$};
                    \node[pi] at (v4) {\tiny $3$};
                    \node[pi] at (v5) {\tiny $4$};
                    \node[pi] at (v6) {\tiny $13$};
                    \node[pi] at (v7) {\tiny $13$};
                    \node[pi] at (v8) {\tiny $13$};
                    \node[pi] at (v9) {\tiny $13$};
                    \node[pi] at (v10) {\tiny $13$};
                    \node[pi] at (v11) {\tiny $13$};
                    \node[pi] at (v12) {\tiny $13$};
                    \node[pi] at (v13) {\tiny $13$};
                    \node[pi] at (v14) {\tiny $13$};
                \end{scope}
                \end{scope}

                \begin{scope}[shift = {(7.25, -2)}]
                    \draw[fill = white] (-1.1, 0) rectangle (0.95, 4);
                    \node[align = center] at (-0.075, 3.5) {\tiny \underline{Induced colours}};

                    \node[pi, fill = orange] at (-0.7, 3) {\tiny \color{orange} $x$}; \node[anchor = west] at (-0.55, 3) {\scriptsize $2^{13} - 2^0$};

                    \node[pi, fill = cyan] at (-0.7, 2.5) {\tiny \color{cyan} $x$};\node[anchor = west] at (-0.55, 2.5) {\scriptsize $2^{14} - 2^{13}$};

                    \node[pi, fill = blue] at (-0.7, 2) {\tiny \color{blue} $x$};\node[anchor = west] at (-0.55, 0.5) {\scriptsize $2^{14} - 2^{3}$};

                    \node[pi, fill = magenta] at (-0.7, 1.5) {\tiny \color{magenta} $x$};\node[anchor = west] at (-0.55, 1) {\scriptsize $2^{14} - 2^{2}$};

                    \node[pi, fill = yellow] at (-0.7, 1) {\tiny \color{yellow} $x$};\node[anchor = west] at (-0.55, 1.5) {\scriptsize $2^{14} - 2^{1}$};

                    \node[pi, fill = green] at (-0.7, 0.5) {\tiny \color{green} $x$};\node[anchor = west] at (-0.55, 2) {\scriptsize $2^{14} - 2^{0}$};
                \end{scope}
            \end{tikzpicture}
            \end{adjustbox}
        \caption{Graph $G$ obtained by our decomposition of $K_{15}$, accompanied with the \gvl{} described in the text. The value $i$ in each box next to the vertices corresponds to the label $2^i$ assigned to that vertex. The induced colours are discriminated in the table to the right of the graph.}
        \label{fig:restricted:gapstr:gvl:k15}
    \end{figure}

    Let $f'(n)$ denote the number of edges removed in our construction. In order to complete the proof, we have to show: that colouring $c_\pi$ is a proper vertex-colouring of $G$; and that we removed $f'(n) \in \mathcal{O}(n\sqrt{n})$ edges from $K_n$. We start by showing the former. First, we draw the readers attention to the labels assigned to the vertices of $G$. The label set used in $\pi$ is\footnote{We remark that label $2^{j'+1}$ only belongs in this set if $x_{j'} = 2$ in part $\X_{j'}$ of the last iteration $j'$. Otherwise, the label set is $\{2^0, 2^1, \ldots, 2^{j'}, 2^{n-2}, 2^{n-1}\}$.} $\{2^0, 2^1, \ldots, 2^{j'}, 2^{j'+1}, 2^{n-2}$, $2^{n-1}\}$, where $j'$ denotes the last iteration of the recursive construction. Moreover, with the exception of $2^{n-2}$, i.e.\@ the label assigned to vertices $v \in \I_j$, $j \geq 1$, every label in the set is assigned to exactly one vertex. Now, consider $\vmax$ and observe that, since $\vmax$ is a universal vertex, the largest and smallest labels in $N(\vmax)$ are the largest and smallest label in $V(G)-\vmax$, namely $2^{n-2}$ and~$2^{0}$. We conclude that $c_\pi(\vmax) = 2 ^ {n-2} - 1$. 

    Next, we consider the vertices in each set $\I_j$, referring to these vertices as~$v_j$. Recall that, by construction, each $\I_j$ is an independent set. Also, every $v_j$ is adjacent to $\vmax$, which received label $2^{n-1}$. When $j = 1$, we have ${\vmax, \vmin^{1} \in N(v_1)}$, which induces ${c_\pi(v_1)= 2^{n-1} - 1}$. Hence, ${c_\pi(v_1) \neq c_\pi(\vmax)}$. For every $j \geq 2$, recall that vertices in~$\I_j$ are not adjacent to any $\vmin^{l}$, $l < j$, since $\I_j \in \X_{j-1}$. Moreover, $\pi(\vmin^j) < \pi(\vmin^{j+l})$ for all $j+l \leq j'$. Therefore, the smallest label in $N(v_j)$ is the label assigned to $\vmin^j$, and we conclude that $c_\pi(v_j) = 2^{n-1} - 2^{j-1}$ for every $v_j \in \I_j$. With the exception of $j = 1$, which we mention in the beginning of the paragraph, we conclude that $c_\pi(v_j)$ is always an even number. Therefore, $c_\pi(v_j) \neq c_\pi(\vmax)$ since $c_\pi(\vmax)$ is always odd.

    Now, consider the vertices in $\X_{j'}$. As previously stated, this set has either one or two vertices. First, suppose $|\X_{j'}| = 1$, and let $v_x$ be the vertex in this set. By construction,~$v_x$ is adjacent to: $\vmax$, which received label $2^{n-1}$; to every $v_j \in \I_j$, $1 \leq j \leq j'$, all of which received label $2^{n-2}$; and no other vertex. This implies that $c_\pi(v_x) = 2^{n-1} - 2^{n-2}$. Thus, $c_\pi(v_x) \neq c_\pi(w)$ for every $w \in N(v_x)$. Conversely, suppose $|\X_{j'}| = 2$, and let $v_x$ and $v_x'$ be the two vertices in $\X_{j'}$. Also, recall that $v_x$ and $v_x'$ received labels $2^{j}$ and $2^{j'+1}$, in any order. Without loss of generality, let $\pi(v_x) = 2^{j'}$. Now, since $\pi(v_x) < \pi(v_x') < \pi(w)$ for every other $w \in N(v_x)$ and $w \in N(v_x')$, it follows that $c_\pi(v_x) = 2^{n-1} - 2^{j'+1}$ and $c_\pi(v_x') = 2^{n-1} - 2^{j'}$. This, in turn, implies that $c_\pi(v_x) \neq c_\pi(v_x')$ and, moreover, that these induced colours do not conflict with that of the vertices in their respective neighbourhoods.

    Lastly, we consider the induced colours of vertices $\vmin^j$. For every $1 \leq j < j'$, we remark that $N(\vmin^j)$ consists only of $\vmax$ and vertices $v_j \in \I_j$; these vertices received labels $2^{n-1}$ and $2^{n-2}$, respectively. Then, we conclude that every $\vmin^j$ has colour $c_\pi(\vmin^j) = 2^{n-1} - 2^{n-2} = 2^{n-2}$. It follows that $c_\pi(\vmin^j) \neq c_\pi(\vmax)$. It is important to remark that the number of iterations $j' < n-1$ and, therefore, $c_\pi(\vmin) \neq c_\pi(v_j)$ for all $v_j \in \I_j$. We conclude that there are no conflicting vertices in $G$ and, consequently, that $c_\pi$ is a proper vertex-colouring of the graph. 

    Thus, it remains to prove that our construction removes $f'(n) \in \mathcal{O}(n\sqrt{n})$ from $K_n$. Equivalently, we show that $f'(n) \leq 3n\sqrt{n}$, for $n \in \N$. We prove this result by (strong) induction on $n$. When $n \leq 3$, the inequality naturally holds since $f'(n) = 0$. Now, suppose $f'(n') \leq 3 (n') \sqrt{n'}$ for every $1\le n' < n$, and let us consider the number $f'(n)$ of edges removed from $K_n$. Recalling that by our construction $i = \lfloor \sqrt{n} \rfloor$ and $x = n - i - 2$, we have
    \begin{align*}
        f'(n) &= x + \binom{i}{2} + f'(x+1) \nonumber\\
            &= \left(n - \lfloor \sqrt{n} \rfloor - 2\right) + \frac{\lfloor \sqrt{n} \rfloor(\lfloor \sqrt{n} \rfloor-1)}{2} + f'(n-\lfloor \sqrt{n} \rfloor -1).
    \end{align*}

    Clearly $n-\lfloor \sqrt{n} \rfloor -1 < n$ and we can apply the induction hypothesis:
    \begin{equation}
        \resizebox{0.9\hsize}{!}{
            $f'(n) \leq \left(n - \lfloor \sqrt{n} \rfloor - 2\right)+ \frac{\lfloor \sqrt{n} \rfloor(\lfloor \sqrt{n} \rfloor-1)}{2} + 3(n-\lfloor \sqrt{n} \rfloor -1)\sqrt{n-\lfloor \sqrt{n} \rfloor -1}$}.\label{eq:complicated}
    \end{equation}

    \noindent We use the following inequality to simplify equation~\eqref{eq:complicated}: $\sqrt{n} -1 \leq \lfloor \sqrt{n} \rfloor \leq \sqrt{n}$, considering three ``parts'' of the left-side equation. For the ``$x$ part'', we have
    \begin{align}
        n - \lfloor \sqrt{n} \rfloor - 2 &= n - (\lfloor \sqrt{n} \rfloor + 1) - 1\nonumber\\
        &\le n -\sqrt{n} -1\nonumber\\
        &< n - \sqrt{n} = \frac{2(n - \sqrt{n})}{2}.\label{eq:simple:x}
    \end{align}
    \noindent As for the ``$i$ part'', we have
    \begin{align}
        \frac{\lfloor\sqrt{n}\rfloor (\lfloor\sqrt{n}\rfloor - 1)}{2} \le \frac{n - \sqrt{n}}{2}.\label{eq:simple:i}
    \end{align}

    \noindent Finally, for the recursive part of the decomposition, we can apply the same logic as for the $x$ part and obtain
    \begin{align}      
        3(n - \lfloor\sqrt{n}\rfloor - 1)\sqrt{n - (\lfloor\sqrt{n}\rfloor + 1)} \le 3(n - \sqrt{n})\sqrt{n - \sqrt{n}}.\label{eq:simple:rec}
    \end{align}        

    Applying the inequalities from equations~\eqref{eq:simple:x},~\eqref{eq:simple:i} and~\eqref{eq:simple:rec} in~\eqref{eq:complicated}:
    \begin{align}
        f'(n) &\leq \frac{2(n - \sqrt{n})}{2}+ \frac{n - \sqrt{n}}{2} + 3(n - \sqrt{n})\sqrt{n - \sqrt{n}}\nonumber \\
        &\leq \frac{3}{2}(n - \sqrt{n})+ 3(n-\sqrt{n})\sqrt{n-\sqrt{n}}\nonumber\\
        &\leq \frac{3}{2}(n - \sqrt{n})+ \left(- 3\sqrt{n}\sqrt{n-\sqrt{n}} + 3n\sqrt{n-\sqrt{n}}\right). \label{eq:restricted:upper:eq2}
    \end{align}
    % \leq n - (\lfloor \sqrt{n} \rfloor +1) - 1 + \frac{\lfloor \sqrt{n} \rfloor(\lfloor \sqrt{n} \rfloor-1)}{2} + 3(n-(\lfloor \sqrt{n} \rfloor +1))\sqrt{n-(\lfloor \sqrt{n} \rfloor +1)}\nonumber\\
    %     &

    We draw the readers attention to the rightmost part of equation~\eqref{eq:restricted:upper:eq2}, remarking that $3n\sqrt{n-\sqrt{n}} \leq 3n\sqrt{n}$ for $n \geq 1$. Therefore, if the rest of the equation is at most zero, i.e. $\frac{3}{2}(n - \sqrt{n}) - 3\sqrt{n}\sqrt{n-\sqrt{n}} \leq 0$, then the desired result holds. For the sake of contradiction, suppose the contrary:
    \begin{align}
        \frac{3}{2}(n - \sqrt{n}) - 3\sqrt{n}\sqrt{n-\sqrt{n}} > 0 \nonumber &\iff \frac{3}{2}(n - \sqrt{n}) > 3\sqrt{n}\sqrt{n-\sqrt{n}} \nonumber \\
        &\iff n - \sqrt{n} > 2\sqrt{n}\sqrt{n-\sqrt{n}} \nonumber \\
        &\iff n^2 - 2n\sqrt{n} + n > 4n(n-\sqrt{n}) \nonumber \\
        &\iff 3n^2 - 2n\sqrt{n} - n < 0 \label{eq:restricted:upper:eq3}.
    \end{align}

    Since $n \geq 1$, we can divide equation~\eqref{eq:restricted:upper:eq3} by $n$, obtaining ${3n - 2\sqrt{n} - 1 \leq 0}$. This inequality is only satisfied when $0\leq n < 1$. However, since we are considering only $n \geq 1$, we conclude that
    \begin{align*}
        \frac{3}{2}(n - \sqrt{n}) - 3\sqrt{n}\sqrt{n-\sqrt{n}} &\leq 0\\
        \implies\frac{3}{2}(n - \sqrt{n}) - 1 - 3\sqrt{n}\sqrt{n-\sqrt{n}} + 3n\sqrt{n-\sqrt{n}} &\leq 3n\sqrt{n} \\
        \implies f'(n) &\leq 3n\sqrt{n}. 
    \end{align*}
    This completes the proof.
\end{proof}

\section{Concluding remarks and open problems}\label{sec:conclusion}
    % !TEX root = main.tex

Graph labelling problems are remarkable. Each problem poses its own interesting questions and challenges. In particular, the gap-vertex-labelling problem is such a case: we have yet to determine the computational complexity of deciding whether a graph is gap-vertex-labelable or not. Our paper provides the first step to answering this question: we establish an upper bound of $\O(n^2)$ labels that can be used to properly label a given graph (if such a labelling exists). This result implies in an $\O(n!)$-time algorithm which decides whether a graph is gap-vertex-labelable or not.

On the other hand, we show three families of graphs which do not admit gap-vertex-labellings for any number of labels, namely the families of complete graphs and Powers of Paths and Cycles. Of course, much research can still be done in this area. We enlist a few open problems we consider interesting.

We have established that certain complete graphs $K_n$, powers of paths $P_{n}^{k}$ and powers of cycles $C_{n}^{k}$ do not admit any gap-vertex-labelling, regardless of the number of labels $k$ provided as input to the problem. However, for the ones that do admit such a labelling, we provide no bounds on the \underline{least} number of labels required, i.e., on their vertex-gap numbers.

\begin{problem*}
    Determine the vertex-gap numbers for the families of gap-vertex-labelable complete graphs $K_n$, powers of paths $P_n^k$ and powers of cycles $C_n^k$.
\end{problem*}

Since we have presented only results for classes of graphs, another open problem --- and perhaps the most interesting one of all --- would be to define what is the structural property a graph must have such that it is not gap-vertex-labelable. We believe that complete graph $K_4$ seems to be at the heart of this problem and that it may have a high correlation to the density of the graph.

\begin{problem*}
    Determine the structural property that determines whether a graph $G$ is gap-vertex-labelable.
\end{problem*}

Finally, we propose a novel parameter associated with this particular labelling problem. We investigate the gap-strength of complete graphs $K_n$ of order $n \ge 4$, establishing that one has to remove some number in between $\Omega(n^{1.2})$ and $\O(n^{1.5})$ of edges from $K_n$ in order to obtain a gap-vertex-labelable graph. The main open problem here would be to close this ``gap'' between the upper and lower bound. Furthermore, it would be interesting to analyse the gap-strength of other non-gap-vertex-labelable graphs, such as the families of powers of cycles and paths considered in this paper.

\subsection{Acknowledgements}

    We would like to thank Lucas Colucci for pointing us towards the Golomb and Sidom Rulers, without which we would not have established the upper-bound on the vertex-gap number of arbitrary graphs.

\bibliography{main-arxiv}

\newpage

\appendix

%!TEX root = main.tex

\section{Proofs and codes used in the proof of Proposition~\ref{prop:restricted}.}

    \subsection{Existence of global minimum of function $g(n, x)$}\label{app:restricted:claim1}

        We recall the definition of function $g(n, x)$ from Section~\ref{subsec:lb} below; we also show the first and second derivatives partial to $x$.
        \begin{align*}
            g(n, x) &= x + \frac{(n - x - 2)(n-x-3)}{2} + \frac{1}{10}(x+1)^{3/2}\\
            \frac{\partial g}{\partial x} &= x - n + \frac{7}{2} + \frac{3\sqrt{x+1}}{20}\\
            \frac{\partial^2 g}{\partial^2 x} &= 1 + \frac{3}{40\sqrt{x+1}}
        \end{align*}

        \begin{claim}\label{claim:lb:restricted:1}
            For all $n \ge 3$, there exists some $x^{*}$ such that for every $x$, $g(n, x) \ge g(n, x^{*})$.\hfill$\blacksquare$
        \end{claim}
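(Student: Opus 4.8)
The plan is to read off existence (and, as a bonus, uniqueness) of the minimiser from convexity together with coercivity of $g(n,\cdot)$; both ingredients are essentially contained in the derivative computations already displayed above. First I would pin down the domain: since $(x+1)^{3/2}$ has to be real, $g(n,x)$ is defined and continuous precisely for $x\ge -1$, and this is the range over which ``every $x$'' should be understood (in the application $x$ is in fact an integer in $\{0,\dots,n-2\}$, but proving the statement for all real $x\ge-1$ only makes the later bound $f(n,x)\ge g(n,x)\ge g(n,x^{*})$ cleaner). On the open half-line $(-1,\infty)$ the function $g(n,\cdot)$ is twice differentiable with
\[
\frac{\partial^2 g}{\partial x^2}=1+\frac{3}{40\sqrt{x+1}}>0,
\]
so it is strictly convex on $(-1,\infty)$, and by continuity it is strictly convex on the whole closed half-line $[-1,\infty)$.

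Next I would check that $g(n,\cdot)$ is coercive. Since $\frac{1}{10}(x+1)^{3/2}\ge 0$ for $x\ge-1$, we have $g(n,x)\ge x+\frac{(n-x-2)(n-x-3)}{2}$, and expanding shows the right-hand side equals $\frac{1}{2}x^{2}+\frac{7-2n}{2}x+\frac{n^{2}-5n+6}{2}$, a quadratic in $x$ with positive leading coefficient; hence $g(n,x)\to+\infty$ as $x\to+\infty$.

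With these two facts the claim is a routine application of the extreme value theorem: fix any point of the domain, say $x_{0}=0$, and use coercivity to pick $M>x_{0}$ with $g(n,x)>g(n,x_{0})$ for all $x>M$. The restriction of the continuous function $g(n,\cdot)$ to the compact interval $[-1,M]$ attains its minimum at some $x^{*}\in[-1,M]$, and since $g(n,x)>g(n,x_{0})\ge g(n,x^{*})$ whenever $x>M$, this $x^{*}$ is a global minimiser, i.e.\ $g(n,x)\ge g(n,x^{*})$ for every $x\ge-1$. Strict convexity moreover forces $x^{*}$ to be unique, which is exactly what justifies speaking later of ``the real value that minimises $g(n,x)$''. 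Neither convexity nor coercivity used any restriction on $n$, so the statement holds for every $n\ge 3$ (in fact for every real $n$).

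I do not expect a real obstacle here; the only point needing a moment's care is the left endpoint $x=-1$, where the second derivative blows up. But it blows up to $+\infty$ rather than changing sign, so strict convexity on the closed half-line is unaffected and the minimum is simply taken over $[-1,\infty)$ — possibly attained at $x^{*}=-1$ itself, which is harmless.
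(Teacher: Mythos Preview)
Your argument is correct and in fact cleaner than the paper's own. The paper solves $\partial g/\partial x=0$ symbolically with Mathematica, obtains two roots $x_1,x_2$, evaluates the second derivative at each (numerically, for $n=5$), and declares $x_1$ the global minimiser. Your route---strict convexity from $\partial^2 g/\partial x^2=1+\tfrac{3}{40\sqrt{x+1}}>0$ together with coercivity of the quadratic lower bound, then the extreme value theorem---avoids any computer algebra and is logically tighter: strict convexity already forces a \emph{unique} critical point, so one of the paper's two Mathematica roots is necessarily spurious (introduced when the radical equation is squared), a point the paper glosses over.

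The one thing the paper's approach buys that yours does not is an \emph{explicit closed form} for $x^{*}$, namely $x^{*}=x_1=\tfrac{1}{800}\bigl(-2791+800n-3\sqrt{-3991+1600n}\bigr)$. This is not needed for Claim~\ref{claim:lb:restricted:1} itself, but it is used immediately afterwards in Claim~\ref{claim:lb:restricted:2}, where $h(n)=g(n,x^{*})-\tfrac{1}{10}n^{3/2}$ is written out and its sign and derivative are checked in Mathematica. So if you intend to replace the paper's proof of Claim~\ref{claim:lb:restricted:1} with yours, be aware that the explicit $x^{*}$ still has to be produced somewhere before Claim~\ref{claim:lb:restricted:2}; your convexity argument tells you it is the unique solution of $x-n+\tfrac{7}{2}+\tfrac{3}{20}\sqrt{x+1}=0$ in $[-1,\infty)$, which is enough to recover the formula by a single squaring step.
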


        \begin{proof}
            By solving $\frac{\partial g}{\partial x} = 0$, we obtain min/max values of $x$ for $g(n, x)$.
            \begin{align*}
                x - n + \frac{7}{2} + \frac{3\sqrt{x+1}}{20} &= 0\\
                x &= \frac{1}{800}(-2791 + 800n \pm 3\sqrt{-3991 + 1600n})
            \end{align*}

            These values for $x$ were computed with the aid of the following Mathematica code.

            \begin{lstlisting}
                f[n_, x_] := x + (n - x - 2) (n - x - 3)/2 + (1/10)*(x + 1)^(3/2)
                g[n_, x_] := Simplify[D[f[n, x], x]]
                h[n_, x_] := Simplify[D[g[n, x], x]]
                Solve[g[n, x] == 0, x]
            \end{lstlisting}

            The output given by command line 4 above was

            \begin{lstlisting}
                {{x -> 1/800 (-2791 + 800 n - 3 Sqrt[-3991 + 1600 n])}, {x -> 
       1/800 (-2791 + 800 n + 3 Sqrt[-3991 + 1600 n])}}
            \end{lstlisting}

            Define $x_1$ and $x_2$ as follows: \[x_1 = \frac{1}{800}(-2791 + 800n - 3\sqrt{-3991 + 1600n}),\] \[x_2 = \frac{1}{800}(-2791 + 800n + 3\sqrt{-3991 + 1600n}).\] In fact, observe that both $x_1$ and $x_2$ are strictly increasing under values of $n$, which implies that the minimum values for $x_1$ and~$x_2$ occur also at the minimum value for $n$.

            In order to determine if they are local minima or maxima, we consider the second derivative $g''(n, x)$ in each of these points.
            \begin{align*}
                g''(5, x_1) &\approx 2.24345 \\
                g''(5, x_2) &\approx 2.36163
            \end{align*}

            The code used to obtain these results is presented below.

            \begin{lstlisting}
                x1 := 1/800 (-2791 + 800 n - 3 Sqrt[-3991 + 1600 n])
                x2 := 1/800 (-2791 + 800 n + 3 Sqrt[-3991 + 1600 n])
                N[f[5, x1]
                N[f[5, x2]
            \end{lstlisting}

            Note that both $g''(n, x_1), g''(n, x_2) \geq 0$ for any $n \geq \frac{3991}{1600} \approx 2.494375$. 

            \begin{lstlisting}
                Reduce[h[n, x1] >= 0, x]
                Reduce[h[n, x2] >= 0, x]
            \end{lstlisting}

            Since both values of the second derivative are positive, $x_1$ and $x_2$ are local minimum points for $g(n, x)$. Furthermore, since $g(n, x_2) > g(n, x_1)$, then $g(n, x_1)$ is a global minimum for $g(n, x)$. Thus, for any values of $n$ and ${x \ge 3}$, we have that $g(n, x) \geq g(n, x_1)$. This completes the proof of Claim~\ref{claim:lb:restricted:1}.\qedhere
        \end{proof}

    \subsection{Proving that the difference function is nonnegative and strictly increasing}\label{app:restricted:claim2}

        Let $g(n) = g(n, x^{*})$ be the function defined as follows, with $x^{*} = x_1$ from~\ref{app:restricted:claim1}.
        \begin{equation}
            \resizebox{\hsize}{!}{
            $g(n) = \frac{-2017919 + 647200n - 64000n^{3/2}-27\sqrt{-3991+1600n}+2\sqrt{2}\left(-1991+800n-3\sqrt{-3991+1600n}\right)^{3/2}}{640000}$}\nonumber
        \end{equation}

        Define $h(n)$ as the difference between $g(n)$ and $\frac{1}{10}n^{3/2}$ for increasing values of $n$, that is,
        \begin{align*}
            h(n) := g(n) - \frac{1}{10}n^{3/2}.
        \end{align*}

        \begin{claim}\label{claim:lb:restricted:2}
            For every $n \ge 4$, $h(n) \ge 0$ and $h'(n) \ge 0$.
        \end{claim}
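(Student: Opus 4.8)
After the claim statement, here is how I would attack it.

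The plan is to avoid differentiating the cumbersome closed form of $g(n) = g(n,x^{*})$ and instead exploit the fact, established in \ref{app:restricted:claim1}, that $x^{*} = x^{*}(n)$ is the critical point of the (strictly convex, since $\partial^{2}_{x} g = 1 + \tfrac{3}{40\sqrt{x+1}} > 0$) function $x \mapsto g(n,x)$, so that $\partial_{x} g(n,x^{*}) = 0$. First I would use this to reduce $g'(n)$ to a one-line expression, then derive $h'(n) \ge 0$ from an elementary estimate, and finally obtain $h(n) \ge 0$ for free by monotonicity together with the single base value $h(4)$.

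Concretely: the only $n$-dependent term of $g$ is $\tfrac12(n-x-2)(n-x-3)$, so $\partial_{n} g(n,x) = (n-x) - \tfrac52$. Since $x^{*}(n)$ is smooth for $n > \tfrac{3991}{1600}$ and $\partial_{x} g(n,x^{*}) = 0$, the chain rule gives $g'(n) = \partial_{n} g(n,x^{*}) + \partial_{x} g(n,x^{*})\tfrac{dx^{*}}{dn} = (n-x^{*}) - \tfrac52$. The first-order condition $x^{*} - n + \tfrac72 + \tfrac{3}{20}\sqrt{x^{*}+1} = 0$ rearranges to $n - x^{*} = \tfrac72 + \tfrac{3}{20}\sqrt{x^{*}+1}$, whence $g'(n) = 1 + \tfrac{3}{20}\sqrt{x^{*}+1}$ and
\begin{align*}
    h'(n) = g'(n) - \tfrac{3}{20}\sqrt{n} = 1 - \tfrac{3}{20}\bigl(\sqrt{n} - \sqrt{x^{*}+1}\bigr).
\end{align*}
The same first-order condition gives $n - (x^{*}+1) = \tfrac52 + \tfrac{3}{20}\sqrt{x^{*}+1} > 0$, so $\sqrt{n} > \sqrt{x^{*}+1}$ and, rationalising,
\begin{align*}
    0 < \sqrt{n} - \sqrt{x^{*}+1} = \frac{n - (x^{*}+1)}{\sqrt{n} + \sqrt{x^{*}+1}} \le \frac{\tfrac52 + \tfrac{3}{20}\sqrt{x^{*}+1}}{\sqrt{x^{*}+1}} = \frac{5}{2\sqrt{x^{*}+1}} + \frac{3}{20}.
\end{align*}

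It then remains only to bound $x^{*}$ from below. Writing $x^{*}(n) = n - \tfrac{2791}{800} - \tfrac{3}{800}\sqrt{1600n-3991}$, its derivative is $1 - 3/\sqrt{1600n-3991} > 0$ for $n \ge 4$ (the radicand is at least $2409 > 9$), so $x^{*}$ is increasing on $[4,\infty)$, and $x^{*}(4) = \tfrac{1}{800}\bigl(409 - 3\sqrt{2409}\bigr) > 0$ because $409^{2} = 167281 > 21681 = 9 \cdot 2409$. Hence $x^{*}(n) > 0$ for all $n \ge 4$, so $\sqrt{x^{*}+1} > 1$, and the last display yields $\sqrt{n} - \sqrt{x^{*}+1} < \tfrac52 + \tfrac{3}{20} = \tfrac{53}{20}$; plugging this into the formula for $h'$ gives $h'(n) > 1 - \tfrac{3}{20}\cdot\tfrac{53}{20} = \tfrac{241}{400} > 0$, which is the second assertion. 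For the first, since $h' \ge 0$ on $[4,\infty)$ it suffices to check $h(4) > 0$. At $n = 4$ one has $g(4,x) = 1 - \tfrac{x}{2} + \tfrac{x^{2}}{2} + \tfrac{1}{10}(x+1)^{3/2}$, and $0 < x^{*}(4) < \tfrac12$ (the upper bound because $3 < \sqrt{2409}$); on $(0,\tfrac12)$ the quadratic $1 - \tfrac x2 + \tfrac{x^{2}}2$ is decreasing and hence exceeds its value $\tfrac78$ at $x = \tfrac12$, while $\tfrac1{10}(x+1)^{3/2} > \tfrac1{10}$, so $g(4,x^{*}) > \tfrac{39}{40} > \tfrac45 = \tfrac1{10}\cdot 4^{3/2}$, i.e. $h(4) > \tfrac{7}{40} > 0$. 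Therefore $h(n) \ge h(4) > 0$ for every $n \ge 4$.

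The main obstacle is essentially bookkeeping: the key move is to recognise that $g'(n) = (n - x^{*}) - \tfrac52$ (via the chain rule plus the first-order condition) collapses $h'(n)$ to $1 - \tfrac{3}{20}(\sqrt n - \sqrt{x^{*}+1})$, rather than grinding through the derivative of the explicit formula for $g(n)$; after that one needs only mild care with the signs used to rationalise $\sqrt n - \sqrt{x^{*}+1}$ and with the exact inequalities $0 < x^{*}(4) < \tfrac12$.
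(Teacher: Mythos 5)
Your proof is correct, but it takes a genuinely different route from the paper: the paper's own ``proof'' of this claim is a computer-algebra verification (it hands the explicit, rather unwieldy closed form of $g(n)=g(n,x^{*})$ to Mathematica's \texttt{Reduce} and reports that both inequalities hold for $n\ge 2.49438$), whereas you give a self-contained analytic argument. Your key move --- invoking the envelope theorem so that $g'(n)=\partial_n g(n,x^{*})=(n-x^{*})-\tfrac52$, and then eliminating $n-x^{*}$ via the first-order condition to get $h'(n)=1-\tfrac{3}{20}\bigl(\sqrt{n}-\sqrt{x^{*}+1}\bigr)$ --- completely sidesteps differentiating the closed form, and the remaining estimates ($x^{*}$ increasing with $x^{*}(4)>0$, hence $\sqrt{n}-\sqrt{x^{*}+1}<\tfrac{53}{20}$, hence $h'>\tfrac{241}{400}$; then $h(4)>\tfrac{7}{40}$ by crude bounds on the quadratic and the $\tfrac1{10}(x+1)^{3/2}$ term) all check out, including the correct identification of $x_1$ (rather than the spurious squared-equation root $x_2$) as the genuine critical point where $\partial_x g=0$. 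What your approach buys is a human-verifiable proof with explicit constants and no reliance on symbolic software; what the paper's approach buys is brevity and the slightly wider range $n\ge 2.49438$. One point worth making explicit if this were to replace the paper's argument: you implicitly use that $x\mapsto g(n,x)$ is strictly convex so that $x_1$ is its unique interior critical point and hence the first-order condition holds there exactly --- this is justified by $\partial^2 g/\partial x^2>0$ from \ref{app:restricted:claim1}, but the paper's own Claim~\ref{claim:lb:restricted:1} muddies this by calling both $x_1$ and $x_2$ local minima, so a one-line remark that only $x_1$ satisfies the unsquared equation $x-n+\tfrac72=-\tfrac{3}{20}\sqrt{x+1}$ would make your argument airtight.
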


        \begin{proof}

            If $\partial h/\partial n \ge 0$, this implies that the difference between these two functions is always increasing. That is, to show that $g(n) \in \Omega(n^{3/2})$ is equivalent to showing that $h(n) \ge 0$ and that $\partial h/\partial n \ge 0$, for all $n \ge 4$. In fact, for any $n \ge 2.49438$ this inequality holds, as was verified with the aid of Mathematica using the following code.\qedhere

            \begin{lstlisting}
                l[n_] := Simplify[f[n, x1]]
                gap[n_] := l[n] - (1/10)*n^(3/2)
                Reduce[gap[n] >= 0, n]
                Reduce[D[gap[n], n] >= 0, n]
             \end{lstlisting}

        \end{proof}

% !TEX root = main.tex

\section{Mathematica codes used in the proof of Theorem~\ref{theo:gapstrength:lowerbound}.}\label{app:lb:general}

    The following codes were used to validate equations~\eqref{eq:gapstri} and~\eqref{eq:gapstrx}.

    \begin{lstlisting}
        ri[i_] := i (i - 1)/2
        rx[x_] := x + ((x + 1)^(3/2))/10
        lowerq := (3/10)^(2/3)*q^(12/5)
        targetlb := (3/100)*q^(18/5)
        Reduce[ri[lowerq] >= targetlb, q]
        Reduce[rx[lowerq] - lowerq >= targetlb, q]
    \end{lstlisting}

    As for function $w(q)$ defined below (and in equation~\eqref{eq:lb:w}), we compute the limit when $q \rightarrow \infty$ and the first derivative in order to show that the lower-bound holds.

    \begin{lstlisting}
        w = Simplify[(2 z + d z^(6/5))/((3 q^(18/5))/100) /.z -> q^3 - 3*(3/10)^(2/3) q^(12/5), q >= 2]
        Limit[w, q -> Infinity]
        wprime = Simplify[D[w, {q}], q > 0]
        Reduce[3000 3^(2/3) 10^(1/3) + (-5000 + 27 3^(2/3) 10^(1/3)) q^(3/5) < 0  && q >= 6]
        
    \end{lstlisting}

% !TEX root = main.tex

\section{Results of Dynamic Programming.}\label{app:dp}

In Tables~\ref{table:dp:1} and~\ref{table:dp:2}, we present the results obtained by our Dynamic Programming Algorithm for $n \in \{4, \ldots, x\}$ and $n \in \{x+1, \ldots, 218\}$, respectively. Our algorithm computes a lower-bound on the least number $l'(n)$ of edges that must be removed in order to obtain a decomposition of $K_n$.

    \begin{table}[!h]
        \scriptsize
        \begin{adjustbox}{center}
    \begin{tabular}{*{4}{|c|c|c|}}
        \multicolumn{1}{c}{$n$} & \multicolumn{1}{c}{$l'$} & $\Omega$ & \multicolumn{1}{c}{$n$} & \multicolumn{1}{c}{$l'$} & $\Omega$ & \multicolumn{1}{c}{$n$} & \multicolumn{1}{c}{$l'$} & $\Omega$ & \multicolumn{1}{c}{$n$} & \multicolumn{1}{c}{$l'$} & \multicolumn{1}{c}{$\Omega$} \\\hline
        \input{table1.dat}
        \hline
            \end{tabular}
    \end{adjustbox}
    \caption{Results from our Dynamic Programming algorithm for ${n \in \{4, \ldots, 127\}}$.}
    \label{table:dp:1}
\end{table}

\begin{table}[!h]
        \scriptsize
        \begin{adjustbox}{center}
    \begin{tabular}{*{3}{|c|c|c|}}
        \multicolumn{1}{c}{$n$} & \multicolumn{1}{c}{$l'$} & $\Omega$ & \multicolumn{1}{c}{$n$} & \multicolumn{1}{c}{$l'$} & $\Omega$ & \multicolumn{1}{c}{$n$} & \multicolumn{1}{c}{$l'$} & \multicolumn{1}{c}{$\Omega$} \\\hline
        \input{table2.dat}
        \cline{1-6}
            \end{tabular}
    \end{adjustbox}
    \caption{Results from our Dynamic Programming algorithm for ${n \in \{128, \ldots, 218\}}$.}
    \label{table:dp:2}
\end{table}

\paragraph{Explanation} The tables consist of (main) columns, separated by the doubled lines. In each of these columns, we present three values. First, the order of a complete graph $K_n$; second, the lower-bound $l'(n)$ on the least number of edges one needs to remove from $K_n$ in order to obtain a decomposition $G(\X, \Y, \Z, \I)$ of $K_n$, as computed by our program; and third, the value of $\frac{3}{100}n^{1.2}$, which corresponds to the lower-bound of $\gapstr(K_n)$ as demonstrated by Theorem~\ref{theo:gapstrength:lowerbound}. (We have shortened the second and third values in the table headers as $l'$ and $\Omega$, respectively, for simplicity.) Observe that for all $n \in \{4, \ldots, 218\}$, $l' \ge \Omega$.

\end{document}